\RequirePackage{afterpackage}
\AfterPackage{amsthm}{
  \RequirePackage{hyperref,cleveref}
}

\documentclass[a4paper,USenglish]{lipics-v2021}

\usepackage{mathtools,amsthm,amsmath}
\usepackage{pdfpages}
\usepackage{forloop}

\usepackage[square,sort,comma,numbers]{natbib}
\usepackage{thmtools}
\usepackage[utf8]{inputenc}
% If there are any other \usepackage commands, put them here
\usepackage{url}

\usepackage{multirow}
\usepackage{threeparttable}
\usepackage{siunitx}
\usepackage{booktabs}

\usepackage{tabularx}
\usepackage{cleveref}
\usepackage{mdframed}
\usepackage{bbm}
\usepackage{bm}
\usepackage{microtype}
\usepackage{xcolor}
\usepackage{makecell}
\usepackage{mathtools}
\usepackage{float}
\usepackage[lined,ruled,linesnumbered,noend]{algorithm2e}

\usepackage[inline]{enumitem}

\usepackage{subfiles}
\usepackage{standalone}
\usepackage[utf8]{inputenc}

\usepackage{url}
\usepackage{empheq}
\usepackage[noend]{algpseudocode}
\usepackage{graphicx}
\usepackage{caption}
\usepackage{subcaption}
\usepackage{todonotes}

\usepackage{url}
\usepackage[noend]{algpseudocode}
\algrenewcommand\algorithmicloop{\textbf{repeat}}

\usepackage{tcolorbox}
\tcbuselibrary{skins,breakable}
\tcbset{enhanced jigsaw}

\usepackage{subfiles}
\usepackage{subcaption}

\usepackage{comment}
\usepackage{natbib}
\usepackage{bibentry}
\nobibliography*

% \usepackage{hyperref}
% \hypersetup{
%     colorlinks,
%     citecolor=black,
%     filecolor=black,
%     linkcolor=black,
%     urlcolor=black
% }

%%-------------------------------------------------------------------------------
%%------------------------user command-------------------------------------------
%%-------------------------------------------------------------------------------

\newcounter{ct}
\newcommand{\includepdfpages}[1]{
  \forloop{ct}{1}{\value{ct} < 50}{%
    \ifodd\value{ct}%
      \includepdf[pages=\thect, offset=-55 0, pagecommand={\thispagestyle{empty}}]{#1}
    \else
      \includepdf[pages=\thect, offset=75 0, pagecommand={\thispagestyle{empty}}]{#1}
    \fi
  }
}

\algrenewcommand\algorithmicdo{\textbf{}}
\algrenewcommand\algorithmicend{\textbf{}}
\algnewcommand\algorithmicforeach{\textbf{for each}}
\algdef{S}[FOR]{ForEach}[1]{\algorithmicforeach\ #1\ \algorithmicdo}

\newcommand{\etal}{\textit{et al.}\@}

\usepackage{xspace}

\makeatletter
\renewcommand{\paragraph}{%
  \@startsection{paragraph}{4}{\z@}%
    {0ex \@plus1ex \@minus .2ex}%
    {-1em}%
    {\normalfont\normalsize\bfseries}}
\makeatother

\let\originalleft\left
\let\originalright\right
\renewcommand{\left}{\mathopen{}\mathclose\bgroup\originalleft}
\renewcommand{\right}{\aftergroup\egroup\originalright}

\def\Pr{\mathbf{Pr}}

\newcommand{\defn}[1]{\emph{\textbf{#1}}}
\newcommand{\id}[1]        {\ifmmode\mathit{#1}\else\textit{#1}\fi}

\DeclareMathOperator*{\E}{\mathbb{E}}

\newcommand{\dist}{\id{dist}}

\newcommand{\Ot}{\ensuremath{\widetilde{O}}}

\newcommand{\paren}[1]{\ensuremath{\left(#1\right)}\xspace}

\renewcommand{\hat}{\widehat}
\renewcommand{\tilde}{\widetilde}

\newcommand{\TODO}[1]{\typeout{TODO: \the\inputlineno: #1}\textbf{{\color{red}[[[ #1 ]]]}}}
\newcommand{\tOh}[1]{\tilde{O}\left(#1\right)}
\newcommand{\tOm}[1]{\tilde{\Omega}\left(#1\right)}
\newcommand{\DM}{Distributed Minor-Aggregation Model\xspace}
\newcommand{\ingball}{\operatorname{Ball_G^{in}}}
\newcommand{\outgball}{\operatorname{Ball_G^{out}}}

\newcommand{\sgball}{\operatorname{Ball_G^{*}}}

\newcommand{\Oraclem}{\Oracle_{S}}

\newenvironment{tbox}{\begin{tcolorbox}[
		enlarge top by=5pt,
		enlarge bottom by=5pt,
		 breakable,
		 boxsep=2pt,
                  left=5pt,
                  right=7pt,
                  top=10pt,
                  arc=0pt,
                  boxrule=1pt,toprule=1pt,
                  colback=white
                  ]%
	}
{\end{tcolorbox}}

\newcommand{\congest}{$\mathsf{CONGEST}$\xspace}

\newcommand{\scaledown}{\textsc{ScaleDown}\xspace}
\newcommand{\spmain}{\textsc{SPMain}\xspace}
\newcommand{\Oracle}{\mathcal{O}^{NN-SSSP}}

\newcommand{\SccTop}{\textsc{SCC+Topsort}\xspace}
\newcommand{\FixDag}{\textsc{FixDAGEdges}\xspace}
\newcommand{\EstDist}{\textsc{EstDist}\xspace}
\newcommand{\ElimNeg}{\textsc{ElimNeg}\xspace}
\newcommand{\LDD}{\textsc{LowDiameterDecomposition}\xspace}

\newcommand{\Vout}{\ensuremath{V_{out}}\xspace}
\newcommand{\Vin}{\ensuremath{V_{in}}\xspace}
\newcommand{\Vheavy}{\ensuremath{V_{heavy}}\xspace}
\newcommand{\Ain}{\ensuremath{A_{in}}\xspace}
\newcommand{\Aout}{\ensuremath{A_{out}}\xspace}
\newcommand{\outcut}{\ensuremath{\delta^{+}}\xspace}
\newcommand{\incut}{\ensuremath{\delta^{-}}\xspace}
\newcommand{\Eref}{\ensuremath{E^{rem}}\xspace}
\newcommand{\Erem}{\ensuremath{E^{rem}}\xspace}
\newcommand{\Eneg}{\ensuremath{E^{neg}}\xspace}
\newcommand{\EremI}{\ensuremath{E^{rem}_1}\xspace}
\newcommand{\EremII}{\ensuremath{E^{rem}_2}\xspace}
\newcommand{\Eremin}{\ensuremath{E^{rem}_{in}}\xspace}
\newcommand{\Eremout}{\ensuremath{E^{rem}_{out}}\xspace}
\newcommand{\Erems}{\ensuremath{E^{rem}_{*}}\xspace}

%
%
%
%

% \declaretheorem[numberwithin=section,refname={Theorem,Theorems},Refname={Theorem,Theorems}]{theorem}

% \declaretheorem[numberlike=theorem]{lemma}
% \declaretheorem[numberlike=theorem]{proposition}
% \declaretheorem[numberlike=theorem]{corollary}
% \declaretheorem[numberlike=theorem]{definition}
% \declaretheorem[numberlike=theorem]{claim}
% \declaretheorem[numberlike=theorem,style=remark]{remark}

% \declaretheorem[numberlike=theorem, refname={Question,Questions},Refname={Question,Questions},name={Question}]{question}

% \declaretheorem[numberlike=theorem, refname={Observation,Observations},Refname={Observation,Observations},name={Observation}]{observation}

% \declaretheorem[numberwithin=section]{bottleneck}
\theoremstyle{definition}
% theorem counter resets every \subsection
% Remove subsection from theorem counter representation
%% END THEOREMSTYLES

% \newcommand{\tOh}{\widetilde{O}}

\SetKwInput{KwData}{Input}
\SetKwInput{KwResult}{Output}
\SetKwInput{KwOracle}{Oracle}
%%%%%%%%%%%%%%%%%%%%%%%%%%%%%%%%%%%%%%%%%%%%%%%%%%%%%%%%%%%%%%%%%%%%

\title{Parallel, Distributed, and Quantum Exact Single-Source Shortest Paths with Negative Edge Weights}
\author{Vikrant Ashvinkumar}{Rutgers University, USA}{}{}{}
\author{Aaron Bernstein}{Rutgers University, USA}{}{}{}
\author{Nairen Cao}{Department of Computer Science, Boston College, USA}{}{}{}
\author{Christoph Grunau}{ETH Zürich, Switzerland}{}{}{}
\author{Bernhard Haeupler}{ETH Zürich, Switzerland}{}{}{}
\author{Yonggang Jiang}{Max Planck Institute for Informatics, Saarland Informatics Campus}{}{}{}
\author{Danupon Nanongkai}{Max Planck Institute for Informatics, Saarland Informatics Campus}{}{}{}
\author{Hsin-Hao Su}{Department of Computer Science, Boston College, USA}{}{}{}

\relatedversion{}\relatedversiondetails{Full Version}{https://arxiv.org/abs/2303.00811}
% \authorrunning{V. Ashvinkumar, A. Bernstein, N. Cao, C. Grunau, \\B. Haeupler, Y. Jiang, D. Nanongkai, and H. Su}
\authorrunning{V. Ashvinkumar et al.}
\Copyright{Vikrant Ashvinkumar, Aaron Bernstein, Nairen Cao, Christoph Grunau, Bernhard Haeupler, Yonggang Jiang, Danupon Nanongkai, and Hsin-Hao Su}

\keywords{Parallel algorithm; distributed algorithm; shortest paths;}

\EventEditors{Timothy Chan, Johannes Fischer, John Iacono, and Grzegorz Herman}
\EventNoEds{4}
\EventLongTitle{32nd Annual European Symposium on Algorithms (ESA 2024)}
\EventShortTitle{ESA 2024}
\EventAcronym{ESA}
\EventYear{2024}
\EventDate{September 2--4, 2024}
\EventLocation{Royal Holloway, London, United Kingdom}
\EventLogo{}
\SeriesVolume{308}
\ArticleNo{25}

 \begin{CCSXML}
<ccs2012>
<concept>
<concept_id>10003752.10003809.10003635.10010037</concept_id>
<concept_desc>Theory of computation~Shortest paths</concept_desc>
<concept_significance>500</concept_significance>
</concept>
<concept>
<concept_id>10003752.10003809.10010170</concept_id>
<concept_desc>Theory of computation~Parallel algorithms</concept_desc>
<concept_significance>500</concept_significance>
</concept>
<concept>
<concept_id>10003752.10003809.10010172</concept_id>
<concept_desc>Theory of computation~Distributed algorithms</concept_desc>
<concept_significance>500</concept_significance>
</concept>
</ccs2012>
\end{CCSXML}

\ccsdesc[500]{Theory of computation~Shortest paths}
\ccsdesc[500]{Theory of computation~Parallel algorithms}
\ccsdesc[500]{Theory of computation~Distributed algorithms}

\sloppy
\begin{document}
\pagenumbering{roman}
\nolinenumbers
\maketitle

\begin{abstract}
This paper presents parallel, distributed, and quantum algorithms for single-source shortest paths when edges can have negative integer weights (negative-weight SSSP).
We show a framework that reduces negative-weight SSSP in all these settings to $n^{o(1)}$ calls to any SSSP algorithm that works on inputs with non-negative integer edge weights (non-negative-weight SSSP) with a virtual source.
More specifically, for a directed graph with $m$ edges, $n$ vertices, undirected hop-diameter $D$, and polynomially bounded integer edge weights, we show randomized algorithms for negative-weight SSSP with
\begin{itemize}
    \item $W_{SSSP}(m,n)n^{o(1)}$ work and $S_{SSSP}(m,n)n^{o(1)}$ span, given access to a non-negative-weight SSSP algorithm with $W_{SSSP}(m,n)$ work and $S_{SSSP}(m,n)$ span in the parallel model, and
    \item $T_{SSSP}(n,D)n^{o(1)}$ rounds, given access to a non-negative-weight SSSP algorithm that takes $T_{SSSP}(n,D)$ rounds in CONGEST, and
    \item $Q_{SSSP}(m,n)n^{o(1)}$ quantum edge queries, given access to a non-negative-weight SSSP algorithm that takes $Q_{SSSP}(m,n)$ queries in the quantum edge query model.
\end{itemize}
This work builds off the recent result of Bernstein, Nanongkai, Wulff-Nilsen~\cite{bernstein2022negative}, which gives a near-linear time algorithm for negative-weight SSSP in the sequential setting.

Using current state-of-the-art non-negative-weight SSSP algorithms yields randomized algorithms for negative-weight SSSP with
\begin{itemize}
    \item $m^{1+o(1)}$ work and $n^{1/2+o(1)}$ span in the parallel model, and
    \item $(n^{2/5}D^{2/5} + \sqrt{n} + D)n^{o(1)}$ rounds in \congest, and
    \item $m^{1/2}n^{1/2+o(1)}$ quantum queries to the adjacency list or $n^{1.5+o(1)}$ quantum queries to the adjacency matrix.
\end{itemize}
Up to a $n^{o(1)}$ factor, the parallel and distributed results match the current best upper bounds for reachability~\cite{jambulapati2019parallel,distributedhopsets}.
Consequently, any improvement to negative-weight SSSP in these models beyond the $n^{o(1)}$ factor necessitates an improvement to the current best bounds for reachability.
The quantum result matches the lower bound up to an $n^{o(1)}$ factor~\cite{BerzinaDFLS04}.

Our main technical contribution is an efficient reduction from computing a low-diameter decomposition (LDD) of directed graphs to computations of non-negative-weight SSSP with a virtual source.
Efficiently computing an LDD has heretofore only been known for undirected graphs in both the parallel and distributed models, and been rather unstudied in quantum models.
The directed LDD is a crucial step of the sequential algorithm in \cite{bernstein2022negative}, and we think that its applications to other problems in parallel and distributed models are far from being exhausted.

Other ingredients of our results include altering the recursion structure of the scaling algorithm in \cite{bernstein2022negative} to surmount difficulties that arise in these models, and also an efficient reduction from computing strongly connected components to computations of SSSP with a virtual source in CONGEST. 
The latter result answers a question posed in~\cite{bernstein2019distributed} in the negative.

\end{abstract}

% \newpage
\tableofcontents

\newpage
\pagenumbering{arabic}
\setcounter{page}{1}
\section{Introduction}

Single-source shortest paths (SSSP) is one of the most fundamental problems in graph algorithms.
Given a directed graph $G=(V,E)$, an integer weight function $w: E \rightarrow \mathbb{Z}$, and a source vertex $s \in V$, we want
 to compute the distance from $s$ to $v$
 for all $v \in V$.

Efficient solutions to this problem are typically better understood in the regime where edge weights are non-negative, which we denote with non-negative-weight SSSP.
For example, Dijkstra's algorithm, from the 50s, requires this assumption and runs in near-linear time.
The algorithms for single-source shortest paths with negative integer weights (denoted negative-weight SSSP), on the other hand, have until very recently been significantly slower.
From the 50s, the classic Bellman-Ford algorithm gives an $O(mn)$ time algorithm,\footnote{Here and throughout, we use $n$ to denote the number of vertices, $m$ to denote the number of edges of $G$.} which either computes distances from $s$ to $v$ or reports a negative-weight cycle.
A series of improvements since then (\cite{Goldberg,CMSV,BLN20,AMV20}) culminated in two recent breakthroughs: the algorithm of Chen, Kyng, Liu, Peng, Probst Gutenberg, and Sachdeva~(\cite{cfl2022}) solving transshipment and min-cost flow in time $m^{1+o(1)}$, thus implying the same runtime for negative-weight SSSP, and a parallel and independent result of Bernstein, Nanongkai, Wulff-Nilsen~(\cite{bernstein2022negative}) giving a $\tOh{m}$ time\footnote{Here and throughout, we use the soft-O notation $\tilde{O}$ to suppress polylogarithmic (in $n$) factors. Throughout the paper, we assume the maximum weight edge (in absolute value) of $G$ is polynomially bounded.} algorithm for negative-weight SSSP that uses relatively simpler techniques.
Follow-up work by Bringmann, Cassis, and Fischer significantly reduces the number of log factors in the $\tilde{O}(m)$ runtime~(\cite{bringmann2023negative}).
In this paper, we take the exploration of negative-weight SSSP to parallel, distributed, and quantum models of computation.
Should there be analogous results there?

In parallel models, there has been much recent progress for the non-negative-weight SSSP problem.
Rozhoň, Haeupler, Martinsson, Grunau and Zuzic~(\cite{rozhon2022undirected}) and Cao and Fineman~(\cite{exactcf2023}) showed that SSSP with polynomially bounded non-negative integer edge weights can be solved with $\tilde{O}(m)$ work and $n^{1/2+o(1)}$ depth in the parallel model.
By contrast, the known bounds for negative-weight SSSP are significantly weaker: the classic Bellman-Ford algorithm solves negative-weight SSSP with $O(mn)$ work and $O(n)$ depth, and recently, Cao, Fineman and Russell~(\cite{cfrnagetivesssp}) improved this to $\tilde{O}(m\sqrt{n})$ work and $n^{5/4+o(1)}$ depth.

Similarly, in distributed models, Rozhoň \etal~(\cite{rozhon2022undirected}) and Cao and Fineman~(\cite{exactcf2023}) show algorithms for SSSP with non-negative integer edge weights that take $\tilde{O}((n^{2/5+o(1)}D^{2/5}+\sqrt{n} + D)$ rounds\footnote{Here and throughout, we use $D$ to denote the undirected hop-diameter of $G$.}.
On the negative-weight SSSP front, the Bellman-Ford algorithm takes $O(n)$ rounds.
The current state-of-the-art by Forster, Goranci, Liu, Peng, Sun and Ye~(\cite{distributednegativessp2021}), which uses Laplacian solvers, gives an $\tilde{O}(m^{3/7+o(1)}(n^{1/2}D^{1/4}+D))$ round algorithm for negative-weight SSSP.

In the quantum edge query model, Durr, Heiligman, Høyer, and Mhalla~\cite{DurrHHM06} show an algorithm for SSSP with non-negative edge weights in $O(n^{1.5})$ queries to the adjacency matrix or $O(m^{1/2}n^{1/2})$ queries to the adjacency list, which are both tight.
We are not aware of any quantum edge query algorithm solving negative-weight SSSP better than the trivial $O(n^2)$ or $O(m)$ algorithm.

There is a substantial gap between the best known upper bounds for non-negative-weight SSSP and negative-weight SSSP in these models and, in fact, the number of landmark algorithms for negative-weight SSSP has been comparatively few.
This begets the following question: Can we close the gap, and get parallel, distributed, and quantum algorithms for negative-weight SSSP that are nearly as efficient as the best non-negative-weight SSSP algorithms?
This paper gives an answer in the affirmative. 

\paragraph*{Main Results.}
The main results of this paper are as follows.

\begin{restatable}[Parallel SSSP reduction with negative edge-weight]{theorem}{mp}
\label{thm:mainparallelreduction}
Assuming there is a parallel algorithm answering (non-negative integer weight) SSSP on directed graphs in $W(m, n)$ work and $S(m, n)$ span, then there
 exists a randomized algorithm that solves negative-weight SSSP on directed graphs $G$ with polynomially bounded integer edge-weights with $O(W(m, n)(\log n)^{O(\sqrt{\log n})})$ work and $\tilde{O}(S(m, n)2^{\sqrt{\log n}})$ span with high probability.
\end{restatable}

Using state-of-the-art results for non-negative-weight SSSP (\cite{rozhon2022undirected} and \cite{exactcf2023}) with \Cref{thm:mainparallelreduction} immediately gives a randomized parallel algorithm that solves negative-weight SSSP on directed graphs with $m^{1+o(1)}$ work and $n^{1/2+o(1)}$ span, with high probability. 

\begin{restatable}[Distributed SSSP reduction with negative edge-weight]{theorem}{md}
\label{thm:maindistributedreduction}
In the \congest model, assuming there is an algorithm answering (non-negative integer weight) SSSP on directed graphs in $T(n, D)$ rounds, then there
 exists a randomized algorithm that solves negative-weight SSSP on directed graphs $G$ with polynomially bounded integer edge-weights and undirected hop-diameter $D$ in $O((T(n,D) + \sqrt{n} + D)(\log n)^{O(\sqrt{\log n})})$ rounds with high probability.
\end{restatable}

Using state-of-the-art results for non-negative-weight SSSP (\cite{rozhon2022undirected} and \cite{exactcf2023}) with \Cref{thm:maindistributedreduction} immediately gives a distributed randomized algorithm that solves negative-weight SSSP on directed graphs with $O((n^{2/5+o(1)}D^{2/5}+\sqrt{n} + D)n^{o(1)})$ rounds of communication in the \congest model with high probability. For general graphs there is a lower bound of $T(n,D) = \Omega(\sqrt{n} + D)$~(\cite{peleg1999near}), so the factor of $\sqrt{n} + D$ in our runtime does not impact the efficiency of our reduction.

\begin{restatable}[Quantum SSSP reduction with negative edge-weight]{theorem}{mq}
\label{thm:mainquantum}
In the quantum edge query model, assuming there is an algorithm answering (non-negative integer weight) SSSP on directed graph in $Q(m,n)$ queries, then there
 exists a randomized algorithm that solves negative-weight SSSP on directed graphs $G$ with polynomially bounded integer edge-weights in $O(Q(m,n)(\log n)^{O(\sqrt{\log n})}))$ queries.
\end{restatable}
Using the state-of-the-art results for non-negative-weight SSSP~\cite{DurrHHM06} with~\Cref{thm:mainquantum} immediately gives a quantum edge query algorithm that solves negative-weight SSSP on directed graphs with $n^{1.5+o(1)}$ queries to the adjacency matrix, or $m^{1/2}n^{1/2+o(1)}$ queries to the adjacency list. The upper bound is optimal  up to an $n^{o(1)}$ factor by the $\Omega(n^{1.5})$ and $\Omega(\sqrt{mn})$ lower bound result~\cite{DurrHHM06}.

We note that all of our results take the form of a general reduction from negative-weight SSSP to non-negative-weight SSSP, so any further advance in non-negative-weight SSSP immediately translates to improved bounds for negative-weight SSSP.
Modulo $n^{o(1)}$ factors, the complexity of parallel and distributed algorithms for non-negative-weight SSSP match that of directed reachability \cite{jambulapati2019parallel,distributedhopsets}; any improvements to negative-weight SSSP beyond the $n^{o(1)}$ factor would thus first require improvements to directed reachability in these models.

Our reductions follow the high-level framework of the recent sequential $\tilde{O}(m)$-time algorithm of Bernstein, Nanongkai, and Wulff-Nilsen \cite{bernstein2022negative}. At the heart of their framework is the use of directed low-diameter decompositions (on graphs with non-negative edge weights), and one of our key technical contributions is to give algorithms for computing such a directed decomposition in parallel, distributed, and quantum models. We next give an overview of low-diameter decompositions.

\subsection{Our Further Contributions}
On top of algorithms for negative-weight SSSP, we provide two algorithms that we believe are of independent interest. The most significant one is an efficient implementation of directed low-diameter decomposition in parallel, distributed, and quantum models. We also show an algorithm for computing strongly connected components and their topological ordering in the \congest model.
Like \Cref{thm:mainparallelreduction,thm:maindistributedreduction}, these results are presented as reductions to non-negative-weight SSSP.
An advantage of this approach is that our results scale with non-negative-weight SSSP; if there is any progress in the upper bounds to non-negative-weight SSSP, progress to the bounds here immediately follow.

\subsection*{Directed Low Diameter Decomposition}

\paragraph*{Previous Work} Low-Diameter Decomposition (LDD) has long been used to design efficient algorithms for \textit{undirected} graphs in several models of computation \cite{Awerbuch89,LinialS93,bartal1996probabilistic,ElkinNe16,GKM17,rozhon2020polylogarithmic,REGH2022,rozhon2022undirected,bernstein2022negative,Ghaffari23}. A few recent papers developed a generalization of LDD that also applies to \emph{directed} graphs\cite{chechik2016decremental,bernstein2019decremental,bernstein2020near}. Bernstein \etal \cite{bernstein2022negative} use directed LDD as one of the key subroutines in their sequential algorithm for negative-weight SSSP, and they present a sequential algorithm for computing directed LDD in near-linear time.

In undirected graphs, it is also known how to compute LDD efficiently in other models of computation, including parallel and distributed models; in fact, the well-known algorithm of Miller, Peng, and Xu (MPX) reduces this problem to a single shortest-path-tree computation from a dummy source $s$ \cite{MillerPX13}. 

\paragraph*{Our Results}
One of our main technical contributions is showing that in several computation models, computing \emph{directed} LDD can similarly be reduced to a small number of shortest-path-tree computations.
This requires new techniques for overcoming obstacles that are unique to directed graphs; see Section~\ref{subsec:LDDoverview}
for an overview of these new techniques.

The input/output guarantees of directed LDD are stated below; they are the same as those in the sequential paper of \cite{bernstein2022negative}. (Note in particular that the input to LDD is a graph with \emph{non-negative} weights.) Intuitively, for a given parameter $d$, the decomposition computes a small set of ``bad" edges $\Erem$ such that (1) Every strongly connected component in $G \setminus \Erem$ has weak diameter at most $d$ and (2) Every edge of the graph is in $\Erem$ with probability at most $\tilde{O}(w(e)/d)$.

\begin{restatable}[Low-Diameter Decomposition, Algorithm~\ref{alg:lowdiamterdecomposition}]{lemma}{lddl}\label{lem:LDDcorrectness}

Let $G=(V,E,w)$ be a directed graph with a polynomially bounded weight function $w:E\to\mathbb{N}$ and let $d$ be a positive integer. There
 exists a randomized algorithm $\LDD(G,d)$ with following guarantees:
    \begin{itemize}
        \item INPUT: An $n$-node $m$-edge, graph $G = (V, E, w)$ with non-negative integer edge weight and a positive integer $d$.
        \item OUTPUT: (proved in~\Cref{subsec:LDDalg}) a set of edges $\Eref\subseteq E$
    satisfying:
        \begin{itemize}
            \item Each SCC of the subgraph $G \setminus \Eref$ has weak diameter at most $d$ in $G$, i.e. if $u,v$ are two vertices in the same SCC, then $dist_G(u,v)\le d$ and $dist_G(v,u)\le d$.
            \item For any $e\in E$, we have $\Pr[e\in \Eref]=O\left(\frac{w(e)\log^2n}{d}+\frac{1}{n^8}\right)$
        \end{itemize}
        \item RUNNING TIME: The algorithm is randomized and takes $\tilde{O}(1)$ calls to (non-negative integer weight) SSSP. 
        More specifically:
            \begin{itemize}
                \item Assuming there is a parallel algorithm answering non-negative-weight SSSP in $W(m, n)$ work and $S(m, n)$ span, then $\LDD(G,d)$ takes $\tilde{O}(W(m, n))$ work and $\tilde{O}(S(m, n))$ span with high probability.
                \item Assuming there exists a \congest algorithm answering non-negative-weight SSSP in $T(n, D)$ rounds, then $\LDD(G,d)$ takes $\tilde{O}(T(n, D) + \sqrt{n} + D)$ rounds in the \congest model with high probability, where $D$ is the undirected hop diameter.
                \item Assuming there exists a quantum edge query algorithm answering non-negative-weight SSSP using $Q(m,n)$ queries, then $\LDD(G,d)$ takes $\tilde{O}(Q(m,n))$ queries with high probability.
            \end{itemize} 
    \end{itemize}
\end{restatable}

We observe that the complexity of quantum query algorithms is typically sublinear in \(m\), yet the output size of \(E^{rem}\) may reach up to \(m\). Consequently, rather than directly producing \(E^{rem}\) as output, it is represented in an implicit format within \(\tilde{O}(n)\) bits. For further details, refer to Section~\ref{subsubsec:lddquantum}.

The concept of undirected low-diameter decomposition was first introduced in the context of parallel and distributed algorithms, and some of the most important applications and use cases are in these areas.
We are therefore optimistic that our directed parallel and distributed low-diameter decomposition algorithm can be applied to solve various problems in the distributed and parallel setting in the future, beyond the application of computing negative weight shortest paths addressed in this paper.

\subsection*{Strongly Connected Components and Their Topological Order in \congest}
Another subroutine we need in our algorithm is finding the strongly connected components of a graph. It is known that in the parallel setting this problem reduces to single-source reachability (\cite{schudy2008finding}). In this paper we show a similar reduction for the \congest setting; 
we use the same high-level framework as the parallel reduction, but this is difficult to port directly into the \congest model; we show that by going through the recently developed \DM, we are able to overcome this difficulty. See \Cref{sec:scc-top} for details.

\begin{restatable}{lemma}{scctopcongest}
\label{lem:scc-topsort-congest}
    There is a \congest algorithm that, given a directed graph $G=(V,E)$, and assuming there is an algorithm answering non-negative-weight SSSP in $T(n, D)$ rounds, outputs strongly connected components listed in a topological order.
    More specifically, it outputs a polynomially-bounded labelling $(r_v)_{v\in V}$ such that, with high probability
    \begin{enumerate}
        \item $r_u = r_v$ if and only if $u$ and $v$ are in the same strongly connected component;
        \item when the SCC that $u$ belongs to has an edge towards the SCC that $v$ belongs to, $r_u > r_v$.\footnote{As a matter of convenience, the labels correspond to a reverse topological order (i.e. something which appears earlier in a topological order has a larger label than something which appears later).}
    \end{enumerate}
    The algorithm takes $\tOh{T(n,D) + \sqrt{n} + D}$ rounds.
\end{restatable}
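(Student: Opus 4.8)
The plan is to adapt the divide-and-conquer topological sort of \cite{coppersmith2003divide,schudy2008finding} so that it simultaneously peels off strongly connected components, and to implement each round of that recursion in \congest using $O(1)$ calls to $\Oraclem$ plus $\tOh{\sqrt{n}+D}$ extra rounds. Throughout we keep a partition of the not-yet-finalized vertices into \emph{subproblems}; each subproblem $U$ carries a contiguous label range $[\ell_U,r_U]$ with $|U|\le r_U-\ell_U+1$, and every vertex stores the ID of its subproblem (we use $\ell_U$ itself; sibling ranges are disjoint, so this is well defined). Initially there is one subproblem $U=V$ with range $[1,n]$. We maintain the invariant that each subproblem is a union of strongly connected components of $G$, so that $\mathrm{SCC}_{G[U]}(v)=\mathrm{SCC}_G(v)$ for all $v\in U$ and it suffices to reason inside the induced graphs $G[U]$.

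\emph{One round,} applied to all current subproblems in parallel: in each $U$ pick a uniformly random pivot $v_U$ (every vertex draws an independent uniform $O(\log n)$-bit value and $v_U$ is the minimizer inside $U$); compute inside $G[U]$ the descendant set $\Des_U(v_U)$ and the ancestor set $\Anc_U(v_U)$; set $S:=\Anc_U(v_U)\cap\Des_U(v_U)$, which by the invariant equals $\mathrm{SCC}_G(v_U)$, and \emph{finalize} it; and split the remainder into $A:=\Anc_U(v_U)\setminus S$, $D:=\Des_U(v_U)\setminus S$, and $R:=U\setminus(\Anc_U(v_U)\cup\Des_U(v_U))$. A short case check shows that in $G$ every out-neighbour of a $D$-vertex lies in $D$, every in-neighbour of an $A$-vertex lies in $A$, $R$ is non-adjacent to $S$, and there is no edge from $R$ to $A$; hence all edges between parts respect the order $A,R,S,D$, and $A,R,D$ are again unions of $G$-SCCs so the invariant persists. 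We split $[\ell_U,r_U]$ into consecutive blocks of sizes $|A|,|R|,1,|D|$ from high labels to low, give $S$ the single label in the third block, and recurse on $A,R,D$ with their blocks; computing the block boundaries needs only the per-subproblem sums $|A|,|R|,|D|$ plus one prefix sum telling each child how many labels lie above it.

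\emph{Bounding the number of rounds.} Two distinct SCCs $C,C'$ with $C\to^{*}C'$ in $G$ become \emph{resolved} — placed in different subproblems, with $C$ in an $A$- or $S$-block strictly above the eventual block of $C'$ — as soon as some pivot is drawn from $C$, from $C'$, or from an SCC on a $C$-to-$C'$ path. Applying the quicksort-style recursion-depth analysis of \cite{coppersmith2003divide,schudy2008finding} to the condensation of $G$ (each SCC treated as one element) shows that with high probability every comparable pair is resolved within $O(\log^2 n)$ rounds. After that many rounds every surviving subproblem $U$ is an antichain of SCCs, i.e.\ $G[U]$ has no edge between two distinct SCCs, so each weakly connected component of $G[U]$ is exactly one $G$-SCC; we finish with one ``flush'' step that computes these components and gives them distinct labels inside each surviving range (e.g.\ by the rank of the component's minimum-ID vertex among the leaders of $U$). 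Both required properties follow: two vertices get equal labels iff they are finalized together iff they share a $G$-SCC, and each comparable pair of SCCs is ordered correctly at its resolving round, which exists w.h.p.\ by the above.

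\emph{Implementing a round in \congest.} The two reachability computations each use one oracle call: attach a virtual super source $s^{\star}$ with unit-weight edges to all pivots $v_U$ and query $\Oraclem$ on the graph $H$ obtained from $G$ by deleting every edge joining two different subproblems; since $H$ has no inter-subproblem edges, the set of vertices at finite distance from $s^{\star}$ is exactly $\bigcup_U\Des_U(v_U)$, and the same query on $H$ with all edges reversed returns the ancestor sets. This is exactly why $\Oraclem$ — which answers super-source queries on a subgraph of the communication network in $T(n,D)$ rounds, with $D$ the hop-diameter of the \emph{original} network (the subgraph $H$ may itself be disconnected) — is the primitive we need; see~\ref{def:oraclem}. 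Every other step (drawing pivots and taking per-subproblem minima, summing part sizes, the prefix sum over sibling blocks, and the weakly-connected-components computation in the flush) is a part-wise aggregation over an arbitrary, not necessarily connected, vertex partition, each costing $\tOh{\sqrt{n}+D}$ rounds by standard \congest techniques. Thus each of the $O(\log^2 n)$ rounds plus the flush costs $\tOh{T(n,D)+\sqrt{n}+D}$, giving the claim. The main obstacle is precisely this last point: a subproblem is a genuinely non-contiguous piece of the network — two ancestors of $v_U$ may be joined only through the deleted set $S$ — so neither the pivot choice nor the reachability queries can exploit locality, and everything must be funneled through the global super-source query and part-wise-aggregation machinery, which is why the $\sqrt{n}+D$ term appears and why $\Oraclem$ rather than $\Oracle$ is the right primitive.
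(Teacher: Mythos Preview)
The proposal has a genuine gap in the recursion-depth bound. You use the single-pivot decomposition of Coppersmith \etal\ --- split each subproblem $U$ around one random pivot into $A,R,S,D$ --- and claim that a ``quicksort-style'' argument resolves every comparable pair of SCCs within $O(\log^2 n)$ rounds with high probability. This fails precisely on the remainder set $R$. Consider a graph whose condensation is a single edge $C_1\to C_2$ together with $n-2$ isolated vertices. With probability $1-2/|U|$ the pivot is an isolated vertex, in which case $C_1$, $C_2$, and all remaining isolated vertices land in $R$, and $|U|$ drops by only one. A direct calculation shows that after $k=O(\log^2 n)$ rounds the pair $(C_1,C_2)$ is still unresolved with probability $\frac{(n-k)(n-k-1)}{n(n-1)}=1-O(\log^2 n/n)$. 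Your flush step then puts $C_1$ and $C_2$ in the same weakly connected component and assigns them the same label, violating property~1. The quicksort argument works for total orders because the ``between'' set of any unresolved pair is always the entire current subproblem; in a DAG it can be a vanishing fraction of it, so the per-round resolving probability is not bounded away from zero.

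This is exactly the obstacle Schudy's framework (which the paper follows) was designed to overcome: instead of a single pivot, one takes a random permutation $v_1,\ldots,v_n$, binary-searches for the smallest prefix whose reachable set covers half of the vertices plus edges, and splits into five pieces $R_1,\ldots,R_5$; this guarantees each recursive piece has at most a constant fraction of the mass, hence recursion depth $O(\log n)$ with high probability (\Cref{prop:schudy-rec}). The paper implements one level of this recursion in the \DM with $\tOh{1}$ rounds and $\tOh{1}$ calls to $\Oraclem$ (the binary search contributes $O(\log n)$ oracle calls, coordinated across all subproblems in the layer), and then compiles to \congest via \Cref{thm:minoraggregation}. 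Your per-round implementation --- super-source reachability via $\Oraclem$ on the inter-subproblem-edge-deleted graph, part-wise aggregation for pivots and sizes --- is fine and is essentially what the paper does; what is missing is a decomposition that actually achieves polylogarithmic depth.
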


It is worth noting that a more careful examination gives a round complexity in terms of calls to a reachability oracle, rather than a non-negative-weight SSSP oracle (see \Cref{rem:scc-using-reach}).
Plugging in the current state-of-the-art \congest algorithm for non-negative-weight SSSP (\cite{rozhon2022undirected} and \cite{exactcf2023}) leads to a $\tOh{n^{1/2}+D+n^{2/5+o(1)}D^{2/5}}$ round algorithm (\Cref{cor:scc-topsort-congest}), answering a question posed in \cite{bernstein2019distributed} which asked if a lower bound of $\tOm{n}$ rounds applies to the problem of finding SCCs.

\subsection{Organization}
In Section~\ref{sec:prelim}, we provide the necessary terminology, notation, and basic results that will be used throughout the paper.
This section can be skipped and referred back to as needed. 
In Section~\ref{sec:overview}, we present a high-level overview of \cite{bernstein2022negative} and discuss the key challenges involved in adapting the results to other models. Section~\ref{sec:lowediameterdecomposition} presents our algorithm for low-diameter decomposition, Section~\ref{sec:scc-top} our \congest algorithm for computing SCCs and topological sort. 
Section~\ref{sec:framework} introduces the input/output guarantees of all our key subroutines, and provides a description and pseudocode of the overall algorithm that combines these subroutines.
Section~\ref{sec:fix-dag} and Section~\ref{section:estdist} discuss two other subroutines needed by our algorithm ($\FixDag$ and $\EstDist$).
Lastly, Section~\ref{sec:scaledown_analysis} and Section~\ref{sec:SPMain} give a formal analysis of the algorithm described in Section~\ref{sec:framework}.

Sections~\ref{sec:lowediameterdecomposition} and ~\ref{sec:scc-top} are described as self-contained problems. Sections~\ref{sec:framework}-\ref{sec:scaledown_analysis} are only necessary if the reader wishes to understand how we combine those subroutines to parallelize the sequential negative-weight SSSP~\cite{bernstein2022negative} algorithm.

\section{Definitions and Preliminaries}
\label{sec:prelim}

A weighted directed graph $G$ is a triple $(V, E, w)$ where $w: E \rightarrow \mathbb{Z}$ is a weight function. For a weighted directed graph $G$, the number of vertices and edges are $|V(G)| = n $ and $|E(G)| = m$, respectively. We denote the set of negative edges by $\Eneg(G) = \{e \in E \mid w(e) < 0 \}$.
For a subset $V' \subset V$, we denote the induced graph on $V'$ by $G[V']$ and the induced edges on $V'$ by $E(V')$. For an edge set $E'\subseteq E$, when we treat $E'$ as a \emph{subgraph} of $G$, we mean the graph $(\{u,v\mid (u,v)\in E'\},E',w)$. 
A \defn{path} is a sequence of vertices joined by edges; sometimes we refer to the path by the sequence of vertices and sometimes by the edges.
A \defn{strongly connected component} (SCC) is a set of vertices $S$ such that for any pair of vertices $u,v \in S$, there is a path from $u$ to $v$ contained entirely in $S$.
For a path $\Gamma=\langle v_0,v_1,\ldots,v_k\rangle$, the \defn{weight} of $\Gamma$ is given by $w(\Gamma) = \sum_{i=1}^k w(v_{i-1},v_i)$, that is, the sum of the weights of the edges on the path.   
For a pair of nodes $u,v \in V$, the \defn{shortest path distance} from $u$ to $v$ is the minimum length over all paths that start at $u$ and end at~$v$.  We use $\dist_G(u,v)$ to denote this shortest path distance with respect to the graph~$G$.  When the graph $G$ is clear in the context, we simply write $\dist(u,v)$.  If there is no $u$-to-$v$ path, then we define $\dist(u,v)=+\infty$. 
Given a directed graph $G=(V,E,w)$,
a vertex $s\in V$ and $d\in\mathbb{N}$, we define $\ingball(s,d)=\{v\mid dist_G(v,s)\le d\}$ and $\outgball(s,d)=\{v\mid dist_G(s,v)\le d\}$, the in or out balls centered at $s$ with weighted radius $d$.
For a given graph $G = (V, E, w)$ and a subset of vertices $S\subseteq V$, we define $\incut(S)=\{(u,v) \in E \mid u\not\in S,v\in S\}$ and $\outcut(S)=\{(u,v) \in E\mid u\in S,v\not\in S\}$, the in or out edge sets crossing $S$ .

 When we say that an algorithm achieves performance $O(f(n))$ with high probability, we mean the following: for a particular choice of constant $c>0$, with probability at least $1-1/n^c$ the algorithm achieves performance $O(f(n))$.

Throughout the paper, we will assume the maximum weight edge (in absolute value), $W_{in}$, is polynomially bounded in $n$ and ignore $\log W_{in}$ terms. Based on the following theorem, we only incur one additional $\log (nW_{in})$ factor. 
\begin{theorem}
\label{thm:dependecyonW}
In the parallel or distributed model, if there is an algorithm solving exact SSSP with edge weights from $\{-1, 0, 1, ..., 2n - 1, 2n \}$ with runtime $T(m, n)$ (work, span, or rounds), then there exists an algorithm solving exact SSSP with edge weight from $\{-W_{in}, -(W_{in} - 1), ..., 0, 1, ..., W_{in} - 1, W_{in} \}$ with runtime $O(T(m, n)\log \big(nW_{in}\big))$.
\end{theorem}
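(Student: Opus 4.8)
The plan is to reduce the large-weight case to the polynomially-bounded case via a standard scaling/bit-by-bit argument on the absolute values of the weights. Write $W_{in} = 2^L$ with $L = O(\log W_{in})$ (rounding up if necessary), and think of each edge weight $w(e) \in \{-W_{in}, \dots, W_{in}\}$ in terms of its binary representation. Define a sequence of weight functions $w_0, w_1, \dots, w_L$ where $w_i(e)$ is, morally, ``the top $i$ bits of $w(e)$'' — concretely $w_i(e) = \lceil w(e) / 2^{L-i} \rceil$ (keeping signs, so that negative weights stay negative). Then $w_0 \equiv 0$ is trivially non-negative-cost-free, $w_L = w$ is the original instance, and crucially each $w_i$ is polynomially bounded: $|w_i(e)| \le 2^i \le 2^L$, but more importantly we will only ever call the assumed algorithm on instances where the weights lie in a polynomial range, because of a rescaling step described next.

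The key inductive step is: given a correct potential/price function $\phi_{i-1}$ that makes $w_{i-1}$ have no negative-weight cycle and lets us read off distances for $w_{i-1}$, we produce $\phi_i$ for $w_i$ using one call to the assumed $T(m,n)$ algorithm. The standard trick (as in Goldberg's scaling algorithm) is to observe that $w_i(e) \approx 2 w_{i-1}(e) + \{0 \text{ or } 1\}$, so the reweighted function $w_i^{\phi}(e) := w_i(e) + \phi(u) - \phi(v)$, where $\phi := 2\phi_{i-1}$, satisfies $w_i^{\phi}(e) \ge -O(1)$ for every edge (each doubling step at most doubles any negativity, which the doubled potential cancels up to the one carried bit, plus rounding from the ceiling). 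Hence $w_i$ under the potential $2\phi_{i-1}$ has all edge weights bounded below by a constant, so adding that constant and running the assumed negative-weight SSSP algorithm on this $O(1)$-bounded (in particular polynomially bounded) instance computes exact distances, from which we extract $\phi_i$. We repeat for $L$ levels; if at any level a negative-weight cycle is detected, report it. Total cost is $L = O(\log W_{in})$ invocations, giving $\tilde O(T(m,n) \log W_{in})$ work/span/rounds. One has to check that each of the $L$ calls is itself only a $\tilde O(1)$-round (resp. work/span) overhead on top of $T(m,n)$: the potential transformation $w \mapsto w^\phi$ is purely local (each edge looks at its two endpoints' potentials), so it costs $O(1)$ rounds in \congest and $O(m)$ work, $O(1)$ span in parallel — negligible.

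The main obstacle I expect is bookkeeping around the ceiling/sign conventions so that the inequality $w_i^{2\phi_{i-1}}(e) \ge -O(1)$ actually holds edge-by-edge, and making sure that ``distances under $w_{i-1}$ are recoverable from $\phi_{i-1}$'' is the right induction hypothesis to carry (one may instead want to carry ``$\phi_{i-1}$ is a feasible potential,'' i.e. $w_{i-1} + \phi_{i-1}(u) - \phi_{i-1}(v) \ge 0$ for all $e$, which is cleaner and still suffices, since a feasible potential plus one non-negative-weight SSSP call — not even the assumed negative-weight algorithm — recovers true distances at the end; but we are told to only assume the negative-weight algorithm, so using it as the inner solver is fine). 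A secondary point to handle: the problem statement allows the algorithm to \emph{report} a negative-weight cycle rather than output distances, so the reduction must thread that option through — if the inner call on some scaled instance reports a negative cycle, that cycle (being negative under a potential-shifted version of a sub-weighting of $w$) corresponds to a negative-weight cycle under $w$ as well, because potential shifts and uniform additive shifts along a cycle of fixed length only change its weight in a controlled way; one checks the bottom-level cycle is still negative under $w$. None of these steps is deep, but the constants and the exact form of $w_i$ need to be pinned down carefully, which is where the real work of the proof lies.
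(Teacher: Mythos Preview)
Your overall approach—Goldberg-style bit scaling, maintaining a feasible potential $\phi_{i-1}$ so that $(w_i)_{2\phi_{i-1}} \ge -O(1)$ at each stage—is the standard reduction and is essentially the paper's route as well. There is, however, one genuine gap that you will hit when you try to turn the plan into a proof.

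You assert that after reweighting by $2\phi_{i-1}$ the inner instance is ``$O(1)$-bounded (in particular polynomially bounded)'' and can therefore be fed to the assumed black box. Only the \emph{lower} bound is $-O(1)$; the upper bound is not under control. Indeed $(w_i)_{2\phi_{i-1}}(e) = w_i(e) + 2\phi_{i-1}(u) - 2\phi_{i-1}(v)$, and whether $\phi_{i-1}$ is the true $w_{i-1}$-distance from the dummy source or the accumulated potential $2\phi_{i-2}+\psi_{i-1}$, one has $|\phi_{i-1}(v)|$ as large as $\Theta(n\cdot 2^{i-1})$ in the worst case. Hence the reweighted edge weights at stage $i$ can be $\Theta(n\cdot 2^i)$, and at the final stage $i=L$ this is $\Theta(nW_{in})$—not polynomial in $n$. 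The assumed algorithm requires \emph{all} weights (not just the negative ones) to be polynomially bounded, so you cannot invoke it directly. The phrase ``adding that constant'' does not save you either: uniformly shifting every edge by $+O(1)$ changes shortest paths.

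The fix is easy and local, so your plan survives. In the inner call you are computing $\psi_i = \mathrm{dist}(s,\cdot)$ from the dummy source $s$ (which has $0$-weight edges to every vertex) on a graph whose weights are $\ge -c$ for a constant $c$. Then $\psi_i(v)\in[-c(n-1),0]$ for every $v$, so any edge lying on a shortest $s$-path has reweighted weight at most $c(n-1)$. You may therefore \emph{cap} every reweighted weight at, say, $cn$ without altering $\psi_i$; the capped instance has weights in $[-c,cn]$ and is now polynomially bounded, so the assumed algorithm applies. With this one extra line the argument goes through; the remaining bookkeeping concerns you flag (ceilings, signs, threading the negative-cycle case) are routine.
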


\begin{proof}

Let $\mathcal{A}$ be the algorithm solving exact SSSP with edge weights from $\{-1, 0, 1, ..., 2n - 1, 2n \}$. We will construct algorithm $\mathcal{B}$ solving SSSP with arbitrary integer edge weights. The algorithm $\mathcal{B}$ first uses the scaling framework of Goldberg~\cite{Goldberg} to eliminate all edges with the negative edge weights at the expense of an extra $\log W_{in}$ factor by calling $\mathcal{A}$. We divide the algorithms into $\log W_{in}$ rounds, where in each round, we only need to produce a feasible price function for the special case that all edge weights are integers with value at least $-1$. 

Notably, while we ensure that negative edges have weights no less than $-1$, positive edges may be larger than $2n$. A crucial insight is that for positive edges with weights exceeding $n$, we adjust their weights to $n$ during each round. This adjustment is sufficient for generating a feasible price function, considering that any simple path will have at most $n - 1$ edges and any negative edge is no less than $-1$. Consequently, $\mathcal{A}$ is invoked to compute the price function, given that edge weights are now constrained between $-1$ and $2n$. Since the edge weight per round does not exceed $n$, the maximum SSSP distance is capped at $n^2$, potentially increasing the edge weight by up to $O(n^2 W_{in})$ in the final price function.

Next, for graph with non-negative integer edge weight, Klein and Subramanian~\cite{ks97} already show that we can use $O(\log D)$ call to reduce the maximum edge weight to $2n$, where $D$ is the maximum edge weight. The same computation also has been shown by Forster and Nanongkai~\cite{fn18} in the \congest model. Given the maximum edge weight reaches $O(n^2 W_{in})$, an extra computational expense of $O(\log(nW_{in}))$ is incurred.
\end{proof}

\subsection{Definitions from \cite{bernstein2022negative}}

The following two definitions are taken from \cite{bernstein2022negative}.
The first definition is used for describing graphs augmented with a dummy source, and for defining a new graph where negative edge weights are raised by an additive constant.

\begin{definition}[$G_s,w_s,G^B,w^B,G^B_s,w^B_s$][Definition 2.3 of \cite{bernstein2022negative}]
\label{def:GBs}

Given any graph $G = (V,E,w)$, we let $G_s = (V \cup \{s\},E \cup \{(s,v)\colon v \in V\}, w_s)$ refer to the graph $G$ with a dummy source $s$ added, where there is an edge of weight $0$ from $s$ to $v$ for every $v \in V$ and no edges into $s$. Note that $G_s$ has a negative-weight cycle if and only if $G$ does and that $dist_{G_s}(s,v) = \min_{u \in V} dist_G(u,v)$. \\
For any integer $B$, let $G^B = (V,E,w^B)$ denote the graph obtained by adding $B$ to all negative edge weights in $G$, i.e., $w^B(e) = w(e) + B$ for all $e \in \Eneg(G)$ and $w^B(e) = w(e)$ for $e \in E \setminus \Eneg(G)$. Note that $(G^B)_s = (G_s)^B$ so we can simply write $G^B_s = (V \cup \{s\},E \cup \{(s,v) \colon v \in V\}, w^B_s)$.  
\end{definition}

The next definition introduces $P_G(v)$, which is the shortest path to $v$ (from a dummy source) with the least number of negative-weight edges and $\eta_G(v)$, which counts the number of aforementioned negative edges.

\begin{definition}[$\eta(G),\eta_G(v),P_G(v)$][Definition 2.4 of \cite{bernstein2022negative}]
\label{def:eta}

For any graph $G = (V,E,w)$, let $G_s$ and $s$ be as in Definition \ref{def:GBs}.
Define 

$\eta_G(v) := \left\{
\begin{array}{ll}
\infty & \text{if $dist_{G_s}(s,v) = -\infty$} \\
\min\{|\Eneg(G) \cap P| \colon \text{ $P$ is a shortest $sv$-path in $G_s$}\}; & \, \textrm{otherwise}. \\
\end{array}
\right. $

Let $\eta(G) = \max_{v \in V} \eta_G(v)$. When $dist_G(s,v) \neq - \infty$, let $P_G(v)$ be a shortest $sv$-path on $G_s$ such that  $|\Eneg(G) \cap P_G(v)| = \eta_G(v).$
When the context is clear, we drop the subscripts.

\end{definition}

The following definitions and lemmas about price functions are standard in the literature, and can all be found in \cite{bernstein2022negative}.  Price functions were first introduced by Johnson \cite{Johnson77} and heavily used since then; they are used to rescale the weights of edges (in our case, make them non-negative) without changing the structure of shortest paths.

\begin{definition}[Definition 2.5 of \cite{bernstein2022negative}]
\label{def:pricefunction}
Consider a graph $G = (V,E,w)$ and let $\phi$ be any function: $V \mapsto \mathbb{Z}$. Then, we define $w_\phi$ to be the weight function $w_\phi(u,v) = w(u,v) + \phi(u) - \phi(v)$ and we define $G_\phi = (V,E,w_\phi)$. We will refer to $\phi$ as a price function on $V$. Note that $(G_\phi)_\psi = G_{\phi + \psi}$.
\end{definition}
\begin{definition}[Definition 2.6 of \cite{bernstein2022negative}]
\label{def:graphequivalence}
We say that two graphs $G = (V,E,w)$ and $G' = (V,E,w')$ are \emph{equivalent} if (1) any shortest path in $G$ is also a shortest path in $G'$ and vice-versa and (2) $G$ contains a negative-weight cycle if and only if $G'$ does.
\end{definition}
\begin{lemma}[Lemma 2.7 of \cite{bernstein2022negative}]
\label{lem:prelim_eqivalence_price_function}
Consider any graph $G = (V,E,w)$ and price function $\phi$. For any pair $u,v \in V$ we have $dist_{G_\phi}(u,v) = dist_G(u,v) + \phi(u) - \phi(v)$, and for any cycle $C$ we have $w(C) = w_\phi(C)$. As a result, $G$ and $G_\phi$ are equivalent. Finally, if $G = (V,E,w)$ and $G' = (V,E,w')$ and $w' = c w$ for some positive $c$, then $G$ and $G'$ are equivalent. 

\end{lemma}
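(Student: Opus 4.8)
The plan is to reduce everything to a single telescoping identity for the weight of a walk under a price function. First I would show that for any walk $\Gamma = \langle v_0, v_1, \ldots, v_k \rangle$ in $G$ we have $w_\phi(\Gamma) = w(\Gamma) + \phi(v_0) - \phi(v_k)$: indeed, $w_\phi(\Gamma) = \sum_{i=1}^k \big( w(v_{i-1},v_i) + \phi(v_{i-1}) - \phi(v_i) \big)$, and the price terms telescope, leaving $w(\Gamma) + \phi(v_0) - \phi(v_k)$. This is essentially the only computation in the proof; everything else follows formally.

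From this identity the cycle claim is immediate: for a cycle $C = \langle v_0, \ldots, v_k \rangle$ with $v_0 = v_k$ we get $w_\phi(C) = w(C) + \phi(v_0) - \phi(v_0) = w(C)$. For the distance claim, I would fix $u,v \in V$ and note that the identity map is a bijection between $uv$-walks in $G$ and $uv$-walks in $G_\phi$ under which the weight changes by the fixed additive constant $\phi(u) - \phi(v)$, independent of the walk. Hence the infimum of weights over all $uv$-walks --- which is exactly $dist(u,v)$, including the $-\infty$ and $+\infty$ cases --- shifts by exactly $\phi(u) - \phi(v)$, giving $dist_{G_\phi}(u,v) = dist_G(u,v) + \phi(u) - \phi(v)$.

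Equivalence of $G$ and $G_\phi$ then follows by the same template: since for each ordered pair $(u,v)$ every $uv$-path has its weight shifted by the same constant $\phi(u)-\phi(v)$, the set of minimum-weight $uv$-paths is identical in the two graphs, so shortest paths coincide; and since $w_\phi(C) = w(C)$ for every cycle $C$, $G$ has a negative-weight cycle iff $G_\phi$ does. For the final scaling statement I would reuse the template with the identity $w'(\Gamma) = c\cdot w(\Gamma)$, valid for every walk since $w' = cw$ edge by edge: because $c > 0$, scaling by $c$ preserves the $\arg\min$ over $uv$-paths and preserves the sign of $w(C)$ for every cycle, so $G$ and $G'$ are equivalent.

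I do not expect a genuine obstacle here; the only points needing a word of care are (i) phrasing the telescoping and the bijection for \emph{walks} rather than only simple paths, so the argument stays valid when negative-weight cycles are present and some distances are $-\infty$, and (ii) observing that the shift $\phi(u)-\phi(v)$ is a fixed finite quantity, so it commutes with taking infima even in the $\pm\infty$ cases.
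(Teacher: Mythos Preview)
Your proof is correct and is exactly the standard telescoping argument one uses to establish these facts about price functions. The paper does not actually supply its own proof of this lemma: it is stated there as a known result imported from \cite{bernstein2022negative} (and ultimately from Johnson's original work), so there is nothing to compare against beyond noting that your argument is the canonical one.
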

\begin{lemma}[Lemma 2.8 of \cite{bernstein2022negative}]
\label{lem:prelim_nonnegative}
Let $G = (V,E)$ be a directed graph with no negative-weight cycle and let $S$ be the dummy source in $G_s$. Let $\phi(v) = dist_{G_s}(s,v)$ for all $v \in V$. Then, all edge weights in $G_\phi$ are non-negative. 
\end{lemma}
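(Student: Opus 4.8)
The plan is to derive the non-negativity of $w_\phi$ directly from the triangle inequality for shortest-path distances, after first checking that the price function $\phi$ is well-defined (i.e., integer-valued).

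First I would argue that $\phi(v) = dist_{G_s}(s,v)$ is a finite integer for every $v \in V$. On the one hand, since $s$ has an edge of weight $0$ to every vertex, every $v$ is reachable from $s$ in $G_s$, so $dist_{G_s}(s,v) \le 0 < +\infty$. On the other hand, since $G$ has no negative-weight cycle, neither does $G_s$: by \cref{def:GBs} the vertex $s$ has no incoming edges, so any cycle of $G_s$ lies entirely in $G$ and hence has non-negative weight. This rules out $dist_{G_s}(s,v) = -\infty$. Therefore $\phi : V \to \mathbb{Z}$ is a legitimate price function in the sense of \cref{def:pricefunction}, and $G_\phi$ is defined.

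Next, fix an arbitrary edge $(u,v) \in E$. Concatenating a shortest $s$-to-$u$ path in $G_s$ with the edge $(u,v)$ produces an $s$-to-$v$ walk of weight $\phi(u) + w(u,v)$, so by minimality $\phi(v) = dist_{G_s}(s,v) \le \phi(u) + w(u,v)$. Rearranging yields $w_\phi(u,v) = w(u,v) + \phi(u) - \phi(v) \ge 0$. Since $(u,v)$ was arbitrary, all edge weights in $G_\phi$ are non-negative, which is the claim.

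There is essentially no real obstacle here; the only subtlety worth flagging is the well-definedness of $\phi$, which is precisely where the no-negative-weight-cycle hypothesis is used — without it some $dist_{G_s}(s,v)$ could be $-\infty$ and $\phi$ would fail to be a valid integer-valued price function. Once finiteness is established, the result is just the standard Johnson-style reweighting argument.
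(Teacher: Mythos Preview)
Your proof is correct and is exactly the standard Johnson-style argument. The paper does not actually prove this lemma; it merely cites it as a preliminary fact from \cite{bernstein2022negative}, so there is nothing to compare against beyond noting that your argument is the expected one.
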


\subsection{Models of Computation}
\paragraph*{Parallel Model. } We consider the  PRAM CRCW model, where both simultaneous reads of and simultaneous writes to the same memory cell are allowed. The time complexity of a PRAM CRCW algorithm is measured by the work and span, where the work is defined as the total number of instructions executed across all
processors and the span is the length of the critical path (i.e., the
length of the longest chain of sequential dependencies).

\paragraph*{Distributed \congest Model. } %
In the \congest model, time is divided into discrete time slots, where each slot is called a \emph{round}. Throughout the paper, we always use $n$ to denote the number of vertices in our distributed network, i.e., $|V|$. In each round, each vertex in $V$ can send an $O(\log n)$ bit message to each of its neighbors. At the end of each round, vertices can do arbitrary local computations. A \congest algorithm initially specifies the input for each vertex and, after several rounds, all vertices terminate and generate output. The time complexity of a \congest algorithm is measured by the number of rounds.

Since the inputs and outputs to the distributed network should be specified for each vertex, we must be careful when we say something is given as input or output. Here we make some assumptions. For the network $G=(V,E)$, we say a subset of vertices (or a single vertex) $V'\subseteq V$ is the input or output if every vertex is given the information about whether it is in $V'$. We say a subgraph $H$ (a subset of edges, for example, paths or circles) is the input or output if every vertex knows the edges in $H$ adjacent to it. When we say a number is an input or output, we normally mean the number is the input or output of every vertex unless otherwise specified.

In single source shortest path problem, each edge $e$ in the network $G=(V,E)$ is assigned an integer weight $w(e)$ and a direction, which defines a weighted directed graph $G'=(V,E,w)$. A source node $s$ is specified. In the \congest model, the inputs are (i) each node knows the weights and directions of all its incident edges, (ii) each node knows whether it is $s$ or not. The goal is to let every node $v$ output $dist_{G'}(s,v)$.

\paragraph*{\DM. } For ease of explaining algorithms in \congest, we allow ourselves to perform the following steps defined in the \DM (\cite{0001Z22}). 
\begin{definition}[Steps in \DM] \quad

\begin{enumerate}
    \item \textbf{Contraction step.} Each node $v$ computes a $\tOh{1}$-bits value $a_v$. Each edge $(u,v)$ is marked with $c_e\in\{\bot,\top\}$ based on $a_u,a_v$. Contracting all edges with $c_e = \top$ and self-loops removed to get the minor graph $G'=(V',E')$. We also treat each node $S\in V'$ as a vertex set $S\subseteq V$.
    \item \textbf{Consensus step.} Each node $v\in V$ computes a $\tOh{1}$-bits value $x_v$. For every $S\in V'$, each $u\in S$ gets the value $\oplus_{v\in S}x_v$ where $\oplus$ is an operator satisfying commutative and associative laws, like sum, min, max.
\end{enumerate}
\end{definition}

This allows us to aggregate the values in each contracted node in $\tOh{\sqrt{n}+D}$ rounds.
\begin{theorem}[Theorem 17 in~\cite{0001Z22}]\label{thm:minoraggregation}
    The steps of the \DM can be simulated in the \congest model in $\tOh{\sqrt{n}+D}$ rounds.
\end{theorem}

\paragraph*{Quantum Edge Query Model.} There are two ways of defining algorithms in the quantum edge query model, depending on whether the input is given by an adjacency list or adjacency matrix. When the input is given by an adjacency matrix, the algorithm needs to compute some property of a graph $G=(V,E)$ while they can only access the graph by making queries to $Q_A$ defined as follows. $A$ is the adjacency matrix of $G$. Each query to $Q_A$ is a transformation which takes inputs $|u,v,0,w\rangle$ and output $|u,v,A_{u,v},w\rangle$. The inputs to $Q_A$ can also be superposed. Similarly, when the input is given by an adjacency list, the algorithm can access the graph by a transformation which takes inputs $|u,i,0,w\rangle$ and output $|u,i,N(u,i),w\rangle$ where $N(u,i)$ is the $i-th$ neighborhood of $u$ (if not exists, return a non-vertex index like $\bot$). The complexity is measured by the number of queries made by an algorithm.

\section{High Level Overview}
\label{sec:overview}

Our results follow the framework of \cite{bernstein2022negative}, which provides a sequential algorithm for negative-weight SSSP that takes $\Ot(m)$ time with high probability.

\subsection{Overview of the Sequential Algorithm from~\cite{bernstein2022negative}}
This section provides a summary of Bernstein \etal's approach to computing exact shortest paths on a graph with integer edge weights (both positive and negative)~\cite{bernstein2022negative}.

The final goal is to compute a price function $\phi$ such that all edges in $G_{\phi}$ are non-negative; since $G_{\phi}$ and $G$ are equivalent (Lemma~\ref{lem:prelim_eqivalence_price_function}), one can then run SSSP for non-negative weights on $G_{\phi}$. 
Following the standard scaling framework, Bernstein \etal's algorithm computes such a $\phi$ over multiple scaling rounds.  
The key component of the algorithm is a procedure \scaledown that computes a price function that halves the minimum negative edge weight: given a weighted directed graph $G$ where all edge weights are at least as large as $-2B$ for some non-negative parameter $B$, \scaledown outputs a price function $\phi$ such that in $G_{\phi}$ all edge weights are at least as large as $-B$. A procedure \scaledown with these guarantees can then easily be used to solve negative-weight SSSP, so for the rest of this section we focus exclusively on the algorithm \scaledown.

Recall from \Cref{def:GBs} that we obtain $G^B_s$ from $G$ by adding $B$ to all \emph{negative} edge weights, and adding a dummy source $s$. For somewhat subtle reasons that are not relevant to this overview section, the \scaledown algorithm primarily works with $G^B_s$, and in particular it computes a price function $\phi$ such that $w_{\phi}(e) \ge 0$ in $G^B_s$, which implies that $w_{\phi}(e) \geq -B$ in $G_s$, as desired.

\paragraph*{The Algorithm $\scaledown$. }
We now give a high-level overview of the sequential algorithm for \scaledown in \cite{bernstein2022negative}. Our algorithm will follow the same general framework, but with a few key differences discussed below. See \Cref{sec:scaledown-high-lvl} for a more detailed description of our algorithm, along with pseudocode. 

In order to compute the desired price function $\phi$, the \scaledown procedure consists of four phases.
\begin{itemize}
    \item Phase 0: Run a Low Diameter Decomposition on $G^B$ with negative-weight edges rounded up to $0$ (See \Cref{lem:LDDcorrectness}).
        This gives a set $\Erem$ of removed edges, such that all SCCs of $G \setminus \Erem$ have small weak diameter. Observe that by the guarantees of directed LDD, after Phase 0 the graph will contain three types of edges: \textbf{(i)} edges within each SCC of $G \setminus \Erem$, \textbf{(ii)} edges that connect one component to another and are not in $\Erem$; one can intuitively think of these edges as being DAG-like, since they always go forward in the topological ordering of the SCCs of $G \setminus \Erem$, and \textbf{(iii)} edges from $\Erem$, where any edge $e \in E$ is in $\Erem$ with probability at most $\tilde{O}(w(e)/D)$. 
        Phase 1 of $\scaledown$ addresses the first type of edge, Phase 2 the second, and Phase 3 the third.

    \item Phase 1: Recursively call \scaledown on the edges inside each SCC. This finds a price function under which edges inside each SCC have non-negative weight, thus fixing the type 1 edges.
    \item Phase 2: Fix the edges not in $\Erem$ that connect one component to another (i.e. the DAG-like edges); that is, compute a price function that makes their weight non-negative.
    \item Phase 3: Fix the edges of $\Erem$.
\end{itemize}

\paragraph*{Implementing the Three Phases in the Sequential Model.} Recall that the low-diameter decomposition provides two guarantees: first that the weak diameter in each SCC is bounded, and second that each edge will be in $\Erem$ with probability proportional to its weight. Loosely speaking, the first guarantee ensures that in Phase 1, recursively calling \scaledown on the SCCs is making progress, because one can show that as the diameter decreases the maximum number of negative edges on any shortest path is reduced (This is technically only true in a carefully defined auxiliary graph). Bernstein \etal~show that after $O(\log n)$ recursive calls to \scaledown, the number of negative edges on any shortest path is at most $\tilde{O}(1)$. 
They then show an algorithm called $\ElimNeg$ that can efficiently compute single-source shortest paths in graphs with this property; running this algorithm from a dummy source $s$ and applying Lemma \ref{lem:prelim_nonnegative} yields the desired price function.

For Phase 2, the focus is on DAG-like edges connecting the SCCs of $G \setminus \Erem$; by Phase 1, the edges in each SCC already have non-negative weights. The algorithm simply contracts each SCC into a vertex to get an acyclic graph whose edge set consists of type-2 edges. Computing a price function for these edges turns out to be very easy because the underlying graph is a DAG.

By the time the algorithm reaches Phase 3, only edges in $\Erem$ can still be negative. %
The second guarantee of low-diameter decomposition ensures that every shortest path has few edges from $\Erem$ in expectation.
In other words, by the time the algorithm reaches Phase 3, the remaining graph has the following property: \emph{on average}, the shortest path from $s$ to any vertex $v$ contains few negative edges. The authors of ~\cite{bernstein2022negative} then show that their subroutine \ElimNeg can efficiently compute shortest distances in any graph with this property; by \Cref{lem:prelim_nonnegative}, these distances then give the desired price function.

\subsection{Adapting to Other Models: Challenges \& Solutions}
\label{subsec:challengesandsolutions}
Although this framework works well in the sequential setting, it presents additional challenges in parallel and distributed models. We summarize these obstacles below.
\paragraph*{Obstacle 1: Low Diameter Decomposition. }
Undirected LDD in parallel and distributed models (e.g. \cite{MillerPX13},\cite{BeckerEL20},\cite{RozhonEGH22}) is commonly solved via the following framework: each node grows a ball starting at a random ``delayed'' time (the distribution of the randomness is picked carefully), and the boundary of a ball stops growing once it reaches another ball. The balls create a partition of the graph, where each ball has a low diameter, and if we define $\Erem$ to be the edges between different balls, then it can be shown that any particular edge $e$ is in $\Erem$ with small probability.
For example, the simplest instantiation of the above approach is the well-known MPX algorithm of Miller, Peng, Xu~\cite{MillerPX13}. In this algorithm, every vertex picks a random delay $\delta(v)$, and then vertex $x$ is assigned to the ball $B_y$ of the vertex $y$ that minimizes $\min_{v \in V} \dist(x,v) + \delta(v)$ (some of the $B_y$ may end up empty.) One can easily compute this minimum for every $x$ by computing a single shortest path tree from a dummy source with an edge of weight $\delta(v)$ to every vertex $v$.

{\em Natural approaches:}
A natural way to extend the above algorithm to the directed setting is as follows: each vertex in parallel grows an outgoing ball (which means the ball growing only uses the edges going out of this ball), and the boundary stops growing once it reaches another out-ball; the edges pointing out of every ball are then included in $\Erem$. (The process should be repeated with incoming balls, but we leave this out to keep the discussions simple.)
However, we can no longer argue that each edge is included in $\Erem$ with a small probability. See~\Cref{fig:star} for an example. This example contains a star with a middle vertex denoted by $s'$, and $n-1$ other vertices which have edges pointing to $s'$. If $s'$ is the first vertex to start growing a ball (because it ends up with the lower random delay), then the LDD algorithm will create a ball $\{s'\}$, so when every vertex on the boundary of the star later grows its own ball, the algorithm will include all the edges of the graph into $\Erem$. On the other hand, if some other vertex $s$ in the boundary of the star starts growing a ball before $s'$, this will result in ball $\{s, s'\}$, and when other boundary vertices start growing their own balls, all edges other than $(s,s')$ will be added to $\Erem$.

\begin{figure}[t!]
    \centering
    \begin{minipage}{.7\columnwidth}
    \centering
        \includegraphics[width=0.5\columnwidth]{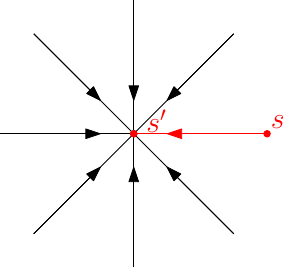}
        \captionof{figure}{One vertex in the boundary of the star(denoted as $s$) starts growing an outgoing ball and includes $s'$ into the ball. Every other vertex in the boundary of the star will start growing balls later and include all edges other than $(s,s')$ into $\Erem$.}\label{fig:star}
    \end{minipage}%
\end{figure}

The above example is rather naive
because the input graph is not strongly connected; so, we can just return $\Erem =\emptyset$ (then, all the strongly connected components (SCCs) already have low diameters). 
But the construction can be extended to the more sophisticated example in \Cref{fig:bottlenecks}.
The example graph in \Cref{fig:bottlenecks} contains a clique of $n/2$ vertices and a directed cycle of $n/2$ vertices. An edge $(s,s')$ is pointing from the clique to the directed cycle.  We will show that in this graph, there is some particular edge that is included in $\Erem$ with a constant probability, which is too high.

The right way to compute a directed LDD on this graph is to set $\Erem$ to be a single random edge on the cycle, but this is not what the parallel ball-growing approach would do. To see this, consider two cases. The first case is that the vertex $s'$ ends up in the ball $B_v$ of some vertex $v$ in the clique (because $v$ gets low random delay). The vertex $s''$ cannot be in this same ball $B_v$ because the resulting out-diameter of $B_v$ would be too large, so $s''$ ends up in a different ball and the edge $(s'',s')$ is necessarily added to $\Erem$. In the second case, the vertex $s'$ ends up in the ball $B_v$ of some vertex $v$ on the cycle; in this case $(s,s')$ is added to $\Erem$. So no matter what, at least one of $(s,s')$ or $(s'',s')$ is added to $\Erem$, so one of these edges is added with probability at least $1/2$. (By contrast, if all edges were undirected, then a ball starting from the clique would explore the cycle in both directions up to some random threshold, and hence the edge $(s',s'')$ would not necessarily be added to $\Erem$.)

{\em Our approach:}
Our approach does not follow the random delay approach.
Roughly, our algorithm simulates a variation of the directed LDD algorithm~\cite{bernstein2022negative}.
The sequential algorithm carves out the graph with disjoint balls in an arbitrary order $B_{v_1}, \ldots, B_{v_\ell}$.
Doing so sequentially is inefficient in distributed models, so we instead show how to efficiently compute an index $i$, such that $B_{v_1} \cup \ldots \cup B_{v_{i-1}} \cup B_{v_i}$ and its complement are proportional in size. This yields a recursive algorithm with $O(\log(n))$ parallel rounds, and because our final ordering is mimicking a valid sequential ordering from \cite{bernstein2022negative}, we are able to argue that every edge $e$ is added to $\Erem$ with a small probability.  For more details, see \ref{subsec:LDDoverview}.

\begin{figure}[t!]
    \centering
    \begin{minipage}{.8\columnwidth}
    \centering
        \includegraphics[width=0.5\columnwidth]{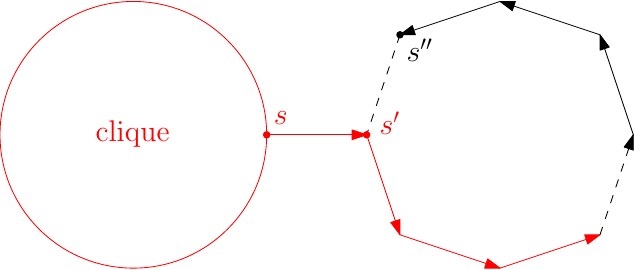}
        \captionof{figure}{The figure shows the case when one vertex in the clique starts growing a ball before all vertices in the cycle and includes $s,s'$ into the ball, while $s''$ is not included. $s''$ will be induced in another outgoing ball later. The dashed line marked the edges included in \Erem as a result.}\label{fig:bottlenecks}
    \end{minipage}%
\end{figure}

\paragraph*{Obstacle 2: Algorithms for Average Case vs Worst Case Inputs. } Recall that once we reach Phase 3 of the algorithm, only the edges of $\Erem$ can be negative; since the directed LDD guarantees that every edge is added to $\Erem$ with small probability, this implies that every shortest path contains few negative edges \emph{in expectation}.
Bernstein et al. \cite{bernstein2022negative} show a simple sequential algorithm that efficiently computes shortest paths in such a graph.
This algorithm works even when there are shortest paths with many negative edges, so long as the average number over shortest paths is small.

Unfortunately, such an algorithm does not seem possible in other settings.
Instead, we have to settle for a weaker subroutine that requires \emph{all} shortest paths to have few negative edges. (Technically speaking, it works in a general graph, but only returns correct distances to vertices $v$ for which the shortest $sv$-path has few negative edges.)
This subroutine is too weak to directly handle Phase 3 from \cite{bernstein2022negative}.
In order to execute the framework above with our weaker subroutine, we need to introduce a more refined recursive structure for \scaledown, which is the cause of our extra $n^{o(1)}$ factor in the time bounds.

\paragraph*{Obstacle 3: SCCs and Their Topological Order in \congest. }
Phase 2 of our algorithm requires computing SCCs and a toplogical ordering among them. Schudy~\cite{schudy2008finding} gives an algorithm for computing SCCs and their topological order in the parallel model that uses $O(\log^2 n)$ calls to non-negative-weight SSSP and yet, somewhat surprisingly, there has been no such algorithm formally written for \congest.
Directly porting the framework of \cite{schudy2008finding} to \congest is non-trivial; the congestion on any particular edge could be prohibitively large.
We remedy this state of affairs by implementing the framework in the \DM, which abstracts away from such low-level details and can be compiled into a \congest algorithm.

\paragraph*{Roadmap} In the next two sections, we present the following self-contained technical contributions: directed low diameter decomposition in parallel, distributed, and quantum models; and topological sort in the \congest model.
Subsequent sections are devoted to explaining how these algorithms are used to solve negative-weight SSSP in parallel, distributed, and quantum models.

\section{Low Diameter Decomposition}\label{sec:lowediameterdecomposition}

In this section, we provide the low diameter decomposition (LDD) algorithm on directed graphs with non-negative integer weight (Algorithm~\ref{alg:lowdiamterdecomposition}). We restate the lemma we want to prove in this section below.

\lddl*

\paragraph*{Organization.} 
In Section~\ref{subsec:LDDoverview}, we present a comprehensive overview of our algorithm. Section~\ref{subsec:LDDalg} is dedicated to establishing the correctness of Algorithm~\ref{alg:lowdiamterdecomposition}. A key component of this algorithm is the $FindBalancedSet$ subroutine, encapsulated within Algorithm~\ref{alg:smallsizedecomposition}, detailed in Section~\ref{sec:smallsize}. Implementation specifics of Algorithm~\ref{alg:lowdiamterdecomposition}, along with analysis of the running times for these implementations, are provided in Section~\ref{sec:ldd-imp-pc}. Finally, in Section~\ref{subsec:lddopenproblems}, we discuss some open questions related to directed low diameter decomposition.

\subsection{Algorithm Overview}\label{subsec:LDDoverview}
Our low diameter decomposition algorithm is presented in Algorithm~\ref{alg:lowdiamterdecomposition}. In this subsection, we provide an overview of Algorithm~\ref{alg:lowdiamterdecomposition}. The algorithm contains two phases:

\paragraph*{Phase 1: Mark vertices as light or heavy.} This phase is identical to the sequential algorithm introduced in \cite{bernstein2022negative}. After this phase, each vertex $v$ will get one of the following three marks: \emph{in-light}, \emph{out-light}, \emph{heavy}. It is guaranteed that w.h.p., if a vertex $v$ is marked as (i) \emph{in-light}, then $|\ingball(v,d/4)|\le .7|V|$, (ii) \emph{out-light}, then $|\outgball(v,d/4)|\le .7|V|$, (iii) \emph{heavy}, then $|\ingball(v,d/4)|> .5|V|$ and $|\ingball(v,d/4)|> .5|V|$. The algorithm for finding these labels can be summarized as follows. We select $\Theta(\log n)$ nodes from the graph uniformly at random and execute the SSSP algorithm starting from these nodes. The proportion of sampled nodes that are at a distance of no more than $d/4$ from a vertex $v$ represents the size of $\ingball(v,d/4)$, while the proportion of nodes to which $v$ is at a distance of no more than $d/4$ corresponds to the size of $\outgball(v,d/4)$.
See Algorithm~\ref{alg:lowdiamterdecomposition} Phase 1 for the details of how to get the marks, and~\Cref{cla:correctestimate} for the proof of the guarantees.

\paragraph*{Phase 2: Create sub-problems with small sizes.} We denote the set of \emph{in-light} vertices by $\Vin$, the set of \emph{out-light} vertices by $\Vout$, and the set of \emph{heavy} vertices by $\Vheavy$.
Sequentially carving our balls centered on light vertices, as in \cite{bernstein2022negative}, would not be efficient in the models we consider.
We would like to find sets which make for an efficient recursion.
To this end, we first apply subroutine $FindBalancedSet$ (Algorithm~\ref{alg:smallsizedecomposition}) on $\Vin, \Vout$. $FindBalancedSet$ on $V_{in}$ (or $V_{out}$) will create a random vertex set $A_{in}$ (or $A_{out}$) having the following properties:
\begin{enumerate}
    \item \textit{(Light boundary)} It is guaranteed that each edge $e$ is included in $\incut(\Ain)$ (or $\outcut(\Aout)$) with probability $O(w(e)\log(n)/d)$. Note that this differs from~\Cref{lem:LDDcorrectness} by a $\log n$ factor. 
    \item \textit{(Balanced or contains $V_*$)} For $*\in\{in,out\}$, we have (i) $|A_*|\le .9|V|$, and (ii) either $|A_*|\ge .1|V|$ or $V_*\subseteq A_*$.
    If $.1|V| \le |A_*| \le .9|V|$, we say $A_*$ is balanced.
    In other words, the only case that $A_*$ is not balanced (too small) is that $V_*$ is completely contained in $A_*$.
\end{enumerate}

Now we consider two cases.

\textbf{Case 1: $\Ain$ or $\Aout$ is balanced.} For convenience, we only consider the case when $\Ain$ is balanced, i.e. $.1|V|\le |\Ain|\le .9|V|$. The case where $\Aout$ is balanced is similar. In this case, we recursively call $\EremI \leftarrow LowDiameterDecomposition(G[\Ain],d)$ and $\EremII \leftarrow LowDiameterDecomposition(G[V\backslash \Ain],d)$, and return $\incut(\Ain)\cup \EremI \cup \EremII $ as $\Eref$. Now, we verify the output guarantees.

\begin{enumerate}
    \item \textit{(Time cost)} Since each recursion layer decreases the size of the graph by a constant factor, the depth of the recursion tree is bounded by $O(\log n)$.
    \item \textit{(Low diameter)} Consider an SCC $C$ of the subgraph $E-\Eref$. Since $\incut(\Ain)\subseteq \Eref$, it must be the case that $C\subseteq \Ain$ or $C\subseteq V\backslash \Ain$. In both cases, $C$ is included in a recursive call.
    \item \textit{($\Eref$ guarantee)} Each edge $e$ is included in $\incut(\Ain)$ with probability $O(w(e)\log(n)/d)$. Each edge $e$ can also be included in the returned edge set of a recursive call. The depth of the recursion tree is bounded by $O(\log n)$, therefore, an edge is included in $\Eref$ with probability $O(w(e)\log^2n/d)$.
\end{enumerate}

\textbf{Case 2: Both $\Ain,\Aout$ are not balanced.} In this case, we have $\Vin\subseteq \Ain, \Vout\subseteq \Aout$. We call  $\EremI \leftarrow LowDiameterDecomposition(G[\Ain],d)$, and $\EremII \leftarrow \\ LowDiameterDecomposition(G[\Aout\backslash \Ain],d)$, then return $\incut(\Ain)\cup \outcut(\Aout)\cup \EremI \cup \EremII $ as $\Eref$. %
Now we verify the output guarantees.

\begin{enumerate}
    \item \textit{(Time cost)} Notice that $|\Ain\cup \Aout|\le .2|V|$, thus, each recursion layer decreases the size of the graph by a constant factor; the depth of the recursion tree is bounded by $O(\log n)$.
    \item \textit{(Low diameter)} Consider an SCC $C$ of the subgraph $E-\Eref$. Since $\incut(\Ain),\outcut(\Aout)\subseteq \Eref$, it must be the case that $C\subseteq \Ain$ or $C\subseteq \Aout\backslash \Ain$ or $C\subseteq V\backslash (\Ain\cup \Aout)\subseteq\Vheavy$. In both the first two cases, $C$ is included in a recursive call. In the third case, remember that each vertex $v\in \Vheavy$ has the property that $|\ingball(v,d/4)|> .5|V|$ and $|\outgball(v,d/4)|> .5|V|$. Thus, any two vertices in $\Vheavy$ have mutual distance at most $d/2$ and so $C$ has weak diameter at most $d$.
    
    \item \textit{($\Eref$ guarantee)} Each edge $e$ is included in $\incut(\Ain)$ or $\outcut(\Aout)$ with probability $O(w(e)\log (n)/d)$. Each edge $e$ can also be included in the returned edge set of a recursive call. The depth of the recursion tree is bounded by $O(\log n)$, therefore, an edge is included in $\Eref$ with probability $O(w(e)\log^2n/d)$.
\end{enumerate}

\subsubsection{Overview of FindBalancedSet} Remember that 
FindBalancedSet takes $\Vin$ or $\Vout$ as input and outputs a set $\Ain$ or $\Aout$ that satisfies properties \textit{light boundary} and \textit{balanced} described above. For convenience, we only consider the case when $\Vin$ is the input. Write $\Vin=\{v_1,v_2,...,v_{\ell}\}$ (an arbitrary order).
The algorithm contains two steps. 

\begin{description}
    \item[Step 1.] For each $i\in\left[\ell\right]$, sample an integer $d_i$ following a certain geometric distribution. The detailed definition is given by~\Cref{def:ge}. For now, we can think of the distribution in the following way: suppose a player is repeating identical independent trials, where each trial succeeds with probability $\Theta(\frac{\log n}{d})$, then $d_i$ is the number of failed trails before the first success.

    \item[Step 2.] Find the smallest $i\in[\ell]$ such that $\left|\cup_{j\le i}\ingball(v_j,d_j)\right|>0.1|V|$, denoted as $k$. If $k$ does not exist, i.e. $\left|\cup_{j\in[\ell]}\ingball(v_j,d_j)\right|\le 0.1|V|$, set $k=\ell$. Note that for a fixed $i$, we can compute $\left|\cup_{j\le i}\ingball(v_j,d_j)\right|$ by a single SSSP call (as opposed to computing each ball sequentially, which is inefficient); we can then binary search to find $k$. Return $\cup_{j\le k}\ingball(v_j,d_j)$ as $A$.
\end{description}

\textbf{Property \emph{balanced or contains $\Vin$}.} According to the definition of $d_i$, one can show that $d_i<d/4$ w.h.p., which implies $|\ingball(v_i,d_i)|\le .7|V|$ (because $v_i$ is light.). Since $k$ is the smallest integer such that $\left|\cup_{j\le k}\ingball(v_j,d_j)\right|>0.1|V|$, it must be the case $\left|\cup_{j\le k}\ingball(v_j,d_j)\right|<0.8|V|$. Moreover, if $\left|\cup_{j\le k}\ingball(v_j,d_j)\right|>0.1|V|$ is not true, then $k=\ell$ and $\Vin\subseteq \Ain$.

\textbf{Property \emph{light boundary}.} This is the most technical part and the rest of this subsection is devoted to sketching the proof idea.%

Notice that the only randomness of $FindBalancedSet$ comes from $d_1,d_2,...,d_{\ell}$. For convenience, write $\mathbf{d}=(d_1,d_2,...,d_{\ell})$. Since $\incut(A)$ only depends on $\mathbf{d}$, we may define $\incut(A)_{\mathbf{d}}$ as the edge set $\incut(A)$ generated by the algorithm with $\mathbf{d}$ as the randomness.

To analyze the light boundary property, we will describe another algorithm that, given $\mathbf{d}=(d_1,d_2,...,d_{\ell})$, outputs an edge set $E_{\mathbf{d}}$, such that 
\begin{enumerate}
    \item $\incut(A)_\mathbf{d}\subseteq E_{\mathbf{d}}$ always holds for any $\mathbf{d}$, and
    \item an edge $e$ is included in $E_{\mathbf{d}}$ with probability $O(w(e)\log(n)/d)$.
\end{enumerate}

\textit{(The algorithm to generate $E_{\mathbf{d}}$)} Initially set $E_{\mathbf{d}}=\emptyset$. For iterations $i=1,2,...,\ell$, do
\begin{itemize}
    \item Mark all edges $(u,v)$ with $u,v\in \ingball(v_i,d_i)$ as "invulnerable", and add all edges in $\incut(\ingball(v_i,d_i))$ that are not invulnerable to $E_{\mathbf{d}}$. 
\end{itemize}

Note that $E_{\mathbf{d}}$ is produced by a sequential algorithm, which is easier to analyze, but the algorithm never actually computes $E_{\mathbf{d}}$.
We can show that $\incut(A)\subseteq E_{\mathbf{d}}$ is always true: Recall that $A=\cup_{j\le k}\ingball(v_j,d_j)$. Any edge in $\incut(A)$ is not invulnerable before the end of the $k$-th iteration; any edge in $\incut(A)$ is also on the boundary of some $B_j$ for $j\le k$, which means it has already been added to $E_{\mathbf{d}}$ before the end of the $k$-th iteration.

The last thing is to show that an edge $e$ is included in $E_{\mathbf{d}}$ with probability $O(w(e)\log(n)/d)$. To this end, consider the following alternative explanation of the procedure when we do the $i$-th iteration: $v_i$ gradually grows the radius of the ball centered on $v_i$, each round increases the radius by $1$, and stops with probability $\Theta(\log(n) / d)$. This is exactly how $d_i$ is defined. Observe that each edge $(u,v)$ will be included in $E_{\mathbf{d}}$ if and only if the first $v_i$ that grows its ball to reach $v$ failed to reach $u$ (if it reached $u$, then this edge is marked as invulnerable and will never be added to $E_{\mathbf{d}}$).
By the memoryless property of the geometric distribution, this happens with probability $\Theta(w(e)\cdot \log(n)/d)$. 

\begin{algorithm}
	\caption{$\Eref\leftarrow LowDiameterDecomposition(G,d)$}\label{alg:lowdiamterdecomposition}
	\KwData{Non-negative weighted directed graph $G=(V,E,w)$, an integer $d$.}
	\KwResult{A random set of edges $\Eref\subseteq E$. (See~\Cref{lem:LDDcorrectness} for the properties of the output.)}
 
        \BlankLine
        If $G$ is an empty graph, return $\emptyset$\;
        Let $n$ and $c$ be defined as in~\Cref{rem:nandc}\;
        \tcp*[h]{\textcolor{blue}{Phase 1: mark vertices as light or heavy}}
        
	Sample $\lceil c\log n\rceil$ vertices in $V$ uniformly at random, denoted as $S$\;\label{LDDline2}
	For each $v\in S$, use $\Oracle(G,v)$ to find $\ingball(v,d/4)$ and $\outgball(v,d/4)$\;\label{line:ldd_s-balls}
 
        For each $v \in V$, compute $\ingball(v,d/4) \bigcap S$ and $\outgball(v,d/4) \bigcap S$ using Line~\ref{line:ldd_s-balls}\;

            \ForEach(\label{line:ldd-marking-loop}){$v \in V$}{
            
            If $|\ingball(v,d/4) \bigcap S| \leq .6 |S|$, mark $v$ \emph{in-light} \tcp*[f]{whp $|\ingball(v,d/4)| \leq .7|V(G)|$}
                
            Else if $|\outgball(v,d/4) \bigcap S| \leq .6|S|$, mark $v$ \emph{out-light}\tcp*[f]{whp $|\outgball(v,d/4)| \leq .7|V(G)|$}
                
            Else mark $v$  \emph{heavy} \tcp*[f]{whp $|\ingball(v,D/4)| > .5|V(G)|$ and $|\outgball(v,D/4)| > .5|V(G)|$}
                
            }\label{LDDline3}
	
        \BlankLine
        \tcp*[h]{\textcolor{blue}{Phase 2: creates sub-problems with small sizes}}
        
        Denote the set of \emph{in-light} vertices by $\Vin$, the set of \emph{out-light} vertices by $\Vout$\;
	$\Ain\leftarrow FindBalancedSet(G,\Vin,d,in)$, $\Eremin \leftarrow \incut(\Ain)$\;\label{LDDline4}
	$\Aout\leftarrow FindBalancedSet(G,\Vout,d,out)$, $\Eremout \leftarrow\outcut(\Aout)$\;\label{LDDline5}

        \tcp*[h]{\textcolor{blue}{Case 1: One of $\Ain,\Aout$ is balanced.}}
        
	\If{$A_*$ ($*$ can be $in$ or $out$) has size between $.1|V|$ and $.9|V|$}
        {
            $\EremI \leftarrow LowDiameterDecomposition(G[A_*],d)$\; \label{recurse1}
            $\EremII \leftarrow LowDiameterDecomposition(G[V\backslash A_*],d)$\;\label{recurse2}
            \Return{$\Erems\bigcup \EremI \bigcup \EremII $}\;\label{return1}
        }

        \tcp*[h]{\textcolor{blue}{Clean up: Check that $V\backslash(\Ain\bigcup \Aout)$ have small weak diameter.}}
        
	Pick an arbitrary vertex $u\in V\backslash(\Ain\cup \Aout)$. Use $\Oracle(G,u)$ to find $\ingball(u,d/2),\outgball(u,d/2)$\;
        
        \If{$V\backslash(\Ain\cup \Aout)\not\subseteq \ingball(u,d/2)\bigcap\outgball(u,d/2)$ or $|\Ain\cup \Aout|\ge .5|V|$}
        { \Return $E$}\label{LDDlineborder}

        \tcp*[h]{\color{blue}Case 2: both $\Ain,\Aout$ are small.}
        
        $\EremI \leftarrow LowDiameterDecomposition(G[\Ain],d)$\;\label{recurse3}
        $\EremII \leftarrow LowDiameterDecomposition(G[\Aout\backslash \Ain],d)$\;\label{recurse4}
            
	\Return{$\Eremin \bigcup \Eremout \bigcup \EremI \bigcup \EremII $}\;\label{return3}
\end{algorithm}

\subsection{Correctness of the Algorithm}\label{subsec:LDDalg}

In this secion, we will show the correctness of Algorithm~\ref{alg:lowdiamterdecomposition}.
\begin{remark}\label{rem:nandc}
Throughout this section, we use $n$ to denote a global variable which always refers to the size of the graph in which we call low diameter decomposition. Introducing the parameter $n$ ensures that in the analysis "with high probability" is in terms of $n$. In addition, we always use $c$ to denote a sufficiently large constant.
\end{remark}

\begin{proof}[Proof of~\Cref{lem:LDDcorrectness} (correctness)]
One can verify that each recursion will decrease the size of the graph by at least $1$. Therefore, the algorithm will terminate.

We use induction on the size of the input graph $G$ to show that the following statement is true, thus proving the lemma.

\paragraph*{Induction hypothesis.} The output of Algorithm~\ref{alg:lowdiamterdecomposition} satisfies
	\begin{enumerate}
            \item each SCC of the subgraph $E-\Eref$ has weak diameter at most $d$ in $G$, i.e. if $u,v$ are two vertices in the same SCC, then $dist_G(u,v)\le d$ and $dist_G(v,u)\le d$.
		\item for any $e\in E$, we have $\Pr[e\in \Eref]=\frac{c^3w(e)\log|V|\log n}{d}+\frac{|V|}{n^9}$.
	\end{enumerate}
	\paragraph*{Base case.} When $G$ contains $0$ vertices, the algorithm returns $\Eref=\emptyset$, the induction hypothesis holds. 
	
	\paragraph*{Induction.} 
        We first prove (2). There are two possibilities for $\Eref$: either $\Eref\subseteq \Eremin \bigcup \Eremout \bigcup \EremI \bigcup \EremII $ or $\Eref=E$. 
        According to~\Cref{lem:correctnessLowsizedecomposition}, each edge $e$ is included in $\Eremin $ or $\Eremout $ with probability $\frac{2c^2w(e)\log n}{d}$. 
        
        We first need the following claims to bound the size of the recursive call. 
        
	\begin{claim}\label{cla:correctestimate}
            With high probability in $n$, for any $v\in \Vin$, we have $|\ingball(v,d/4)|\le .7|V|$; for any $v\in \Vout$, we have $|\outgball(v,d/4)|\le.7|V|$; for any $v\in V\backslash(\Vout\cup \Vin)$, we have $|\ingball(v,d/4)|,|\outgball(v,d/4)|>.5|V|$.
        \end{claim}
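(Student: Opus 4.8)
The plan is to apply a Chernoff/concentration argument to the random sample $S$ and then take a union bound over all $v \in V$. First I would fix a vertex $v$ and recall that $S$ consists of $\lceil c\log n\rceil$ vertices drawn uniformly and independently from $V$. For each such sampled vertex $u$, the indicator that $u \in \ingball(v,d/4)$ is a Bernoulli random variable with success probability exactly $|\ingball(v,d/4)|/|V|$, and these indicators are independent across the $\lceil c\log n\rceil$ samples. Hence $|\ingball(v,d/4) \cap S|$ is a sum of i.i.d.\ Bernoullis with mean $\lceil c\log n\rceil \cdot |\ingball(v,d/4)|/|V|$, and likewise for $\outgball$. The marking rule in Algorithm~\ref{alg:lowdiamterdecomposition} compares these empirical counts against the threshold $.6|S|$, so the claim reduces to showing that the empirical fraction $|\ingball(v,d/4) \cap S|/|S|$ is within (say) $0.1$ of the true fraction $|\ingball(v,d/4)|/|V|$ with probability at least $1 - 1/n^{c'}$ for a suitably large $c'$ depending on $c$.

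The key step is the Chernoff bound: for a sum $X$ of $N = \lceil c\log n\rceil$ i.i.d.\ $[0,1]$-valued (indeed $\{0,1\}$-valued) random variables with mean $\mu = \E[X]$, we have $\Pr[|X - \mu| \ge \delta N] \le 2\exp(-2\delta^2 N)$ by Hoeffding's inequality; plugging in $\delta = 0.1$ and $N = \lceil c\log n\rceil$ gives a failure probability at most $2\exp(-0.02 c\log n) = 2 n^{-0.02c}$, which is at most $n^{-8}$ (say) once $c$ is a large enough constant, consistent with \Cref{rem:nandc}. I would then verify that the three implications follow: if $v$ is marked \emph{in-light} then $|\ingball(v,d/4)\cap S| \le .6|S|$, so with high probability the true fraction is at most $.6 + .1 = .7$, i.e.\ $|\ingball(v,d/4)| \le .7|V|$; symmetrically for \emph{out-light}; and if $v$ is marked \emph{heavy} then $|\ingball(v,d/4)\cap S| > .6|S|$ \emph{and} $|\outgball(v,d/4)\cap S| > .6|S|$, so with high probability both true fractions exceed $.6 - .1 = .5$, i.e.\ $|\ingball(v,d/4)|, |\outgball(v,d/4)| > .5|V|$.

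Finally I would take a union bound over all $v \in V$ (there are at most $n$ of them, since $|V| \le n$ by \Cref{rem:nandc}) and over the two ball directions, incurring a factor of $2|V| \le 2n$; since each individual failure probability is at most $n^{-8}$, the total failure probability is at most $2n^{-7} = O(n^{-7})$, which is high probability in the sense defined in \Cref{sec:prelim}. I do not anticipate a serious obstacle here; the only mild subtlety is making sure the constant $c$ in the sample size is chosen large enough relative to the exponent demanded by ``with high probability,'' and being careful that $n$ (the global recursion variable of \Cref{rem:nandc}) rather than $|V|$ governs the probability bound, so that the union bound over the at-most-$|V| \le n$ vertices still leaves polynomially small failure probability. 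Everything else is a routine application of Hoeffding's inequality and a union bound.
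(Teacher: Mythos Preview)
Your proposal is correct and follows essentially the same approach as the paper: a Chernoff/Hoeffding concentration bound on the $\lceil c\log n\rceil$ uniform samples, applied to the fraction hitting $\ingball(v,d/4)$ and $\outgball(v,d/4)$, followed by a union bound over all $v\in V$ and both directions. The only cosmetic point is that the paper (following \cite{bernstein2022negative}) describes $S$ as a set, so the sampling is without replacement; your write-up treats the indicators as i.i.d., but since hypergeometric tails are at least as tight as the corresponding binomial ones, Hoeffding's inequality applies verbatim and nothing changes.
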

        \begin{proof}[Proof of~\Cref{cla:correctestimate}]
            Notice that $S$ contains $\lceil c\log n\rceil$ vertices sampled uniformly at random. If a vertex $v$ with $|\ingball(v,d/4)|>.7|V|$ is included in $\Vin$, that means $|\ingball(v,d/4) \bigcap S| \leq .6 |S|$. However, the expectation of the number of vertices in $\ingball(v,d/4) \bigcap S$ is at least $.7|S|$ since $S$ is uniformly sampled. Using a simple Chernoff bound, by taking $c$ sufficiently large, the event does not happen with high probability. The same arguments hold for vertices in $\Vin, V\backslash(\Vout\cup \Vin)$.
        \end{proof}

	\begin{claim}\label{cla:sizeofrecursive}
		With high probability in $n$, we have $|\Ain|,|\Aout|\le 0.9|V|$. 
	\end{claim}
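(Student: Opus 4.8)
The plan is to analyze the output of \textsc{FindBalancedSet} directly, using the structure of its two steps. I will argue for $\Ain$; the case of $\Aout$ is symmetric (replacing in-balls by out-balls). Recall $\Ain = \bigcup_{j \le k} \ingball(v_j, d_j)$, where $k$ is either the first index at which the accumulated union exceeds $0.1|V|$, or $k = \ell$ if that threshold is never reached. In the latter case $|\Ain| \le 0.1|V| \le 0.9|V|$ trivially, so assume $k$ is the first index with $\left|\bigcup_{j \le k} \ingball(v_j, d_j)\right| > 0.1|V|$.

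The key step is to bound the ``last ball'' $\ingball(v_k, d_k)$. By minimality of $k$, we have $\left|\bigcup_{j \le k-1} \ingball(v_j, d_j)\right| \le 0.1|V|$, so it suffices to show $|\ingball(v_k, d_k)| \le 0.8|V|$, since then $|\Ain| \le 0.1|V| + 0.8|V| \le 0.9|V|$ by a union bound over the two pieces. For this I would use two facts. First, $v_k \in \Vin$ by construction, and by \Cref{cla:correctestimate}, w.h.p.\ every $v \in \Vin$ satisfies $|\ingball(v, d/4)| \le 0.7|V|$. Second, I need $d_k < d/4$ w.h.p., which follows from the definition of the geometric distribution in \cref{def:ge}: $d_k$ is the number of failed trials before the first success, where each trial succeeds with probability $\Theta(\log n / d)$, so $\Pr[d_k \ge d/4] \le (1 - \Theta(\log n/d))^{d/4} \le n^{-c'}$ for a suitable constant by the choice of hidden constant in the success probability. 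Taking a union bound over all $\ell \le n$ indices $i$ gives that w.h.p.\ $d_i < d/4$ for all $i$ simultaneously, in particular for $i = k$. Combining, $\ingball(v_k, d_k) \subseteq \ingball(v_k, d/4)$ and the latter has size at most $0.7|V| < 0.8|V|$ w.h.p.

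The main obstacle — really the only subtle point — is making sure the tail bound on $d_i$ is stated for the right parameter regime and that the ``with high probability in $n$'' quantifier is handled correctly: $n$ here is the global variable of \Cref{rem:nandc} (the size of the top-level graph), not $|V|$ of the current recursive instance, but since recursive instances only shrink, $|V| \le n$ and a bound that is high-probability in $n$ is also high-probability in the relevant sense; the union bound over $\ell \le |V| \le n$ indices only costs a factor of $n$, absorbed by increasing the constant in the success probability $\Theta(\log n / d)$. I would also double-check the edge case $d < 4$, where $d/4 < 1$ forces $\ingball(v_k, d_k)$ to possibly be larger; but in that regime $d_i = 0$ always would need checking against \cref{def:ge}, or alternatively the claim is vacuous/handled by the base case since $d$ is a positive integer and the interesting regime is $d$ large. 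Modulo these bookkeeping points, the argument is short: it is essentially ``small accumulated prefix $+$ one controlled ball.''
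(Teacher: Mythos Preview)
Your proof is correct and follows the same ``small prefix plus one controlled ball'' decomposition as the paper. One simplification you missed: the distribution in \cref{def:ge} is \emph{truncated} at $\lfloor d/4\rfloor$, so $d_i\le d/4$ holds deterministically for every $i$, which obviates your tail bound, the union bound over indices, and the $d<4$ edge-case worry (the only high-probability ingredient needed is \cref{cla:correctestimate}).
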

 
	\begin{proof}[Proof of~\Cref{cla:sizeofrecursive}]
		According to~\Cref{lem:correctnessLowsizedecomposition}, we just need to prove that with high probability in $n$, for any $v\in \Vout$, we have $|\ingball(v,d/4)|\le.7|V|$ and for any $v\in \Vin$, we have $|\outgball(v,d/4)|\le.7|V|$. This can be deduced by~\Cref{cla:correctestimate}.

	\end{proof}

        The following claim shows that $\Eref=E$ happens with a small probability.
        \begin{claim}\label{cla:nothappen}
            With high probability in $n$, line~\ref{LDDlineborder} is not executed.
        \end{claim}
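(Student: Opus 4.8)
The plan is to argue that, on the high-probability events guaranteed by \Cref{cla:correctestimate} and \Cref{cla:sizeofrecursive}, the control flow never reaches line~\ref{LDDlineborder} at all; a union bound over these (finitely many) events then gives the claim. The first step is to pin down when line~\ref{LDDlineborder} is even reachable: it is executed only if the Case~1 test failed, i.e.\ neither $\Ain$ nor $\Aout$ has size in $[.1|V|,.9|V|]$. On the event of \Cref{cla:sizeofrecursive} we have $|\Ain|,|\Aout|\le .9|V|$, so in this branch we must in fact have $|\Ain|<.1|V|$ and $|\Aout|<.1|V|$.

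Next I would feed this into the \emph{balanced} guarantee of $FindBalancedSet$ (the second property stated in \Cref{subsec:LDDoverview}, also recorded in \Cref{lem:correctnessLowsizedecomposition}): when $A_*$ is "too small" it must contain all of $V_*$, so $\Vin\subseteq\Ain$ and $\Vout\subseteq\Aout$. This has two payoffs. First, $|\Ain\cup\Aout|\le|\Ain|+|\Aout|<.2|V|$, so in particular $|\Ain\cup\Aout|<.5|V|$ and the second disjunct of the line~\ref{LDDlineborder} condition is false (this also shows $V\setminus(\Ain\cup\Aout)\neq\emptyset$, so the vertex $u$ is well defined). Second, $V\setminus(\Ain\cup\Aout)\subseteq V\setminus(\Vin\cup\Vout)=\Vheavy$, so both $u$ and every vertex we must account for is \emph{heavy}.

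Finally I would check the first disjunct is false too, i.e.\ that $V\setminus(\Ain\cup\Aout)\subseteq\ingball(u,d/2)\cap\outgball(u,d/2)$. Pick any $w$ in this set. Since $u$ and $w$ are heavy, \Cref{cla:correctestimate} gives $|\outgball(u,d/4)|>.5|V|$ and $|\ingball(w,d/4)|>.5|V|$, so these two balls intersect in some vertex $z$, and non-negativity of the weights gives $\dist_G(u,w)\le\dist_G(u,z)+\dist_G(z,w)\le d/2$; symmetrically, intersecting $\ingball(u,d/4)$ with $\outgball(w,d/4)$ gives $\dist_G(w,u)\le d/2$. Hence $w\in\ingball(u,d/2)\cap\outgball(u,d/2)$, so the condition on line~\ref{LDDlineborder} evaluates to false and the line is not executed. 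Together with the first paragraph this proves the claim.

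I do not anticipate a serious obstacle. The two places needing a little care are: reading off that reachability of line~\ref{LDDlineborder} combined with \Cref{cla:sizeofrecursive} forces the "both $A_*$ small" regime (rather than a large-$A_*$ regime), and applying the balanced property in the correct direction to conclude $\Vin\subseteq\Ain$ rather than the reverse inclusion. The remainder is the standard pigeonhole-plus-triangle-inequality argument that heavy vertices are pairwise within distance $d/2$, which is routine.
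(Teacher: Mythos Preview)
Your proposal is correct and follows essentially the same approach as the paper: condition on the high-probability events of \Cref{cla:correctestimate} and \Cref{cla:sizeofrecursive}, deduce via \Cref{lem:correctnessLowsizedecomposition} that both $A_*$ are small and hence $\Vin\subseteq\Ain$, $\Vout\subseteq\Aout$, and then use the pigeonhole/triangle-inequality argument on heavy vertices to show both disjuncts of the line~\ref{LDDlineborder} condition are false. This is exactly the reasoning the paper sketches in the Case~2 discussion of \Cref{subsec:LDDoverview} (and formalizes in \Cref{subsec:missingproofldd}).
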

        
        \begin{proof}[Proof of~\Cref{cla:nothappen}]
            If line~\ref{LDDlineborder} is executed, then both $\Ain,\Aout$ has size not between $.1|V|$ and $.9|V|$ (algorithm does not return in line~\ref{return1}) and either we have $V\backslash(\Ain\cup \Aout)\not\subseteq \ingball(u,d/2)\bigcap\outgball(u)$, or we have $|\Ain\cup \Aout|\ge.5|V|$. According to~\Cref{cla:sizeofrecursive}, with high probability in $n$, we have $|\Ain|,|\Aout|\le .1|V|$, which means $|\Ain\cup \Aout|<.5|V|$. According to~\Cref{lem:correctnessLowsizedecomposition} item (2), we have $\Vin\subseteq \Ain, \Vout\subseteq \Aout$. Thus, $V\backslash(\Ain\cup \Aout)\subseteq V\backslash(\Vin\cup \Vout)$. According to~\ref{cla:correctestimate}, with high probability in $n$, for any two vertices $u,v\in V\backslash(\Vout\cup \Vin)$, we know both $u$ and $v$ can reach and can be reached by at least $.5|V|$ vertices within distance $d/4$. Therefore, $u$ and $v$ can reach each other within distance $d/2$, which means $V\backslash(\Ain\cup \Aout)\subseteq \ingball(u,d/2)\bigcap\outgball(u)$.
        \end{proof}
	Now we are ready to compute $\Pr[e\in \Eref]$. Notice that each edge can be included in at most $1$ recursive call. The following inequality bounds the total probability of an edge being in $\Eref$. The first term is according to the induction hypothesis, the second term is the probability of being included in $\EremI ,\EremII $, the last term is the small failing probability of \Cref{cla:sizeofrecursive}, and the small failing probability of~\Cref{cla:nothappen}.
	\[\left(\frac{c^3w(e)\log\left(0.9|V|\right)\log n}{d}+\frac{0.9|V|}{n^9}\right)+\frac{2c^2w(e)\log n}{d}+\frac{1}{n^9}\le\frac{c^3w(e)\log(|V|)\log n}{d}+\frac{|V|}{n^9}\]
	
	for sufficiently large $c$. 
	
	Then we prove (1). Consider an SCC $C$ of the subgraph $E-\Eref$. If the algorithm returns by line~\ref{return1}, since $\incut(\Ain)\subseteq \Eref$, it must be the case that $C\subseteq \Ain$ or $C\subseteq V\backslash \Ain$. In both cases, $C$ is included in a recursive call and $C$ has small weak diameter in the induced subgraph according to the induction hypothesis, which also holds in the original graph. If the algorithm return by line~\ref{LDDlineborder}, then each SCC is a single node. If the algorithm returns by line~\ref{return3}, since $\incut(\Ain),\outcut(\Aout)\subseteq \Eref$, it must be the case that $C\subseteq \Ain$ or $C\subseteq \Aout\backslash \Ain$ or $C\subseteq V\backslash (\Ain\cup \Aout)\subseteq\Vheavy$. In both the first two cases, $C$ is included in a recursive call. In the third case, remember that each vertex $v\in \Vheavy$ has the property that $|\ingball(v,d/4)|> .5|V|$ and $|\outgball(v,d/4)|> .5|V|$. Thus, any two vertices in $\Vheavy$ have mutual distance at most $d/2$, $C$ has weak diameter at most $d$.
\end{proof}

In order to bound the number of oracle calls, we use the following lemma to bound the depth of recursion. 

\begin{lemma}\label{lem:depthofrecursion}
    The recursion depth of $LowDiameterDecomposition(G=(V,E),d)$ (Algorithm~\ref{alg:lowdiamterdecomposition}) is $O(\log |V|)$. Any recursion call is on an induced subgraph of $G$, and any two recursion calls in the same recursion layer are on vertex-disjoint induced subgraphs.
\end{lemma}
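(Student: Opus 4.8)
The plan is to read the recursion structure straight off Algorithm~\ref{alg:lowdiamterdecomposition} and prove the two halves of the statement separately: a depth bound that follows from a constant-factor size drop along every recursion edge, and a layer-wise disjointness fact that follows from a one-step disjointness fact by induction on the layer.

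First I would pin down \emph{where} the algorithm recurses and \emph{on what}. There are exactly two recursive sites. Case~1 (lines~\ref{recurse1}--\ref{recurse2}) is entered only when its guard holds, i.e.\ $0.1|V|\le|A_*|\le 0.9|V|$ for some $*\in\{in,out\}$; the two recursive calls are then on $G[A_*]$, which has at most $0.9|V|$ vertices, and on $G[V\setminus A_*]$, which has $|V|-|A_*|\le 0.9|V|$ vertices. Case~2 (lines~\ref{recurse3}--\ref{recurse4}) is reached only if the test on line~\ref{LDDlineborder} did \emph{not} fire, so in particular $|\Ain\cup\Aout|<0.5|V|$; hence the two recursive calls, on $G[\Ain]$ and $G[\Aout\setminus\Ain]$, are each on fewer than $0.5|V|$ vertices. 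Every other exit of the algorithm --- the empty-graph base case and the $\textbf{return } E$ on line~\ref{LDDlineborder} --- performs no recursion. Consequently, along any root-to-leaf path in the recursion tree the number of vertices is multiplied by at most $0.9$ at each step while (as already noted at the start of the proof of~\Cref{lem:LDDcorrectness}) strictly decreasing by at least one; after $O(\log|V|)$ steps the vertex set is empty and the recursion halts, giving the claimed $O(\log|V|)$ depth. I would emphasize that this argument is entirely deterministic and uses none of \Cref{cla:correctestimate}, \Cref{cla:sizeofrecursive}, or \Cref{cla:nothappen}.

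Next I would handle the structural claim. Every recursive call has the form $\LDD(G'[W],d)$ where $G'$ is the graph passed to the current call and $W$ is one of $A_*$, $V\setminus A_*$, $\Ain$, or $\Aout\setminus\Ain$, all subsets of $V(G')$; since an induced subgraph of an induced subgraph of $G$ is again an induced subgraph of $G$, every call at every depth runs on an induced subgraph of the original $G$. For disjointness I would induct on the layer index: the (at most two) children of a single invocation run on $G[A_*]$ and $G[V\setminus A_*]$ in Case~1, or on $G[\Ain]$ and $G[\Aout\setminus\Ain]$ in Case~2, and in both cases these are disjoint subsets of the parent's vertex set. So if the invocations in layer $k$ occupy pairwise vertex-disjoint subsets of $V$, splitting each such subset into (at most two) disjoint pieces keeps the layer-$(k+1)$ invocations pairwise vertex-disjoint; the base case is the single root call $\LDD(G,d)$.

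I do not anticipate a genuine obstacle here: the proof is essentially bookkeeping on the pseudocode. The only points that need care are (i) verifying that the Case~2 recursion really is gated by line~\ref{LDDlineborder}, since otherwise one could in principle recurse on a subgraph containing almost all of $V$ and the depth bound would collapse, and (ii) confirming that the branches returning $E$ (or $\emptyset$) do not themselves recurse, so that every node of the recursion tree that has children falls into one of the two cases analyzed above.
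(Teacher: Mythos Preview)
Your proof is correct and follows essentially the same approach as the paper's own proof: both arguments read the constant-factor size drop directly off the two recursive sites (using the Case~1 guard $0.1|V|\le|A_*|\le0.9|V|$ and the Case~2 gate $|\Ain\cup\Aout|<0.5|V|$ from line~\ref{LDDlineborder}) to get the $O(\log|V|)$ depth, and both observe that the two children of any invocation are on trivially disjoint vertex sets. Your write-up is simply more explicit about the bookkeeping (e.g.\ the induction on the layer index for disjointness, and the induced-subgraph-of-an-induced-subgraph observation), but there is no substantive difference in strategy.
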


\begin{proof}[Proof of \Cref{lem:depthofrecursion}]
    Consider an execution $LowDiameterDecomposition(G[V'],d)$. If the algorithm enters line~\ref{recurse1}, then we know $|A_*|\le .9|V'|$ and $|V\backslash A_*|\le .9|V'|$. If the algorithm enters line~\ref{recurse3}, then we konw that $|\Ain\cup \Aout|<.5|V'|$. In other words, the maximum number of vertices in all recursion calls in the next layer is at most $0.9$ times the size of the previous layer. Thus, recursion ends at the $O(\log n)$-th layer.
    
    One can verify that each LowDiameterDecomposition($G,d$) generates two recursive calls on subgraphs induced by either $A_*, V\backslash A_*$ or $\Ain, \Aout\backslash \Ain$, where each pair is trivially vertex disjoint. Thus, any two recursion calls in the same recursion layer are on vertex-disjoint induced subgraphs.

\end{proof}

The running time of~\Cref{lem:LDDcorrectness} is proved in~\Cref{sec:ldd-imp-pc}.

\subsection{Find Balanced Set}\label{sec:smallsize}
\begin{definition}[Truncated geometric distribution]\label{def:ge}
	We say $x$ follows the geometric distribution with parameter $p\in(0,1)$ truncated at $t\in\mathbb{N}$, denoted by $x\sim GE[p]_{\le l}$, if $x\in\mathbb{N},x \leq t$ and $\Pr[x=k]=(1-p)^{k}p\cdot \frac{1}{1-(1-p)^{t+1}}$.
\end{definition}

\begin{algorithm}
\caption{$A\leftarrow FindBalancedSet(G,V',d,*)$}\label{alg:smallsizedecomposition}
\KwData{Non-negative weighted directed graph $G=(V,E,w)$, a vertex set $V'\subseteq V$ and an integer $d$ satisfying $|\sgball(v,d/4)|\le .7|V|$ for any $v\in V'$.}
\KwResult{A set of vertices $A\subseteq V$ satisfying~\Cref{lem:correctnessLowsizedecomposition}.}
Suppose $V'=\{v_{1},v_{2},...,v_{\ell}\}$. Each vertex $v_{i}$ samples $d_i\sim GE[\min\left((c\log n)/d,1\right)]_{\le \lfloor d/4\rfloor}$ (see~\Cref{def:ge}) \;\label{SSline1}
Find the smallest $i\in[\ell]$ such that $\left|\cup_{j\le i}\sgball(v_j,d_j)\right|>0.1|V|$, denoted as $k$. If $k$ does not exist, i.e. $\left|\cup_{j\in[\ell]}\sgball(v_j,d_j)\right|\le 0.1|V|$, set $k=\ell$. (See~\Cref{rem:binarysearch} for implementation)\;
 \label{SSline2}

\Return $\cup_{j\le k}\sgball(v_{j},d_j)$, \;\label{SSline3}
\end{algorithm}

\begin{remark}\label{rem:binarysearch}
    Line~\ref{SSline2} can be implemented in the following way by calling $O(\log n)$ times of $\Oracle$. Note that the function $f(i)=\left|\cup_{j\le i}\sgball(v_j,d_j)\right|$ is an increasing function. To find the smallest $i\in[\ell]$ such that $f(i)>0.1|V|$, we can use binary search, which requires $O(\log n)$ queries to the value of $f(i)$. To compute $f(i)$, for simplicity, we assume the input graph is $A_{in}$. Let $G'=(V\cup\{s\},E\cup\{(v_j,s)\mid j\le i\},w\cup w')$ where $w'((v_j,s))=d-d_j$, then we call $\Oracle(G',s)$ to compute $f(i)$ and one can verify that $f(i)=|\ingball(s,d)|-1$. 
\end{remark}
\begin{lemma}\label{lem:correctnessLowsizedecomposition}
	If the input of Algorithm~\ref{alg:smallsizedecomposition} satisfies $|\sgball(v,d/4)|\le .7|V|$ for any $v\in V'$, then the outputs satisfy
	\begin{enumerate}[itemindent=1em, label=(\ref{lem:correctnessLowsizedecomposition}\alph*)]
		\item for any $e\in E$, we have $\Pr[e\in \Eref]\le\frac{c^2w(e)\log n}{d}$,\label{lem:correctnessldditem1}
		\item either $|A|>0.1|V|$, or $V'\subseteq A$,\label{lem:correctnessldditem2}
		\item $|A|\le .9|V|$,\label{lem:correctnessldditem3}
	\end{enumerate}
\end{lemma}
\begin{proof}[Proof of~\Cref{lem:correctnessLowsizedecomposition}]

	We first prove~\ref{lem:correctnessldditem1}. If $p=1$, then $d<c\log n$, which means $\frac{cw(e)\log n}{d}\ge 1$ as long as $w(e)>0$. In this case, any edge $e$ with non-zero weight satisfies~\ref{lem:correctnessldditem1}. Also notice that zero-weight edges will never be added to $\Eref$. In the following arguments we only consider the case when $p<1$.
	
	For each node $u\in V$, let $I_u$ denote the smallest $i$ such that $u\in B^+_{d_i}(v_{i})$. If such $i$ does not exist, let $I_u=\ell+1$. Consider an edge $e=(u,v)$. If $e\in \Eref$, we first argue that $I_u<I_v$: according to the algorithm description, there must exists $k\in[\ell]$ such that $u\in \cup_{j\le k}B^+_{d_j}(v_j)$, in which case we have $I_u\le k$ and $I_v>k$. Thus, $I_u<I_v$ must hold. Now we focus on bounding the probability of $I_u<I_v$. 
	
	Denote event $A_i$ as $d_i-w(e)<dist(v_i,u)\le d_i$ We have
	
	\[\Pr[I_u<I_v]\le \sum_{i\in[\ell]}\Pr[A_i,I_u=i]=\sum_{i\in[\ell],\Pr[I_u=i]\not=0}\Pr[I_u=i]\cdot \Pr[A_i\mid I_u=i].\]
	Explanation:  If $I_u<I_v\le \ell+1$, then $I_u=i$ must happen for some $i\in[\ell]$, which means $dist(v_i,u)\le d_i$. If $dist(v_i,u)\le d_i-w(e)$, since $(u,v)$ is an edge with weight $w(e)$, we have $dist(v_i,v)\le d_i$, contradicting the fact that $I_u<I_v$. Thus, $A_i$ must happen. 
	
	For each $i$ with $dist(v_i,u)\ge d/8$, we claim that $\Pr[I_u=i]\le\frac{1}{n^9}$. That is because $I_u=i$ implies $d_i\ge dist(v_i,u)\ge d/8$. Remember that $d_i\sim GE[(c\log n)/d]_{\le d/4}$, which means $\Pr[d_i\ge d/8]\le\frac{1}{n^9}$, for sufficiently large $c$. Therefore, we can write
	
	\[\Pr[I_u<I_v]\le \sum_{i\in[\ell],\Pr[I_u=i]\not=0,dist(v_i,u)<d/8}\Pr[I_u=i]\cdot \Pr[A_i\mid I_u=i]+\frac{1}{n^8}.\]
	
	In what follows, we will prove $\Pr[A_i\mid I_u=i]\le 2pw(e)$ for any $i\in[\ell],\Pr[I_u=i]\not=0,dist(v_i,u)<d/8$.
	
	Let $S$ contain all the tuples $\mathbf{d}=(d'_{1},d'_{2},...,d'_{i-1})$ such that $\Pr[d_1=d'_1,d_2=d'_2,...,d_{i-1}=d'_{i-1}]\not=0$ and $u\not\in \cup_{j\le i-1}B^+_{d'_{j}}(v_j)$. One can see that if $I_u=i$ happens, then $d_{1},...,d_{i-1}$ must get value $\mathbf{d}$ for some $\mathbf{d}\in S$. Denote this event as $D_\mathbf{d}$. Thus, we have
	
	\[\Pr[A_i\mid I_u=i]=\sum_{\mathbf{d}\in S}\Pr[D_\mathbf{d},A_i\mid I_u=i]=\sum_{\mathbf{d}\in S}\Pr[D_{\mathbf{d}}\mid I_u=i]\cdot \Pr\left[A_i\mid I_u=i, D_\mathbf{d}\right].\]
	
	We will bound the probability $\Pr\left[A_i\mid I_u=i, D_\mathbf{d}\right]$ for any $\mathbf{d}$ and for any $i\in[\ell],\Pr[I_u=i]\not=0,dist(v_i,u)<d/8$. Notice that event $I_u=i,D_\mathbf{d}$ is equivalent to event $dist(v_i,u)\le d_i,D_\mathbf{d}$. Thus, we get
	\[\Pr\left[A_i\mid I_u=i, D_\mathbf{d}\right]=\Pr\left[A_i\mid dist(v_i,u)\le d_i, D_\mathbf{d}\right]=\frac{\Pr\left[d_i-w(e)<dist(v_i,u)\le d_i\mid D_\mathbf{d}\right]}{\Pr\left[dist(v_i,u)\le d_i\mid D_\mathbf{d}\right]}.\]
	
	Notice that $D_{\mathbf{d}}$ is independent of the random variable $d_i$, thus, the above term equals to

\begin{align*}
        \frac{\Pr\left[d_i-w(e)<dist(v_i,u)\le d_i\right]}{\Pr\left[dist(v_i,u)\le d_i\right]}.
\end{align*}

	Remember that $i$ is an index such that $\Pr[I_u=i]\not=0$, which means $dist(v_i,u)\le \lfloor d/4\rfloor$ (otherwise, since $d_i$ is a geometric distribution truncated at $\lfloor d/4\rfloor$, it is impossible that $I_u=i$). Thus, the above term is at most
 \begin{align*}
     \frac{\sum_{k\in[dist(v_i,u),dist(v_i,u)+w(e)]}\Pr[d_i=k]}{\sum_{k\ge dist(v_i,u)}\Pr[d_i=k]}&\le\frac{w(e)\cdot (1-p)^{dist(v_i,u)}\cdot p}{(1-p)^{dist(v_i,u)}-(1-p)^{\lfloor d/4\rfloor+1}}\\
	&=\frac{w(e)p}{1-(1-p)^{\lfloor d/4\rfloor+1-dist(v_i,u)}}\le 2w(e)p.
 \end{align*}

	By combining everything together, we get
	
	\[\Pr[(u,v)\in \Eref]\le\Pr[I_u<I_v]\le 2w(e)p+\frac{1}{n^8}\le 3w(e)p\le \frac{c^2w(e)\log n}{d}.\]

	Then we prove \ref{lem:correctnessldditem2}. Recall that $k$ is the smallest $i\in[\ell]$ such that $\left|\cup_{j\le i}\sgball(v_j,d_j)\right|>0.1|V|$ if such $i$ exists, in which case we have $|A|>0.1|V|$; or we set $k=\ell$, in which case we have $V'\subseteq A$.
	
	Then we prove \ref{lem:correctnessldditem3}. We can see that $d_i\le d$ holds for any $i\in[\ell]$. According to the input guarantee, we have $|B^+_{d_i}(v_i)|\le.7|V|$ for any $i\in[\ell]$. Since we have $\left|\cup_{j< k}\sgball(v_j,d_j)\right|\le 0.1|V|$ and $|B^+_{d_k}(v_k)|\le .7|V|$, we have $|A|\le 0.1|V|+0.7|V|\le.9|V|$.
    \end{proof}

    \subsection{Implementation in Various Models (Proof of~\Cref{lem:LDDcorrectness})}
    \label{sec:ldd-imp-pc}

In this section, we will give the implementation details of Algorithm~\ref{alg:lowdiamterdecomposition}. We start with the implementation of the PRAM model.
    
    \subsubsection{Proof of \Cref{lem:LDDcorrectness} Parallel Running Time} 
    
    In Phase 1, we sample $\tilde{O}(1)$ nodes and call $\Oracle$ for sampling nodes. In Phase 2, $FindBalancedSet$ calls $\Oracle(G, s)$ $\tilde{O}(1)$ times. Finally, we will recurse the whole process on the subproblem. Based on \Cref{lem:depthofrecursion}, in each level of recursion, each node will be only in one subproblem and the recursion depth is at most $\tilde{O}(1)$. Combining them gives us \Cref{lem:LDDcorrectness}. 

    \subsubsection{Proof of \Cref{lem:LDDcorrectness} \congest{} Running Time}   
    \paragraph*{Definition of $\Oraclem$. }There is a challenge when we call $\Oracle$ in \congest{} model. For example,~\Cref{rem:binarysearch} is one of the places where we call $\Oracle$, but one can notice that we call $\Oracle$ on a graph with a super source $s$ connecting to many vertices in our communication network. Running SSSP in this new graph cannot be trivially simulated by \congest{} algorithm in the original network, since the new edges cannot transfer information in the original graph. Thus, we define the following oracle which allows the SSSP to start with a super source.
    
    \begin{definition}\label{def:oraclem}
        The oracle $\Oraclem(G,S,x)$ has inputs (i) $G=(V,E,w)$ is a directed graph with  polynomially bounded weighted function $w:E\to\mathbb{N}$, (ii) $S\subseteq V$ specifies the vertices that the super source is connected to, (iii) $x:S\to\mathbb{N}$ specifies the weight of edges from the super source to each vertex in $S$. $\Oraclem(G,S,x)$ returns a distance vector $(d_v)_{v\in V}$ defined as follows. Let $G'=(V\cup\{s\},E\cup\{(s,v)\mid v\in S\}, w\cup w')$ where $w'((s,v))=x(v)$ for each $v\in S$, then $d_v=dist_{G'}(s,v)$. 
    \end{definition}
    
    The following theorem is from Rozhoň \etal~\cite{abs-2210-16351}.

    \begin{theorem}[Collary 1.9 of Rozhoň \etal~\cite{abs-2210-16351}]\label{thm:distributedNNSSSPalgorithm}
        There exists a randomized distributed algorithm that solves SSSP with non-negative polynomially bounded integer edge weight 
    in $\tOh{n^{1/2}+D+n^{2/5+o(1)}D^{2/5}}$ rounds, where D denotes the undirected hop-diameter, and works with high probability.
    \end{theorem}
    Using \Cref{thm:distributedNNSSSPalgorithm}, we can answer $\Oraclem(G, S, x)$ in $\tOh{n^{1/2}+D+n^{2/5+o(1)}D^{2/5}}$ rounds. The only difficulty comes from the fact we have a virtual source for $\Oraclem$. In Rozhoň \etal~\cite{abs-2210-16351}, first, they construct a new graph $\hat{G}$ by adding another virtual source to the input graph. Then they reduce the exact SSSP to approximate SSSP on the graph with one virtual source. Note that in our case, we add the virtual source to graph $G'$, where $G'$ is the graph defined in \Cref{def:oraclem} and already has one virtual source. Fortunately, after adding another virtual source to $G'$, our $\hat{G}$ can have only one virtual source by combining edges going through $s$. The exact SSSP is still reduced to the approximate SSSP on the graph with one virtual source. This gives us the following theorem,
    \begin{theorem}\label{thm:distributedNNSSSP}
        There exists a randomized distributed algorithm answering $\Oraclem$ in \\ $\tOh{n^{1/2}+D+n^{2/5+o(1)}D^{2/5}}$ rounds, where D denotes the undirected hop-diameter, and works with high probability.
    \end{theorem}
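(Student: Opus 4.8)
The plan is to answer $\Oraclem(G,S,x)$ by a single black-box call to the non-negative-weight SSSP algorithm of Rozhoň \etal~(\cref{thm:distributedNNSSSPalgorithm}), with care taken for the fact that the graph on which we need distances, namely $G'=(V\cup\{s\},E\cup\{(s,v)\mid v\in S\},w\cup w')$ from \cref{def:oraclem}, already carries a virtual source $s$ that is not part of the communication network. First I would recall the structure of the algorithm of~\cite{abs-2210-16351}: to compute exact distances it builds an auxiliary graph $\hat G$ by attaching one additional virtual source to the input graph and then reduces the exact problem to an approximate shortest-path computation on $\hat G$; the point that makes its \congest simulation valid is that this virtual source is incident only to genuine network vertices, so its behaviour can be emulated by those vertices.

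Next I would apply this machinery not to $G$ but to $G'$, noting that $w'((s,v))=x(v)\ge 0$ since $x$ is $\mathbb{N}$-valued, so $G'$ has non-negative polynomially bounded weights as required. Running their reduction on $G'$ would a priori create $\hat G$ with two virtual sources: the original $s$, which in $G'$ has only outgoing edges (to the vertices of $S$) and no incoming edges, and the newly added source $s'$. The key step is to collapse these into one: each edge from $s'$ into $s$ of weight $a$ composes with each edge $(s,v)$ of weight $x(v)$ to give a direct edge $(s',v)$ of weight $a+x(v)$ (still polynomially bounded and non-negative); after performing all such compositions, $s$ has no remaining incident edges and is deleted. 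The resulting $\hat G$ then has a single virtual source $s'$ whose neighbours all lie in $V$, and whose non-source part is exactly the network graph $G$ — precisely the input format handled by~\cite{abs-2210-16351}. Its \congest simulation therefore computes $dist_{\hat G}(s',v)$ for all $v\in V$, and by the collapse construction $dist_{\hat G}(s',v)=dist_{G'}(s,v)=d_v$, which is the required output of $\Oraclem$.

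Finally I would observe that the round complexity is unchanged: we only added $O(n)$ virtual-source edges and left the communication network — hence its undirected hop-diameter $D$ and its vertex count $n$ (up to an additive constant) — untouched, so \cref{thm:distributedNNSSSPalgorithm} yields the stated bound $\tOh{n^{1/2}+D+n^{2/5+o(1)}D^{2/5}}$. The only part of the argument that is not purely mechanical, and hence the main obstacle, is verifying that the exact-to-approximate reduction of~\cite{abs-2210-16351} truly relies only on the virtual source being adjacent to real vertices, so that replacing their "$s'$ attached to $G$" with "$s'$ attached to $G'$ then collapsed" preserves the correctness of their reduction; once this compatibility is confirmed, the theorem follows by direct reuse of their result.
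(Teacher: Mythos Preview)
Your proposal is correct and follows essentially the same approach as the paper: both observe that applying the Rozho\v{n} \etal\ reduction to $G'$ produces a graph with two virtual sources, and both resolve this by collapsing paths through the first virtual source $s$ so that only a single virtual source remains, at which point the reduction of~\cite{abs-2210-16351} applies directly. Your write-up is in fact more explicit than the paper's (which only sketches the ``combining edges going through $s$'' step), and you correctly flag the one non-mechanical point, namely that the exact-to-approximate reduction depends only on the virtual source being adjacent to genuine network vertices.
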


    For \congest{} model, we will prove the following corollary, which reveals the \DM{} implementation of Algorithm~\ref{alg:lowdiamterdecomposition}.
    
    \begin{corollary};\label{cor:lddcongestminor}
        There is an algorithm given a directed polynomially boundeded positive integer weighted graph $G=(V,E,w)$, that computes a low diameter decomposition (as described in~\Cref{lem:LDDcorrectness}) within $\tOh{1}$ rounds in the \DM , $\tOh{1}$ rounds in the \congest model and $\tOh{1}$ times of $\Oraclem$ calls.
    \end{corollary}
    \begin{proof}
        
        The algorithm contains $O(\log n)$ recursive layers. We will consider implementing recursive instances in each layer simultaneously, in $\tOh{1}$ steps in \DM{} and $\tOh{1}$ times of $\Oraclem$ calls. To avoid confusion, we always use $G$ to denote the original graph that we want to do low diameter decomposition on. Before each recursive layer, we assume we know the inputs for these recursive calls are induced subgraphs $G[V_1],...,G[V_{\ell}]$, and at the end of the recursive layer, the input to the next recursive layer $G[V_1],...,G[V_{\ell'}]$ is computed.
        
        We first consider Phase 1. Suppose in this layer, the recursive instances are on vertex sets $V_1,V_2,...,V_{\ell}$. We first contract all edges $(u,v)$ with both end points in the same set $u,v\in V_i$ (we will guarantee later that each vertex set $V_i$ is connected). In this way, each $V_i$ becomes a node in the minor graph. To sample $\lceil c\log n\rceil$ vertices, each vertex uniformly at random samples an integer in $[n^9]$, then use binary search to find the threshold $t$ where there are exactly $\lceil c\log n\rceil$ vertices that have sampled integer greater than $k$. Counting the number of vertices that have sampled integers greater than $k$ can be done by one aggregation step in each $V_i$. It might be the case that there is no such threshold $k$ (several vertices get the same sample integer), in which case we stop the whole algorithm and output $E$ as $\Erem$ (such an event happens with probability at most $1/n^8$, so it will not affect the final probability too much). Now each vertex knows whether it is in $S$ or not. Other lines of phase 1 can be done by $O(1)$ times of calling to $\Oraclem$ inside each $G[V_i]$ (remember that $G[V_i]$ is the input graph for one recursive call, and other lines of Phase 1 requires only finding balls using $\Oraclem$ calls). We claim that one $\Oraclem$ on 
    $G$ suffices to simulate $\ell$ calls to $\Oraclem$ on each of $G[V_i]$: set the 
     weight of edges inside each $G[V_i]$ to be the corresponding weight of edges determined by $\Oraclem$, other edges to have infinite (large enough) weight, and the source set is the combination of all the source set in each $V_i$. In this way, any path crossing different $V_i$ cannot be the shortest path, thus, the distances are correctly computed for each $G[V_i]$. 
    
        Then we consider Phase 2. First, we need to show the implementation of FindBalancedSet (Algorithm~\ref{alg:smallsizedecomposition}). Similar to the implementation above, for each $V_i$, we can get the arbitrary order $v_1,...,v_\ell$ by sampling from $[n^9]$ for each vertex, and the set $\{v_1,v_2,...,v_i\}$ can be found by binary search the threshold $k$ where there are exactly $i$ vertices that has sampled integer greater than $k$. Another part of FindBalancedSet is $O(\log n)$ calls to $\Oraclem$ inside each $V_i$. We already showed how to implement this by $\tOh{1}$ times of $\Oraclem$ calling on $G$. After two calls to FindBalancedSet, each node knows whether it is in $\Ain,\Aout$ or not.
    
        For case 1, we first need to count the size $|\Ain|,|\Aout|$, which can be done by one aggregation step: contracting all edges with both endpoints inside $\Ain$. To do the recursive call, i.e., to enter the next layer of the recursion, we need to find the recursive vertex sets $V_1,...,V_{\ell'}$ in the next recursive layer. we achieve this by contracting edges inside induced subgraph $G[A_*], G[V\backslash A_*]$. Notice that after contracting these edges, there are not necessarily two connected components in the next recursive layer: $V\backslash A_*$ could be unconnected, which means several instances will be created. However, this does not affect the outcome of our algorithm, since each edge in $\Eref$ is included in one of the recursive instances or $\Eref_{*}$, and each SCC is completely included in one of the recursive instances.
    
        For "clean up", the arbitrary vertex $u$ can be picked by one aggregation step, $\ingball(u,d/2),\outgball(u,d/2)$ can be found by one $\Oraclem$ call on $G$, other vertex sets sizes computation can be done by aggregation steps.
    
        For case 2, to do the recursive call, we will contract edges inside induced subgraph $G[\Ain], G[\Aout\backslash \Ain]$. The same problem happens: $\Aout \backslash \Ain$ does not necessarily be connected. We can recurse on each connected component of $\Aout\backslash \Ain$ which will not affect the output, as we have already argued above.
    \end{proof}

    By combining~\Cref{cor:lddcongestminor,thm:minoraggregation}, we get~\Cref{lem:LDDcorrectness} \congest{} running time.
    
    By combining~\Cref{thm:distributedNNSSSP}, we can get the following corollary.
    \begin{corollary}\label{cor:lddcongest}
        There is a \congest algorithm given a directed graph $G=(V,E,w)$ with polynomially bounded non-negative weights, computes a low diameter decomposition (as described in~\Cref{lem:LDDcorrectness}) within $\tOh{n^{1/2}+D+n^{2/5+o(1)}D^{2/5}}$ rounds. 
    \end{corollary}

    \subsubsection{Proof of~\Cref{lem:LDDcorrectness} Quantum Query Running time}\label{subsubsec:lddquantum} The output of $LowDiameterDecomposition$ is $E^{rem}$, which can have size $m$, too large for quantum query model. Thus, instead of returning $E^{rem}$, the algorithm will return a partition of $V$ denoted as $V_1,V_2,...,V_\ell$, and $E^{rem}$ contains all the edges $(u,v)$ where $u\in V_i,v\in V_j$ for some $i<j$. We need to re-write~\Cref{lem:LDDcorrectness} as follows.

    \begin{restatable}[Low-Diameter Decomposition in quantum query model]{lemma}{lddl}\label{lem:LDDquantum}

Let $G=(V,E,w)$ be a directed graph with a polynomially bounded weight function $w:E\to\mathbb{N}$ and let $d$ be a positive integer. There
 exists a quantum query algorithm $\LDD(G,d)$ with following guarantees:
    \begin{itemize}
        \item INPUT: An $n$-node $m$-edge, graph $G = (V, E, w)$ with non-negative integer edge weight and a positive integer $d$.
        \item OUTPUT: (proved in~\Cref{subsec:LDDalg}) a partition of $V$ denoted as $V_1,...,V_\ell$.
    satisfying:
        \begin{itemize}
            \item Each $V_i$ has weak diameter at most $d$ in $G$, i.e. if $u,v$ are two vertices in $V_i$, then $dist_G(u,v)\le d$ and $dist_G(v,u)\le d$.
            \item For any $(u,v)\in E$, we have $\Pr[u\in V_i,v\in V_j,i<j]=O\left(\frac{w(e)\log^2n}{d}+\frac{1}{n^8}\right)$.
        \end{itemize}
        \item RUNNING TIME: Assuming there exists a quantum edge query algorithm answering SSSP in $Q(m,n)$ rounds, then $\LDD(G,d)$ takes $\tilde{O}(Q(m,n))$ queries with high probability.
    \end{itemize}
\end{restatable}

    We need to show how to do phases 1, 2, ''clean up'' phase, and $FindBalancedSet$ in the quantum query model. 
    
    Phase 1 makes $O(\log n)$ calls to $\Oracle$. After these calls, we can get $\ingball(v,d/4) \bigcap S$ and $\outgball(v,d/4) \bigcap S$. After that, other operations do not need to do any query to the graph. After Phase 1, we successfully identified two sets $\Vin,\Vout$.

    Phase 2 first makes two calls to $FindBalancedSet$ (which we will elaborate on later) to get $\Ain,\Aout$, and then makes recursive calls to $LowDiameterDecomposition$ on two disjoint vertex sets $A_*,V\backslash A_*$. Then will return a vertex partition on $A_*$ and $V\backslash A_*$ separately, which combine together to a vertex partition of $V$. The returned partition of $V$ follows the following order (i) we put the partition of $A_*$ before $V\backslash A_*$ if $*=out$, (ii) we put the partition of $V\backslash A_*$ before $A_*$ if $*=in$. It is easy to see that the edges connected former sets to larger sets is a subset of $E^{rem}_*\cup E^{rem}_1\cup E^{rem}_2$.

    Phase ''Clean up'' uses one call to $\Oracle$ and get two vertex sets $\ingball(u,d/2),\outgball(u,d/2)$. If it enters Line \ref{LDDlineborder}, then instead of returning $E$ we simply return a partition of $V$ where each set of this partition is a single node of $V$. Then it makes two calls to $LowDiameterDecomposition$. Instead of returning $E^{rem}_{in}\cup E^{rem}_{in}\cup E^{rem}_1\cup E^{rem}_2$, we return the partition of $\Aout\backslash \Ain$, followed by $V\backslash (\Ain\cup\Aout)$, then the partition for $\Ain$. Edges connecting former sets to later sets are subsets of $E^{rem}_{in}\cup E^{rem}_{in}\cup E^{rem}_1\cup E^{rem}_2$.

    Now we consider $FindBalancedSet$. After each node sampled $d_i$, the algorithm uses binary search where it makes in total $O(\log n)$ calls to $\Oracle$ \footnote{although in the algorithm it is multisource, this can be done easily in quantum query model by adding a dummy source connecting to all the sources $v_j$ with an edge with weight $Max-d_j$ where $Max$ is a sufficiently large constant, and find all nodes with distance $Max$ to the single source}. Then we can return the set $\cup_{j\le k}\sgball(v_{j},d_j)$.

    The running time is bounded by~\Cref{lem:depthofrecursion}, where it says the depth of the recursion is $O(\log n)$, where the size of the calls in each layer sum up to a single original graph. It makes $\tOh{1}$ calls to $\Oracle$.
    
    For correctness, it is easy to see that edges connecting former sets to later sets (edges $(u,v)$ where $u\in V_i,v\in V_j$ with $i<j$) is a subset of the edges set returned by Algorithm~\ref{alg:lowdiamterdecomposition}. Thus, $\Pr[u\in V_i,v\in V_j,i<j]=O\left(\frac{w(e)\log^2n}{d}+\frac{1}{n^8}\right)$ follows from the correctness of Algorithm~\ref{alg:lowdiamterdecomposition}. Moreover, every $V_i$ has weak diameter at most $d$ because each $V_i$ can either be a singleton (returned by Line~\ref{LDDlineborder}), or the set $V\backslash (\Ain\cup \Aout)$ (returned by Line~\ref{return3}), in which case the diameter is bounded by $d$.
\subsection{Open Problems}
\label{subsec:lddopenproblems}
There are three directions in which we see our result may be improved or extended. First, our decomposition only gives a so-called weak-diameter guarantee where the diameter is measured according to the distances in the input graph instead of distances within the cluster. Second, our cutting probability is an $O(\log n)$-factor higher compared to what one can get in undirected graphs, and it would be nice to cut each edge with probability roughly $O(\frac{\log n}{d})$ instead of $O(\frac{\log^2 n}{d})$. Third, it would be nice to get a deterministic variant of the low-diameter decomposition where the total number of edges that are cut is small.

\section{Topological Sort with respect to SCC}
\label{sec:scc-top}

A subroutine used by by \cite{bernstein2022negative} for sequential negative-weight SSSP computes strongly connected components (SCCs), outputting them in a topological order.
While there are known efficient parallel algorithms for this \cite{schudy2008finding} which reduces the problem to non-negative-weight SSSP, and while our quantum implementation of negative-weight SSSP does not need to use such a subroutine, there has been a gap in the \congest model where we do need to call such a subroutine in our negative-weight SSSP algorithm.
We adapt the parallel framework of Schudy~\cite{schudy2008finding} to the \DM (which then compiles down to \congest) in this section, showing \Cref{lem:scc-topsort-congest}.

\subsection{Overview of Schudy's Framework}
Schudy's framework finds SCCs using $O(\log^2 n)$ calls to $\Oraclem$ (see \Cref{def:oraclem}, where it is first used for \LDD).

At a high level, the framework is based on that of Coppersmith \etal~\cite{coppersmith2003divide}; pick a random vertex $v_p$, which identifies an SCC (all vertices that reach and can be reached from $v_p$), and three topologically orderable recursive instances (the remaining vertices which (i) can reach $v_p$, (ii) are reachable from $v_p$, (iii) can neither reach nor are reachable from $v_p$).
The snag here is that the recursive instances may be large, and consequently the algorithm is inefficient in the worst case.
Schudy~\cite{schudy2008finding} fixes this by employing a more careful selection of recursive instances which shrink by a constant multiplicative factor (hence the algorithm has a logarithmic recursion depth) and still satisfy topological orderability (hence the algorithm is correct).

Algorithm~\ref{alg:scc-top} gives a model-independent overview of the framework, with the addition of a labelling $(r_v)_{v \in V}$ which identifies the SCC to which the vertices belong (\cite{schudy2008finding} instead outputs the SCCs as a topologically sorted list).
To help with assigning a valid labelling, our algorithm takes in two more arguments, $\ell$ and $N$.
Intuitively, $\ell$ and $N$ are used to define a valid range of labels that may be assigned to the SCCs of a recursive instance.
$\ell$ is roughly the total number of vertices that are in preceding recursive instances, and $N$ is the number of vertices in $G$, the graph in the top-level recursive instance.
These together define the range from $\ell$ to $\ell$ plus the number of vertices in the recursive instances, dilated by a factor of $N^2$ so that different SCCs get different labels with high probability.

\begin{algorithm}
    \newcommand{\nextnr}{\stepcounter{AlgoLine}\ShowLn}
    \caption{$(r_v)_{v\in V}\leftarrow$ \SccTop$(G=(V,E), \ell, N)$}
    \label{alg:scc-top}
    \KwData{
        \begin{itemize}
            \item Directed Graph $G=(V,E)$.
                Internally, we treat $G$ as a weighted graph with edge weights all $0$.
                Accordingly, $\outgball(v,0)$ contains all vertices that $v$ can reach and $\ingball(v,0)$ contains all vertices that can reach $v$.
            \item Integers $\ell, N$, which bound $r_v$.
                Think of $N$ as the number of vertices of the graph at the top layer of recursion.
        \end{itemize}
    }
    \KwResult{Ordering $(r_v)_{v\in V}$ of vertices such that
        \begin{itemize}
            \item with high probability, $r_u = r_v$ if and only iff $u$ and $v$ are in the same SCC;
            \item $r_u > r_v$ when $(u,v) \in E$ and $u$ and $v$ belong to different SCCs;
            \item $\ell \cdot N^2 \le r_v \le (\ell + n - 1) \cdot N^2$.
        \end{itemize}
    }
    Let $v_1,v_2,...,v_n$ be a uniformly random permutation of all vertices in $V$\;\label{ustl1}
    
    For $i\in[n]$, define $S_i=\cup_{j\le i}\outgball(v_j,0)$, i.e., all the vertices that are reachable from $v_1,...,v_i$.
    Let $p\in[n]$ be the smallest index such that the induced subgraph $G[S_p]=(S_p,E_p)$ satisfies $|S_p|+|E_p|\ge\frac{n+m}{2}$ (we can efficiently find $p$ using a binary search, where each iteration makes one call to $\Oraclem$ with the virtual source attached to $v_1, v_2, \ldots, v_i$ by $0$ weight edges)\;\label{ustl2}
    
    \BlankLine

    Let $A\leftarrow S_{p-1}$, $B\leftarrow\outgball(v_p,0)$, $C\leftarrow \ingball(v_p,0)\cap\outgball(v_p,0)$\;\label{ustl3}

    Let $R_1 \gets V\backslash (A\cup B)$, $R_2 \gets A \setminus B$, $R_3 \gets C$, $R_4 \gets B \setminus (A \cup C)$, $R_5 \gets (A \cap B) \setminus C$\;\label{ustl5}
    
    \tcp*[h]{\color{blue} Base case.}
    
    If $|C| = |V|$, let $r$ be an integer drawn uniformly at random from $[\ell \cdot N^2, (\ell + n - 1) \cdot N^2]$ and set $(r_v)_{v \in V} \gets r$. \textbf{Return} $(r_v)_{v \in V}$\;\label{ustl4}

    \tcp*[h]{\color{blue} Recursion.}
    
    Let $(r_v)_{v\in R_1}\leftarrow \SccTop(G[R_1], \ell + \sum_{j=2}^{5}|R_j|, N)$\; \label{ustl6}

    Let $(r_v)_{v\in R_2}\leftarrow \SccTop(G[R_2], \ell + \sum_{j=3}^{5}|R_j|, N)$\; \label{ustl7}

    Let $(r_v)_{v\in R_3}\leftarrow \SccTop(G[R_3], \ell + \sum_{j=4}^{5}|R_j|, N)$\; \label{ustl8}

    Let $(r_v)_{v\in R_4}\leftarrow \SccTop(G[R_4], \ell + |R_5|, N)$\; \label{ustl9}

    Let $(r_v)_{v\in R_5}\leftarrow \SccTop(G[R_5], \ell, N)$\; \label{ustl10}
    
    \BlankLine
    \textbf{Return} $(r_v)_{v \in V}$\; \label{ustl11}
\end{algorithm}

Before we show an efficient implementation of Algorithm~\ref{alg:scc-top} in the \DM, we need two results from \cite{schudy2008finding}.
The first asserts its correctness, and the second asserts its efficiency.

\begin{proposition}[Paraphrasing Claim~5 and Lemma~7 in \cite{schudy2008finding}]
\label{prop:schudy-edges}
    $R_1, R_2, R_3, R_4, R_5$ (discovered in Algorithm~\ref{alg:scc-top}) partition $V$, and there are no edges going from one set to a lower numbered set.
\end{proposition}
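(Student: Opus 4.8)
The plan is to establish the two assertions of Proposition~\ref{prop:schudy-edges} separately, each by a short reachability case analysis off the definitions $A=S_{p-1}$, $B=\outgball(v_p,0)$, $C=\ingball(v_p,0)\cap\outgball(v_p,0)$ and $R_1=V\setminus(A\cup B)$, $R_2=A\setminus B$, $R_3=C$, $R_4=B\setminus(A\cup C)$, $R_5=(A\cap B)\setminus C$.

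For the partition, I would first record the trivial containments $C\subseteq B$ and $R_2\subseteq A\setminus B$, $R_3=C\subseteq B$, $R_4\subseteq B\setminus A$, $R_5\subseteq A\cap B$, $R_1\subseteq V\setminus(A\cup B)$. Then, given an arbitrary $v\in V$, I case-split on whether $v\in B$: if $v\notin B$ then $v\notin C$ and $v$ lies in exactly one of $R_1$ (if $v\notin A$) or $R_2$ (if $v\in A$); if $v\in B$ then $v$ lies in $R_3$ if $v\in C$, in $R_4$ if $v\notin C$ and $v\notin A$, and in $R_5$ if $v\notin C$ and $v\in A$. These outcomes are exhaustive and pairwise disjoint, so $\{R_1,\dots,R_5\}$ partitions $V$.

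The edge property I would derive from three structural facts. First, $A=S_{p-1}=\bigcup_{j\le p-1}\outgball(v_j,0)$ is closed under taking out-neighbours (if $v_j$ reaches $u$ and $(u,w)\in E$, then $v_j$ reaches $w$), so $E$ contains no edge from $A$ to $V\setminus A$; this is vacuous when $p=1$, where $A=\emptyset$. Second, and identically, $B=\outgball(v_p,0)$ is closed under out-neighbours, so $E$ contains no edge leaving $B$. Third, $E$ contains no edge from $B\setminus C$ into $C$: were $(u,w)\in E$ with $u\in B\setminus C$ and $w\in C$, then $v_p$ reaches $u$, $u$ reaches $w$, and $w$ reaches $v_p$, so $u$ lies on a cycle through $v_p$ and hence $u\in\ingball(v_p,0)\cap\outgball(v_p,0)=C$, a contradiction. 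Given these, each of the ten ordered pairs $(R_i,R_j)$ with $j<i$ is quick: $R_2\cup R_3\cup R_4\cup R_5\subseteq A\cup B$, which is closed under out-neighbours, so nothing maps into $R_1=V\setminus(A\cup B)$; $R_3,R_4,R_5\subseteq B$ whereas $R_2\subseteq V\setminus B$, so by the second fact nothing maps from $R_3,R_4,R_5$ into $R_2$; $R_4,R_5\subseteq B\setminus C$ and $R_3=C$, so by the third fact nothing maps from $R_4,R_5$ into $R_3$; and $R_5\subseteq A$ while $R_4\subseteq V\setminus A$, so by the first fact nothing maps from $R_5$ into $R_4$.

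I expect the only mildly subtle step to be the third structural fact, i.e.\ the cycle argument showing there is no edge from $B\setminus C$ into $C$; the remainder is bookkeeping of which $R_i$ lies inside $A$, inside $B$, or outside both, together with closure under out-neighbours. No genuine difficulty is anticipated beyond noting the degenerate case $p=1$ (so $A=\emptyset$) and the empty graph, which is dispatched by the base case of Algorithm~\ref{alg:scc-top}.
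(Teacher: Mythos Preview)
Your proposal is correct. The paper does not actually supply a proof of Proposition~\ref{prop:schudy-edges}; it quotes the statement from \cite{schudy2008finding} and relies on the proofs there (Claim~5 and Lemma~7). Your argument---checking the partition by a $B$/$C$/$A$ case split and ruling out each backward edge via the closure of $A$, of $B$, and the cycle-through-$v_p$ observation for $B\setminus C\to C$---is exactly the reachability bookkeeping one expects and matches the spirit of Schudy's original proofs, so there is nothing to compare.
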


\begin{proposition}[Paraphrasing Lemmas~8 and 9, Section~5.2 in \cite{schudy2008finding}]
\label{prop:schudy-rec}
    The recursion depth of Algorithm~\ref{alg:scc-top} is $O(\log n)$ with high probability.
\end{proposition}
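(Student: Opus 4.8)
The plan is to track the potential $\Phi(G') := |V(G')| + |E(G')|$ across the recursion and show that along every root-to-leaf path it drops geometrically (in a randomized sense), so that no path can have length $\omega(\log n)$ with high probability. First I would record the \emph{deterministic} shrinkage that comes purely from the choice of the pivot index $p$ in line~\ref{ustl2}: since $p$ is the smallest index with $|S_p|+|E_p| \ge (n+m)/2$, we get $\Phi(S_{p-1}) < \Phi(G)/2$, and since $S_p$ is a union of out-balls it is closed under reachability, so no edge leaves $S_p$ and hence $\Phi(V\setminus S_p) \le \Phi(G) - \Phi(S_p) \le \Phi(G)/2$. As $A = S_{p-1}$, $R_1 = V\setminus S_p$, $R_2 = A\setminus B \subseteq A$, and $R_5 = (A\cap B)\setminus C \subseteq A$, each of $R_1, R_2, R_5$ has potential at most $\Phi(G)/2$ unconditionally.

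Next I would dispose of $R_3 = C$: because $C$ is a strongly connected set, the recursive call $\SccTop(G[R_3],\dots)$ picks a pivot $v_1'$ whose out-ball is all of $V(G[R_3])$, so in that call $p'=1$, $A' = \emptyset$, $B' = C' = V(G[R_3])$, and the base case of line~\ref{ustl4} fires; thus a step into $R_3$ is followed immediately by termination and contributes only $O(1)$ to the depth. The genuinely delicate case is $R_4 = B \setminus (A\cup C)$, which is exactly $(\outgball(v_p,0)\setminus \ingball(v_p,0))\setminus A$, i.e.\ the part of the out-ball of $v_p$ strictly downstream of $v_p$'s SCC, minus $A$. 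This set can be close to all of $\Phi(G)$ and need not be strongly connected — e.g.\ on the single path $v_0\to\cdots\to v_{n-1}$, when $v_p=v_0$ (probability $1/n$) one gets $R_4=\{v_1,\dots,v_{n-1}\}$ with $\Phi(R_4)=2n-3$. Here I would import the key balance lemma of Schudy (Lemma~8 of Section~5.2 in \cite{schudy2008finding}): using the randomness of the permutation, with probability at least some constant $\rho>0$ one has $\Phi(R_4)\le \tfrac34\,\Phi(G)$ (the intuition being that $v_p$ behaves like a uniformly random vertex, so with constant probability not too much of the potential lies strictly below $\mathrm{SCC}(v_p)$; and when it does, $C=R_3$ is correspondingly heavy and is handled by the base case above). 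Crucially the randomness used at each recursive call is fresh and independent of the history.

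Finally I would assemble these into the depth bound by a standard recursion-tree concentration argument. Fix a vertex $v$ and consider the nested sequence of recursive instances $G=G_0, G_1, G_2,\dots$ containing $v$ (well defined since $R_1,\dots,R_5$ partition $V$ by \cref{prop:schudy-edges}). Classify each step of this path: a step into $R_1,R_2,R_5$ multiplies $\Phi$ by $\le \tfrac12$; a step into $R_4$ multiplies $\Phi$ by $\le \tfrac34$ with conditional, independent probability $\ge \rho$, and otherwise may leave $\Phi$ essentially unchanged; a step into $R_3$ is followed by termination. Since $\Phi(G_0)\le n+m=\mathrm{poly}(n)$ while $\Phi\ge 1$ throughout, there can be at most $O(\log n)$ "shrinking" steps along the path; and by a Chernoff bound over the independent coin flips at the $R_4$-steps, among the first $L=\Theta(\log n)$ steps all but at most $(1-\rho/2)L$ are shrinking, except with probability $n^{-10}$. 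Combining, $L \le O(\log n) + (1-\rho/2)L + O(1)$, forcing $L=O(\log n)$; a union bound over the $\le n$ choices of $v$ (and the $\le n$ base-case leaves) then gives recursion depth $O(\log n)$ with high probability, with the constant in the exponent tunable as usual. The main obstacle is the $R_4$ estimate — importing Schudy's Lemma~8 and checking its constant composes with the deterministic halving of $R_1,R_2,R_5$ and the base-case treatment of $R_3$; everything else is bookkeeping plus a routine concentration inequality.
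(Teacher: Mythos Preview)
Your proposal is correct and aligns with the paper's approach, which is simply to cite Schudy's Lemmas~8 and~9 without giving any proof; you go further by actually spelling out the argument---the deterministic halving for $R_1,R_2,R_5$, the one-step termination for $R_3$, and the concentration bound over the fresh per-call randomness---all of which are correct, while (like the paper) deferring the delicate $R_4$ shrinkage estimate to Schudy's lemma.
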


\begin{proposition}
\label{prop:schudy-fin}
    Using $O(\log^2 n)$ calls to $\Oraclem$, Algorithm~\ref{alg:scc-top} outputs a topological order with respect to strongly connected components.
    More specifically, it outputs a polynomially-bounded labelling $(r_v)_{v\in V}$ such that, with high probability
    \begin{enumerate}
        \item $r_u = r_v$ if and only if $u$ and $v$ are in the same strongly connected component;
        \item when the SCC that $u$ belongs to has an edge towards the SCC that $v$ belongs to, $r_u > r_v$.
    \end{enumerate}
\end{proposition}
\begin{proof}
    Correctness of Algorithm~\ref{alg:scc-top} follows from observing that for any recursive instance, the labels assigned to its vertices are in the range $[\ell \cdot N^2, (\ell + n - 1) \cdot N^2]$ which is disjoint from and ordered with the ranges of other recursive instances in the same level by \Cref{prop:schudy-edges}.

    The number of calls to $\Oraclem$ being $O(\log^2 n)$ follows from \Cref{prop:schudy-rec} (we can cut off the algorithm and output \textsc{Fail} if the recursion depth is too large), and there being $O(\log n)$ calls to $\Oraclem$ in each recursive layer (we can run all executions of line~\ref{ustl2} in one recursive layer simultaneously).
\end{proof}

\subsection{Implementation in the \congest Model}
\label{subsec:scc-top-congest}

We are now ready to restate Algorithm~\ref{alg:scc-top} in the \DM, but with one crucial difference in our implementation.
Let $\mathsf{CC}(G[S])$ denote the connected components of $G[S]$, listed in any order.
Where Algorithm~\ref{alg:scc-top} recurses into $G[R_1], G[R_2], G[R_3], G[R_4], G[R_5]$ (in order), our implementation recurses into $\mathsf{CC}(G[R_1]), \mathsf{CC}(G[R_2]), \mathsf{CC}(G[R_3]), \mathsf{CC}(G[R_4]), \mathsf{CC}(G[R_5])$ (in order).
That is, one recursive call is made for each connected component.
This way, recursive instances in our implementation can correspond with connected subgraphs and, thus, super nodes in the \DM.

\begin{corollary}
\label{cor:topsort-impl}
    There is an algorithm on a directed graph $G = (V,E)$ which computes, with high probability, a ranking $(r_v)_{v\in V}$ of vertices such that, 
    \begin{itemize}
        \item $r_u = r_v$ if and only if $u$ and $v$ are in the same strongly connected component of $G$;
        \item if $u \in S$ and $v \in T$, where $S$ and $T$ are different strongly connected components of $G$, and there is an edge from $S$ to $T$, then $r_u > r_v$.
    \end{itemize}
    This algorithm takes $\tOh{1}$ rounds in the \DM and makes $\tOh{1}$ calls to $\Oraclem$.
\end{corollary}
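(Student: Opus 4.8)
The plan is to take Algorithm~\ref{alg:scc-top} and implement it in the Distributed Minor-Aggregation Model, using the refinement already announced in \Cref{subsec:scc-top-congest}: wherever Algorithm~\ref{alg:scc-top} recurses on $G[R_1],\dots,G[R_5]$, we instead make one recursive call per connected component of each $G[R_j]$, so that every recursive instance is a connected subgraph and hence becomes a super node after contracting its internal edges. Correctness then reduces to \Cref{lem:scc-topsort}. Any SCC of $G$ is strongly connected, hence connected, hence lies entirely within a single connected component of whichever $G[R_j]$ contains it, so the component refinement never splits an SCC; and distinct components of the same $G[R_j]$ have no edges in either direction, so \Cref{prop:schudy-edges} survives the refinement and any ordering of the per-component label ranges is consistent with the required topological order. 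The recursion tree of the refined algorithm is a refinement of that of Algorithm~\ref{alg:scc-top}, so \Cref{prop:schudy-rec} still bounds its depth by $O(\log n)$ with high probability; we abort and output \textsc{Fail} if the depth is exceeded.

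Next I would spell out how a single recursive layer runs in $\tOh{1}$ \DM rounds and $\tOh{1}$ calls to $\Oraclem$, processing all instances of that layer simultaneously, exactly in the style of \Cref{cor:lddcongestminor}. Each instance is recovered as a super node by contracting all of its internal edges. The uniformly random permutation (Line~\ref{ustl1}) comes from each vertex drawing a $\Theta(\log n)$-bit identifier and breaking ties, re-sampling on the low-probability collision, and the prefix $\{v_1,\dots,v_i\}$ is a threshold on these identifiers. The binary search for $p$ (Line~\ref{ustl2}) runs in all instances in parallel: each step is one call to $\Oraclem$ with the virtual source attached by weight-$0$ edges to the current prefix (as in \Cref{def:oraclem}), plus $O(1)$ aggregation steps to evaluate $|S_i|+|E_i|$ (the count of vertices reachable from the prefix, and the count of edges with both endpoints among them, are each one aggregation). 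The sets $A,B,C$ of Line~\ref{ustl3} come from one $\Oraclem$ call on $G$ and one on its edge-reversal (available since every vertex knows the orientations of its incident edges), and $R_1,\dots,R_5$ of Line~\ref{ustl5} are then decided locally. The base case (Line~\ref{ustl4}) is an aggregation testing $|C|=|V|$ followed by one vertex drawing the random label and a consensus step to broadcast it. Crucially, one $\Oraclem$ call on $G$ simulates all of a layer's per-instance $\Oraclem$ calls at once, by giving every edge leaving an instance a sufficiently large polynomial weight so that no shortest path crosses instances. Summing over the $O(\log n)$ layers gives $\tOh{1}$ \DM rounds and $\tOh{1}$ calls to $\Oraclem$; feeding this through \Cref{thm:minoraggregation} (and instantiating $\Oraclem$, e.g. via \Cref{thm:distributedNNSSSP}) yields the \congest bounds, and the two labeling properties are inherited from \Cref{lem:scc-topsort} as above.

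The part I expect to take the most care is the bookkeeping of the label ranges that keeps the labels polynomially bounded, distinct per SCC, and topologically ordered, in the presence of the component refinement. The idea is to hand each instance an interval into which its labels must fall and have it partition this interval consistently among its recursive children. Just before descending, a parent is a single super node, so one aggregation reveals $|R_1|,\dots,|R_5|$, and the parent can allot each $R_j$ a sub-interval of proportional length in the order dictated by \Cref{prop:schudy-edges}; what remains is to split $R_j$'s sub-interval among the connected components of $G[R_j]$, for which each component needs the sum of the sizes of the components of $G[R_j]$ preceding it in a canonical order (say, by minimum vertex identifier). This prefix-sum / ranking computation over the at most $n$ components is the step I would expect to require the most work to implement cleanly, but it can be carried out with $\tOh{1}$ aggregation steps, after which each component receives its interval and the recursion proceeds; tight (proportional) packing here, rather than a padded allocation, is what keeps the labels polynomial across the $O(\log n)$ recursion levels.
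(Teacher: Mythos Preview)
Your proposal is correct and matches the paper's approach in essentially every respect: the connected-component refinement, the per-layer contraction into super nodes, the random-identifier permutation, the coordinated binary search via one $\Oraclem$ call per step across all instances, the computation of $A,B,C$ by oracle calls, and the $O(\log n)$-depth bound from \Cref{prop:schudy-rec}.

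The one place you diverge from the paper is exactly the step you flag as ``the most care'': allocating label intervals to the connected components of each $R_j$. You propose a prefix-sum over components (ordered, say, by minimum identifier) so that each component receives its own disjoint sub-interval. The paper instead takes a simpler route: every connected component of $G[(R_j)_i]$ is handed the \emph{same} $\ell$ parameter as $G[(R_j)_i]$ itself, so only five distinct $\ell$ values branch off from any parent instance regardless of how many components there are. This is harmless because (i) distinct components of the same $R_j$ have no edges between them, so overlapping intervals create no ordering violation, and (ii) distinctness of SCC labels is already guaranteed with high probability by the $N^2$ dilation and the random draw in the base case. Your prefix-sum approach also works and stays within $\tOh{1}$ aggregation rounds, but it is unnecessary machinery here; the paper's shortcut removes precisely the step you expected to be hardest.
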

\begin{proof}
    We describe Algorithm~\ref{alg:scc-top} using \DM steps.
    Let us focus on the implementation of one layer of recursion.
    Let $G[V_1], G[V_2], \ldots, G[V_k]$ be the recursive instances in this layer, listed in order.
    For now, suppose for all $i \in [k]$ that $G[V_i]$ is connected.

    We contract all edges with both endpoints in the same recursive instance to get a graph of super nodes.
    Let us now refine our focus to a particular super node $V_i$ at this layer of recursion.
    We subscript names in Algorithm~\ref{alg:scc-top} with $i$ to make this clear.

    \textit{Line~\ref{ustl1}. }
    Each vertex independently samples a uniform random number in $[n_i^3]$.
    Using a simple first moment method, it can be seen that with high probability no two vertices in $V_i$ sample the same number, which induces a uniformly random ordering of the vertices in $V_i$.

    \textit{Line~\ref{ustl2}. }
    We can then find $p_i$ via binary searching over $[n_i^3]$, with each iteration making a call to $\Oraclem(G, \cup_i S_i , 0)$ where $S_i$ is the set of all vertices in $V_i$ whose number is less than their respective binary search threshold. The binary searches in super nodes are independent of each other, but they coordinate one call to $\Oraclem$ to execute a threshold-check (see \Cref{rem:binarysearch} for a similar example).
    Edges joining different super nodes are taken to have infinite weight on this call to $\Oraclem$.
    The number of vertices plus edges of the graph induced by reachable vertices can be computed using an aggregation step.

    \textit{Lines~\ref{ustl3}, \ref{ustl4}, and \ref{ustl5}. }
    Next, the sets $A_i, B_i, C_i$ can be found with three calls to $\Oraclem$ (making sure to reverse edge directions for $C_i$).
    Now every vertex will know its membership in $(R_1)_i, (R_2)_i, (R_3)_i, (R_4)_i, (R_5)_i$ which partition $V_i$.
    Use an aggregation step to find $|(R_j)_i|$, and set the ranks of all vertices to $r$ if only $|(R_3)_i| > 0$; the random integer $r$ can be sampled by using an aggregation step to elect a leader, having the leader sample $r$, and using another aggregation step to broadcast $r$.

    \textit{Lines~\ref{ustl6} onwards. }
    Recall that in this implementation, recursive instances of the next layer are connected components $\mathsf{CC}(G[(R_j)_i])$ of $G[(R_j)_i]$.
    Each connected component of $G[(R_j)_i]$ sets its $\ell$ parameter as if it were in the recursive instance $G[(R_j)_i]$ (so there will only be five different $\ell$ parameters branching off from $G[V_i]$, even if there are much more than five connected components).
    The $\ell$ parameters can be found using the already computed values of $|(R_j)_i|$.
    
    Uncontract all super nodes and recurse down into the next layer.

    \textbf{Correctness. }
    By observing that each label is taken uniformly at random from an interval of length at least $|V|^2$, SCC labels are distinct with high probability (one may again use a first moment method to see this).

    If $S$ and $T$ are SCCs with an edge from $S$ to $T$, there must be a first time in the recursion that they are separated.
    That is, $S \subseteq R_{i^*}$ and $T \subseteq R_{j^*}$ for $i^* < j^*$ (the inequality comes from \Cref{prop:schudy-edges}) on some level of the recursion.
    Denote the $\ell$ parameters for $G[R_{i^*}]$ and $G[R_{j^*}]$ with $\ell_{R_{i^*}}$ and $\ell_{R_{j^*}}$ respectively.
    Then one can see that the intervals $[\ell_{R_{i^*}}, \ell_{R_{i^*}} + |R_{i^*}| - 1]$ and $[\ell_{R_{j^*}}, \ell_{R_{j^*}} + |R_{j^*}| - 1]$ are disjoint and $\ell_{R_{i^*}} > \ell_{R_{j^*}}$.
    Thus, $r_u > r_v$ for $u \in S$ and $v \in T$.

    \textbf{Complexity. }
    By \Cref{prop:schudy-rec}, Algorithm~\ref{alg:scc-top} has $O(\log n)$ levels of recursion with high probability.
    Vertices can halt and output \textsc{Fail} after $\Theta(\log n)$ levels of recursion, and so this algorithm succeeds with high probability using $\tOh{1}$ rounds of the \DM.
    Each layer of recursion involves $O(\log n)$ calls to $\Oraclem$, running the binary searches together, and hence in total there are $\tOh{1}$ calls.
\end{proof}

\begin{remark}
\label{rem:scc-using-reach}
    Before concluding this section, it is worth noting that we may instead use an oracle for Reachability with a virtual source, instead of $\Oraclem$ which we use here for clarity and convenience.
    Recursive calls to Algorithm~\ref{alg:scc-top} on subgraphs $G' \subseteq G$ are currently made by setting weights of edges to $0$ or $1$ in accordance with the edge being present or not present in the subgraph; this way, the top-level graph $G$ is used as the communication network throughout and hence the round complexity remains in terms of $n$ and $D$.
    With reachability, we can continue to use $G$ as the communication network by simulating \SccTop on $G'$ with a copy of $V(G')$ where every vertex in $V(G')$ is connected to its copy, and every edge in $E(G) \setminus E(G')$ is in the copy of $V(G')$.
    Suffice to say, the construction shows that reachability in a subgraph $G' \subseteq G$ is as hard as reachability in a graph $G$.
\end{remark}

To conclude this section, we complete the proofs for \Cref{lem:scc-topsort-congest} and \Cref{cor:scc-topsort-congest}, which we restate here for convenience.

\scctopcongest*
\begin{proof}
    This follows immediately from \Cref{cor:topsort-impl} and \Cref{thm:minoraggregation}.
\end{proof}

\begin{corollary}
\label{cor:scc-topsort-congest}
    There is a \congest algorithm that, given a directed graph $G=(V,E)$, outputs SCCs in topological order (same conditions as \Cref{lem:scc-topsort-congest}) within $\tOh{n^{1/2}+D+n^{2/5+o(1)}D^{2/5}}$ rounds. 
\end{corollary}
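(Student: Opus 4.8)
The plan is to instantiate the black-box reduction of \Cref{lem:scc-topsort-congest} (equivalently, the \DM{} implementation given by \Cref{cor:topsort-impl}) with the concrete non-negative-weight SSSP algorithm of \Cref{thm:distributedNNSSSP}. Recall that \Cref{cor:topsort-impl} shows that \SccTop{} can be carried out in $\tOh{1}$ rounds of the \DM{} together with $\tOh{1}$ calls to $\Oraclem$, and that \Cref{lem:scc-topsort-congest} down-compiles this — via \Cref{thm:minoraggregation}, which simulates each \DM{} round in $\tOh{\sqrt{n}+D}$ \congest{} rounds — into $\tOh{T(n,D) + \sqrt{n} + D}$ \congest{} rounds whenever $\Oraclem$ can be answered in $T(n,D)$ rounds. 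The output guarantees (that $r_u = r_v$ iff $u$ and $v$ lie in the same SCC, and that the labelling respects a topological ordering of the SCCs, via \Cref{prop:schudy-edges}) are inherited verbatim from \Cref{lem:scc-topsort-congest}, so nothing further needs to be argued about correctness here.

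The substantive step is to set $T(n,D) = \tOh{n^{1/2}+D+n^{2/5+o(1)}D^{2/5}}$, which is exactly the round complexity of answering $\Oraclem$ guaranteed by \Cref{thm:distributedNNSSSP}. Substituting this into \Cref{lem:scc-topsort-congest} yields a round complexity of $\tOh{n^{1/2}+D+n^{2/5+o(1)}D^{2/5} + \sqrt{n} + D}$, and I would then simply observe that the additive $\sqrt{n}+D$ overhead coming from the \DM{} simulation is dominated by the $n^{1/2}+D$ already present, so the bound collapses to $\tOh{n^{1/2}+D+n^{2/5+o(1)}D^{2/5}}$, as claimed.

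Since the heavy lifting has already been done — in particular the subtlety (addressed just before the statement) that recursive calls to \SccTop{} on induced subgraphs $G' \subseteq G$ do not inflate the undirected hop-diameter, because $G$ itself is always used as the communication network and absent edges are simulated by sufficiently large weights on the oracle calls — there is essentially no obstacle left in this final step; it is a pure instantiation. The only point requiring a modicum of care is bookkeeping with the $\tOh{\cdot}$ notation: one must check that the $n^{o(1)}$ factor hidden inside $n^{2/5+o(1)}$ in \Cref{thm:distributedNNSSSP}, together with the $\tOh{1}$ polylogarithmic overhead from the $O(\log^2 n)$-level recursion of Algorithm~\ref{alg:scc-top} (via \Cref{prop:schudy-rec}) and the $O(\log n)$ oracle calls per level, is all absorbed into the soft-$O$ — which it is.
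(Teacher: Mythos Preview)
Your proposal is correct and follows exactly the paper's own approach: the paper's proof is simply ``This follows immediately from \cref{lem:scc-topsort-congest,thm:distributedNNSSSP},'' which is precisely the instantiation you describe. Your additional remarks about absorbing the $\sqrt{n}+D$ overhead and the polylogarithmic factors into the stated bound are accurate and merely spell out what the paper leaves implicit.
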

\begin{proof}
    This follows immediately from \Cref{lem:scc-topsort-congest} and \Cref{thm:distributedNNSSSP}.
\end{proof}

\section{The Framework}
\label{sec:framework}
In this section, we outline the crucial parts of our negative-weight SSSP algorithm, and how they fit together.
First, we define a basic shortest-path oracle for non-negative weights that all of our subroutines have access to. 
Then, we describe the computational models in which our algorithms are implemented.
Then we specify the formal inputs/outputs of the key subroutines needed by our negative-weight SSSP algorithm, leaving their implementation details for later sections. Finally, we give a description and full pseudocode for the two main algorithms, in order to illustrate how all of our subroutines interact with each other.
The goal of this section is to help the reader broadly understand our negative-weight SSSP algorithm and its interface.

\subsection{Shortest Path Oracles}
Recall that our main result is a reduction from negative-weight SSSP to shortest paths with non-negative weights. Thus, throughout the paper, we assume that we have access to an oracle for the latter, both in the parallel and distributed models.

\begin{definition}[$(\dist_{G}(s,v))_{v\in V}\leftarrow \Oracle(G=(V,E,w),s)$]\label{def:oracle}
    The non-negative single source shortest path oracle $\Oracle$ takes inputs (i) A directed graph $G=(V, E,w)$ with non-negative polynomially bounded integer edge weights (ii) A vertex $s \in V$, and returns the distance from $s$ to all vertices in $V$. It outputs shortest distances from $s$ to every $v \in V$. (In the \congest model, every vertex $v$ learns distance $\dist(s,v)$.)
\end{definition}

\paragraph*{Running times. } When we discuss the running time of a subroutine, we will simply specify the total number of calls it makes to $\Oracle$, since this will always be the dominant term. The final goal is an algorithm that makes $n^{o(1)}$ calls to $\Oracle$. 

Some of the subroutines that follow actually require oracles with a few additional minor technical properties, which we go over in their respective sections.

\subsection{Formal Input and Output Guarantees of Subroutines}
\label{sec:subroutine-guarantees}
Recall from Section~\ref{sec:overview} 
that almost all of the work happens in the \scaledown algorithm. The \scaledown algorithm in turn uses several subroutines; we now formally define their input/output guarantees. 

\paragraph*{Low-Diameter Decomposition (LDD) (Details in \Cref{sec:lowediameterdecomposition}). } 
The first subroutine is $\LDD$
\begin{itemize}
    \item INPUT: An $n$-node $m$-edge, directed graph $G = (V, E, w)$ with non-negative integer edge weights and a positive integer $d$.
    \item OUTPUT: A subset of edges $\Eref\subseteq E$ satisfying:
    \begin{itemize}
        \item Each SCC of the subgraph $E \setminus \Eref$ has weak diameter at most $d$ in $G$, i.e. if $u,v$ are two vertices in the same SCC, then $\dist_G(u,v)\le d$ and $\dist_G(v,u)\le d$;
        \item For any $e\in E$, we have $\Pr[e\in \Eref]=O\left(\frac{w(e)\log^2n}{d}+\frac{1}{n^8}\right)$.
    \end{itemize}
    \item NUMBER OF $\Oracle$ CALLS: $\tOh{1}$.
\end{itemize}

\textbf{Remark:} In the \congest model, the input requirements translates to every vertex $v \in V$ knowing $n, m, d$, their own neighborhood $N(v)$, and the weights on their incident edges $w(v,x)$ for $x \in N(v)$; the output guarantee translates to every vertex $v \in V$ knowing the membership of all its incident edges in $\Eref$.

\paragraph*{SCCs and Their Topological Ordering (Details in \Cref{sec:scc-top}). }
The second subroutine, $\SccTop$, computes a topological sort of SCCs.

\begin{itemize}
    \item INPUT: An $n$-node $m$-edge, directed graph $G = (V, E)$.
    \item OUTPUT: A polynomially bounded ranking $(r_v)_{v\in V}$ of vertices such that, 
        \begin{itemize}
            \item $r_u = r_v$ if and only if $u$ and $v$ are in the same SCC of $G$;
            \item If $u \in S$ and $v \in T$, where $S$ and $T$ are different SCCs of $G$, and there is an edge from $S$ to $T$, then $r_u > r_v$.
        \end{itemize}
    \item NUMBER OF $\Oracle$ CALLS: $\tOh{1}$.
\end{itemize}

\textbf{Remark:} In the \congest model, the input requirement translates to every vertex $v \in V$ knowing $n, m$, and their own neighborhood $N(v)$; the output guarantee translates to every vertex $v \in V$ knowing $r_v$.

\paragraph*{$\FixDag$ (Details in \Cref{sec:fix-dag}). }
The third subroutine is $\FixDag$, which computes a price that makes non-negative the edges between SCCs in a graph; this is used in Phase 2 of \scaledown.
\begin{itemize}
    \item INPUT: An $n$-node $m$-edge, directed graph $G = (V, E, w)$ with polynomially bounded non-negative edge weights on edges contained entirely inside an SCC, and a ranking  $(r_v)_{v\in V}$ that satisfies the output guarantee of $\SccTop$.
    \item OUTPUT: A polynomially bounded price function $\psi$ such that $G_{\psi}$ has non-negative edge weights for every edge.
    \item NUMBER OF $\Oracle$ CALLS: $0$.
\end{itemize}
\textbf{Remark:} In the \congest model, the input requirement translates to every vertex $v \in V$ knowing $n,m$, their own rank $r_v$, their own neighborhood $N(v)$, and the weights on their incident edges $w(v,x)$ for $x \in N(v)$; the output guarantee translates to every vertex $v \in V$ knowing the value of $\psi(v)$.

\paragraph*{$\EstDist$ (Details in \Cref{section:estdist}). }
The fourth subroutine is $\EstDist$.
This is used to return a distance estimate of single source shortest paths from source $s$ that is accurate for any vertex $v$ for which the shortest $sv$ path contains few negative-weight edges. This is used to compute the price function of Phase 3 of \scaledown.

\begin{itemize}
    \item INPUT: An $n$-node $m$-edge, directed graph $G = (V, E, w)$ with polynomially bounded integer edge weights, a source vertex $s$, and an accuracy parameter $h$.
    \item OUTPUT: A distance estimate $\tilde{d}$ such that for all $v \in V$ we have $\tilde{d}(v) \ge \dist_G(s,v)$, with equality if $\eta_G(v) \leq h$. (See \Cref{def:eta} for  $\eta_G(v)$.)
    \item NUMBER OF $\Oracle$ CALLS: $O(h)$.
\end{itemize}

\textbf{Remark:} In the \congest model, the input requirement translates to every vertex $v \in V$ knowing $n,m,s,h$, their own neighborhood $N(v)$, and the weights on their incident edges $w(v,x)$ for $x \in N(v)$; the output guarantee translates to every vertex $v \in V$ knowing the value of $\tilde{d}(v)$.

\textbf{Remark:} There is an important distinction between $\EstDist$ used here, and its analog $\ElimNeg$ in the sequential algorithm of \cite{bernstein2022negative}.
In \cite{bernstein2022negative}, instead of merely returning distance estimates, they are able to directly return the true distance to all vertices in Phase 3.
As a result, their algorithm $\ElimNeg$ immediately yields a price function that renders all edges non-negative (by \Cref{lem:prelim_nonnegative}); by contrast, our algorithm $\EstDist$ does not yield such a price function because some of the distance estimates are inaccurate. At a high level, the running time of their $\ElimNeg$  is $O(\log n \paren{n + \sum_{v \in V} \eta_G(v)})$, which scales with the \emph{average} number of negative edges along shortest paths.
On the other hand, our $\EstDist$ would need $O(\max_{v \in V} \eta_G(v))$ calls to $\Oracle$ to fix all non-negative edges, which scales with the \emph{worst case} number of negative edges along shortest paths.
This can be prohibitively large, so we instead limit $\EstDist$ to making $O(h)$ calls to $\Oracle$, and only guarantee accurate distance estimates for vertices $v$ with $\eta_G(v) \leq h$.

\subsection{The \scaledown algorithm}
\label{sec:scaledown-high-lvl}

We now discuss the algorithm \scaledown, which is the primary component of the overall algorithm.  The main guarantee of this procedure is to take a graph $G$ where all edge weights are $\geq -2B$ and output a price function $\phi$ such that the edge weights of $G_{\phi}$ are all $\geq -B$. For the sake of efficiency, the algorithm follows a recursive structure, where each step in the recursion reduces the number of negative edges on shortest paths from the dummy source; this number of negative edges is capture by the parameter $\Delta$ in the input-output guarantees below. 

For clarity, the remainder of this section assumes that the input graph does not contain a negative cycle.

\begin{itemize}
    \item INPUT: An $n$-node $m$-edge, directed graph $G = (V, E, w)$ with polynomially bounded integer edge weights, a parameter $B$ such that $w(e) \ge -2B$ for all edges $e \in E$, and a parameter $\Delta \le n$ such that $\eta(G^B) \le \Delta$. 
    \item OUTPUT: A polynomially bounded price function $\phi$ such that $w_\phi(e) \ge -B$ for all $e \in E$.
    \item NUMBER OF $\Oracle$ CALLS (this includes all subcalls by subroutines): $n^{o(1)}$.
\end{itemize}
Note that $\Delta = n$ always satisfies the input guarantee, so the initial call to \scaledown simply sets $\Delta = n$.

\subsubsection{Description of ScaleDown}
The algorithm \scaledown makes use of all the subroutines from Section \ref{sec:subroutine-guarantees} above. To show how these subroutines fit together, we now give pseudocode and a description of the algorithm $\scaledown$. We leave the formal analysis for \Cref{sec:scaledown_analysis}, because much of it is similar to the sequential analysis of \cite{bernstein2022negative}.

\paragraph*{Model-independent pseudocode. }
Our pseudocode is written in a general form and does not refer to any specific model of computation (e.g. parallel or distributed). This is because all the technical work happens inside the subroutines of Section \ref{sec:subroutine-guarantees}, and we will discuss model-specific implementation of these subroutines in their respective sections. The remaining steps of \scaledown (aside from the subroutines) are straightforward to implement in any model. 

\begin{algorithm}
	\caption{Algorithm for $ScaleDown(G=(V,E,w),\Delta,B)$}\label{alg:scaledown}
        \If{$\Delta \leq 2^{\sqrt{\log n}}$}{
            $\tilde{d} \leftarrow \EstDist(G^B_s, s, 2^{\sqrt{\log n}})$ \\
            $\phi \leftarrow \tilde{d}$ except that the entry for $s$ is deleted\\
            \Return $\phi$
        }
	Let $G^B_{\geq 0} := (V,E,w^B_{\geq 0})$ where $w^B_{\geq 0}(e) := \max\{0,w^B(e)\}$ for all $e \in E$ \\
      \For{$i=1,\ldots,10\log(n)$}{ \label{line:scaledown-loop}
        //  \textcolor{blue}{Phase 0: Decompose $V$ to SCCs $V_1,V_2,\ldots$ with weak diameter $\big\lfloor \frac{\Delta}{2^{\sqrt{\log n}}}\big\rfloor B$ in $G$} \\

        $\Erem \leftarrow \LDD \left(G^B_{\geq 0},\lfloor \Delta/2^{\sqrt{\log n}} \rfloor B \right)$ \\ 
        $(r_v)_{v \in V} \leftarrow \text{SCC+Topsort}((V,E \setminus \Erem))$ \\ \label{line:scaledown-scc}
        Denote the SCCs (found using $(r_v)_{v \in V}$) of $G^B \backslash \Erem$ with $V_1, V_2, ...$ \\
        // (\Cref{lem:expectation_is_good}) If $\eta(G^B) \leq \Delta$, then for all $v \in V_i$, $E[|P_{G^B}(v) \cap \Erem|] = O(\log^2 (n) 2^{\sqrt{\log n}})$. \\

        // \textcolor{blue}{Phase 1: Make edges inside the SCCs $G^B[V_i]$ non-negative (whp)} \\
        Let $H = \cup_j G[V_j]$ \\
        // (\Cref{lem:scaledown_bounding_eta}) If $G$ has no negative-weight cycle, then $\eta(H^B) \leq \lfloor\Delta/2^{\sqrt{\log n}} \rfloor$.
        $\phi_1 \leftarrow ScaleDown \left(H,\lfloor \Delta /2^{\sqrt{\log n}}\rfloor,B \right)$ \\ \label{line:scaledown-recurse}

        // \textcolor{blue}{Phase 2: Make all edges in $G^B \setminus \Erem$ non-negative (whp)} \\
        $\psi_2 \leftarrow FixDAGEdges(G^B_{\phi_1} \setminus \Erem, r)$\\
        $\phi_2 \leftarrow \phi_1 + \psi_2$ \\

        // \textcolor{blue}{Phase 3: For each $v \in V$, $\tilde{d}^{(i)}(v) = \dist_{G^B_s}(s,v)$ with probability at least $1/2$}
        \\ $\phi'_2 \leftarrow \phi_2$ but we additionally define $\phi'_2(s) = 0$\\
        $\tilde{d}_3 \leftarrow \EstDist((G^B_s)_{\phi'_2}, s, h)$ for $h = O(\log^2(n) 2^{\sqrt{\log n}})$ being sufficiently large \\
        $\tilde{d}^{(i)} := \tilde{d}_3 + \phi_2$ except that the entry for $s$ is deleted\label{line:scaledown-di} \quad \quad \quad  \quad \text{// $\tilde{d}^{(i)}(v) = \dist_{G_s^B}(s,v)$ with probability at least $1/2$}
      }
      For each node $v\in V$, let $\phi(v) = \min_{i = 1,\ldots,10\log(n)}\tilde{d}^{(i)}(v)$\label{line:scaledown-dii} \quad \quad \quad \text{// $\phi(v) = \dist_{G_s^B}(s,v)$ with probability at least $1- n^{-10}$}\\
      \Return $\phi$ 
\end{algorithm}

\paragraph*{Base Case: $\Delta \leq 2^{\sqrt{\log n}}$. }
Recall that the input has to satisfy $\eta(G^B) \leq \Delta$. Therefore, by definition, for each vertex $v \in V$, there exists a shortest $sv$-path in $G^B_s$ using at most $\Delta \leq 2^{\sqrt{\log n}}$ negative edges. We can therefore directly compute $\dist_{G^B_s}(s,v)$ for every vertex $v \in V$ by calling $\EstDist(G^B_s,s,2^{\sqrt{\log n}})$. Applying Lemma \ref{lem:prelim_nonnegative} then yields a price function $\phi$ with $w^B_\phi(e) \geq 0$ for all $e \in E$, which in turn implies $w_\phi(e) \geq -B$, as desired.

\paragraph*{Recursive Case: $\Delta > 2^{\sqrt{\log n}}$. }
Algorithm \ref{alg:scaledown} runs in $10 \log(n)$ iterations (Line~\ref{line:scaledown-loop}). In each iteration $i$, a distance estimate $\tilde{d}^{(i)}(v)$ is computed for all $v \in V$. The distance estimate satisfies that $\tilde{d}^{(i)}(v) \geq \dist_{G^B_s}(s,v)$ for all $v \in V$ and $\tilde{d}^{(i)}(v) = \dist_{G^B_s}(s,v)$ with probability at least $1/2$. Therefore, $\tilde{d}^{(i)}(v) = \dist_{G^B_s}(s,v)$ for some $i$ with probability at least $1 - (1/2)^{10 \log n} = 1 -n^{-10}$. Each iteration involves a recursive call; see \Cref{fig:old-rec,fig:new-rec} for illustrations comparing the recursion structure of \cite{bernstein2022negative} with that of our algorithm.

\begin{figure}
    \centering
    \begin{minipage}{.5\columnwidth}
        \includegraphics[width=1\columnwidth]{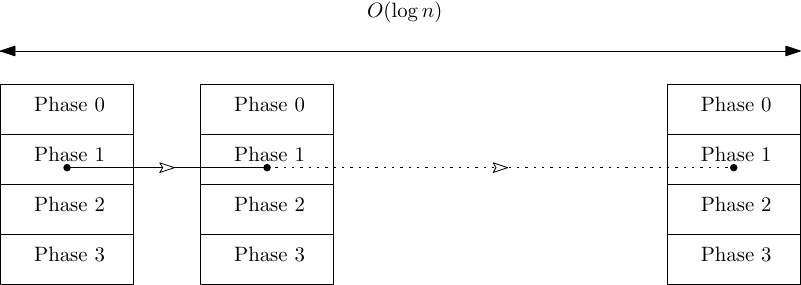}
        \captionof{figure}{Recursion structure of \scaledown in \cite{bernstein2022negative}.}\label{fig:old-rec}
    \end{minipage}%
    \begin{minipage}{.5\columnwidth}
        \includegraphics[width=1\columnwidth]{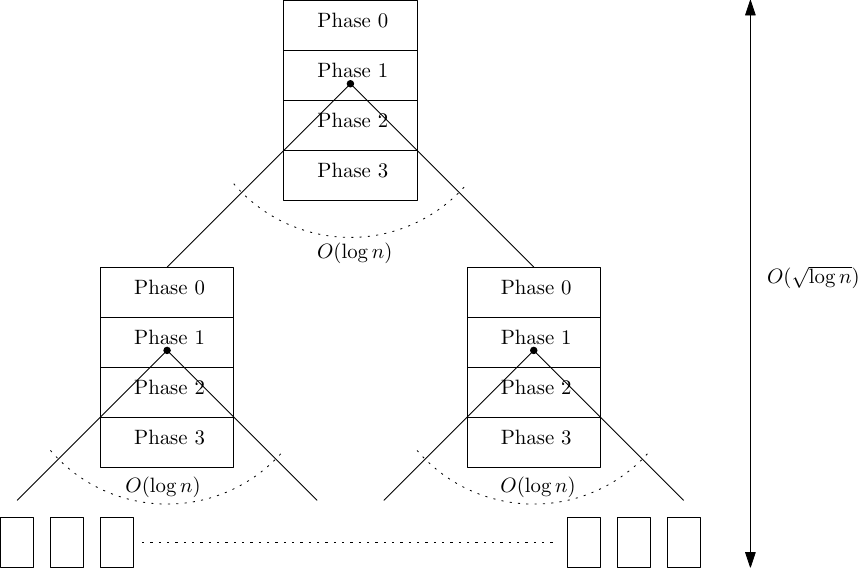}
        \captionof{figure}{Recursion structure of \scaledown in this paper.}\label{fig:new-rec}
    \end{minipage}
\end{figure}

\paragraph*{Description of an Iteration. }
In Phase $0$, we compute a low-diameter decomposition of $G^B_{\geq 0}$ by invoking 
$\Erem \leftarrow \LDD(G^B_{\geq 0}, \lfloor \Delta / 2^{\sqrt{\log n}}\rfloor B)$ and then do $V_1, V_2, \ldots \leftarrow SCC+TopSort(G^B \setminus \Erem)$; the decomposition guarantees that each $V_j$ has weak diameter at most $\lfloor \Delta/2^{\sqrt{\log n}}\rfloor B$ in $G$.

In Phase $1$, the goal is to compute a price function $\phi_1$ such that for any edge $e$ inside an SCC $V_j$ of $G \setminus \Erem$, we have $w^B_{\phi_1}(e) \geq 0$ with high probability. To do so, we recursively call $\scaledown(H,\lfloor \Delta/2^{\sqrt{\log n}}\rfloor,B)$, where $H$ is the union of all the subgraphs $G[V_j]$ induced by SCCs. The key step of the analysis is to show that this recursive call with a lower setting of $\Delta$ satisfies the input condition of $\scaledown$; that is, we need to show that if we consider any induced component $G[V_j]$ with dummy source $s$, then the shortest path from $s$ to any vertex in $G^B[V_j]$ contains at most $\lfloor \Delta/2^{\sqrt{\log n}}\rfloor$ negative edges. (Note that this is true only of shortest paths in $G^B[V_j]$, not in $G[V_j]$, which is the reason the algorithm uses $G^B$.) This claim was proved in the sequential algorithm of \cite{bernstein2022negative}; we reprove it in our formal analysis in \Cref{sec:scaledown_analysis}.

Next, in Phase $2$, we make all remaining edges in $G^B \setminus \Erem$ non-negative.
Observe that the edges inside each SCC $V_j$ are non-negative from Phase $1$, with high probability, and so the remaining negative edges will be among those connecting one SCC to another.
The subroutine \FixDag described above outputs, with high probability, a price function $\psi_2$ such that $w^B_{\phi_1 + \psi_2}(e) \geq 0$ for every edge $e \in E \setminus \Erem$ (\Cref{lem:scaledown_phase_2}).

Before describing Phase $3$, we observe the following. Let $\phi'_2$ be the price function after step $2$. We know that
all remaining negative edges in $G^B_{\phi'_2}$ are contained in $\Erem$.
We now take advantage of the fact that our LDD subroutine described above guarantees that each edge has a small probability of appearing in $\Erem$, so any given path has few edges from $\Erem$ in expectation. In particular, one can show that given any shortest path $P_{G^B}(v)$ from dummy source $s$ to some $v \in V$ in $(G^B_s)_{\phi'_2}$, we have that $E[|P_{G^B}(v) \cap \Erem|] = O(\log^2(n)2^{\sqrt{\log n}})$ (see \Cref{lem:expectation_is_good} in the formal analysis below.) This implies that with probability at least $1/2$ the number of negative edges on the path $P_{G^B}(v)$ in $(G^B_s)_{\phi'_2}$ is at most $h$ for a sufficiently large $h = O(\log^2 (n) 2^{\sqrt{\log n}})$. 

The key line of Phase 3 is computing distance estimates $\tilde{d}_3 \leftarrow \EstDist((G^B_s)_{\phi'_2},s,h)$, where $s$ is the dummy source and $h = O(\log^2 (n) 2^{\sqrt{\log n}})$. By the above discussion, for any $v \in V$, this distance estimate is correct with probability $\geq 1/2$.
\footnote{In actuality, the algorithm needs to make a small adjustment to $\tilde{d}_3$ to account for the price function of the dummy source (see Line~\ref{line:scaledown-di}); this is a minor technical detail that is handled in the formal analysis (see \Cref{sec:scaledown_analysis}).}

\paragraph*{Aggregating distance estimates between iterations. }
Recall that \scaledown loops over $10\log(n)$ independent and identical iterations (see Line~\ref{line:scaledown-loop}). As discussed above, the guarantee of iteration $i$ is that $\tilde{d}^{(i)}(v) = \dist_{G^B_s}(v)$ with probability at least $1/2$. Moreover, by the output guarantees of $\EstDist$ (\Cref{sec:subroutine-guarantees}), we always have that $\tilde{d}^{(i)}(v) \geq \dist_{G^B_s}(v)$. Thus, aggregating the $10\log(n)$ distance estimates, we clearly have that with high probability, for all $v \in V$, $\min_i \{\tilde{d}^{(i)}(v)\} = \dist_{G^B_s}(v)$. By \Cref{lem:prelim_nonnegative}, setting $\phi(v)$ to be this distance yields a price function such that all edge-weight are non-negative (see Line~\ref{line:scaledown-dii}).

We note that this need for $10\log(n)$ iterations is new to our paper and did not appear in the sequential \scaledown algorithm of \cite{bernstein2022negative}. In both our algorithm and theirs, once the algorithm gets to Phase 3, it is already the case that for any $v \in V$ the \emph{expected} number of negative edges on the shortest path from $s$ to $v$ is small. This is sufficient for the subroutine $\ElimNeg$ from the sequential algorithm to efficiently compute distances to \emph{all} vertices $v$. But our parallel/distributed subroutine $\EstDist$ can only compute distances to those vertices $v$ that satisfy this expected bound, so we need to aggregate estimates from $O(\log(n))$ iterations with independent randomness.

\paragraph*{Bounding Recursive Invocations. }
The need for $10\log(n)$ iterations leads to a slightly more complicated recursive analysis of the runtime, but the overall intuition is quite straightforward. The algorithm recursively calls itself $O(\log n)$ times in total, each time with parameter $\Delta_{rec} = \lfloor \Delta/2^{\sqrt{\log n}}\rfloor$. As $\Delta \leq n$, the recursion depth is $O(\sqrt{\log n})$ and thus the total number of recursive invocations is $O(\log n)^{O(\sqrt{\log n})} = 2^{O(\sqrt{\log n} \log \log n)}$. Ignoring the recursive calls, the non-negative SSSP oracle $\Oracle$ is called $O(2^{\sqrt{\log n}})$ times and therefore the total number of invocations to $\Oracle$ is $2^{O(\sqrt{\log n}\log \log n)}O(2^{\sqrt{\log n}}) = 2^{O(\sqrt{\log n}\log \log n)} = n^{o(1)}$, as desired. See \Cref{sec:scaledown_analysis} for the formal runtime analysis.

\subsection{Algorithm $\spmain$}
\paragraph*{$\spmain$ (Details in \Cref{sec:SPMain}). }
The algorithm $\spmain$ is a simple outer shell which uses $\scaledown$ to compute the final shortest distances. We take the algorithm $\spmain$ with essentially no modifications from \cite{bernstein2022negative}.
Pseudocode for $\spmain$ can be found in Algorithm~\ref{alg:SPMain}.

The input and output guarantees of $\spmain$ are as follows:
\begin{itemize}
    \item INPUT: An $n$-node $m$-edge, directed graph $G_{in} = (V, E, w)$ with polynomially bounded integer edge weights, and a source vertex $s_{in}$.
    \item OUTPUT: Error if $G_{in}$ contains a negative-weight cycle \footnote{There is a blackbox reduction in \cite{bernstein2022negative} (see Section~7 there) that extends this to a Las Vegas algorithm that reports a negative-weight cycle (instead of outputting Error)}.
        Otherwise, $\dist_{G_{in}}(s_{in}, v)$ for all $v \in V$.
    \item NUMBER OF $\Oracle$ CALLS (this includes all subcalls by subroutines): $n^{o(1)}$.
\end{itemize}

\subsubsection{Description of $\spmain$}

$\spmain$ first scales all the weight up by $2n$ to ensure that weights remain integral after calls to $\scaledown$. It then calls $\scaledown$ $O(\log n)$ times.
The calls to $\scaledown$ guarantee that the weights of the graph have been rescaled so that (i) shortest paths are unchanged, and (ii) edge weights are no smaller than $-1$.
Adding $+1$ to all weights thus yields a graph with non-negative edge weights, and we show that shortest paths are yet unchanged.
We can now invoke $\Oracle$ and unscale the weights to get distances from $s$. See \Cref{sec:SPMain} for the formal analysis. 

\begin{algorithm}
	\caption{Algorithm for $SPMain(G_{in} =(V,E,w_{in}),s_{in})$}\label{alg:SPMain}
    $\bar{w}(e) \leftarrow w_{in}(e) \cdot 2n$ for all $e \in E$, $\bar{G} \leftarrow (V,E,\bar{w})$, $B \leftarrow - \min_{e \in E} \bar{w}(e)$.  \\\label{line:spmain-find-b}
    Round $B$ up to nearest power of $2$ \\
    $\phi_0(v) = 0$ for all $v \in V$ \\
    \For{$i = 1$ to $t := \log_2(B)$}{
        $\psi_i \leftarrow ScaleDown((\bar{G})_{\phi_{i-1}},\Delta := n, B/2^i)$ \\\label{line:spmain-scaledown}
          \text{// \Cref{claim:SPMain_Gstar_nonnegative_weights}: $w_{\phi_i}(e) \geq -B/2^i$ for all $e \in E$  w.h.p. if $G$ does not contain a negative weight cycle} 
        $\phi_i \leftarrow \phi_{i-1} + \psi_i$    
            }
         $G^* \leftarrow (V,E,w^*)$ where $w^*(e) \leftarrow \bar{w}_{\phi_t}(e) + 1$ for all $e \in E$ \\
        // Observe: If $G_{in}$ does not contain a negative-weight cycle, then w.h.p. $G^*$ in above line has only strictly positive weights \\
        \If{$G^*$ contains a negative-weight edge}{ \label{line:spmain-neg-edge}
         \Return ERROR
        } 
        $d^* \leftarrow \Oracle(G^*,s_{in})$ \\\label{line:spmain-sssp}
        $d_{in}(v) \leftarrow \lfloor(d^*(v) - \phi_t(s_{in}) + \phi_t(v)) / (2n) \rfloor$ for all $v \in V$ \\
        \Return $d_{in}$
\end{algorithm}

\section{\FixDag}
\label{sec:fix-dag}

\subsection{Algorithm Overview and Analysis}
This section goes over \FixDag and a proof of \Cref{lem:fix-dag}.
The high level idea of \FixDag is very simple.
Let $G=(V,E,w)$ be a directed graph where edges contained in SCCs have non-negative weights, and let $(r_v)_{v \in V}$ be a labelling of vertices such that
\begin{enumerate}
    \item $r_u = r_v$ if and only if $u$ and $v$ are in the same strongly connected component;
    \item when the SCC that $u$ belongs to has an edge towards the SCC that $v$ belongs to, $r_u > r_v$.
\end{enumerate}
Finally, let $-B$ be the smallest (i.e. most negative) weight in $G$.
Then, we simply add a price $\psi(v)$ of $B \cdot r_v$ to every vertex.
Algorithm~\ref{alg:fix-dag-indep} formalizes this idea.
\begin{algorithm}
    \caption{$\psi(\cdot) \leftarrow \FixDag(G=(V,E,w), (r_v)_{v \in V})$}
    \label{alg:fix-dag-indep}
    \KwData{
        \begin{itemize}
            \item A weighted directed Graph $G=(V,E,w)$ where edges contained in SCCs have non-negative weights.
            \item A labelling $(r_v)_{v \in V}$ respecting a topological order of SCCs of $G$.
        \end{itemize}
    }
    \KwResult{Price function $\psi$ such that $G_\psi$ has non-negative weights.}
    \BlankLine
    $-B \gets \min(0,\min_{e \in E}(w(e)))$\;
    For each $v \in V$, let $\psi(v) \gets B \cdot r_v$\;
    Return $\psi$\;
\end{algorithm}

Finally, we are ready to assert the correctness of Algorithm~\ref{alg:fix-dag-indep}.
\begin{lemma}{\FixDag}
\label{lem:fix-dag}
    Let $G=(V,E,w)$ be a directed graph with polynomially-bounded integer weights, where for all $(u,v) \in E$ where $u$ and $v$ are in the same strongly connected component, $w(u,v) \ge 0$.
    Let $(r_v)_{v\in V}$ be a polynomially-bounded labelling which respects a topological ordering
    of SCCs of $G$.
    Given $G$ and $(r_v)_{v\in V}$, Algorithm~\ref{alg:fix-dag-indep} outputs a polynomially bounded price function $\psi : V \rightarrow \mathbb{Z}$ such that $w_{\psi}(u,v) \ge 0$ for all $(u,v) \in E$.
    Algorithm~\ref{alg:fix-dag-indep} makes no oracle calls.
\end{lemma}
\begin{proof}
    Suppose $(u,v) \in E$ is an edge contained in an SCC.
    Then $w_{\psi}(u,v) = w(u,v) \ge 0$ since $w(u,v) \ge 0$ and $r_u = r_v$.

    Suppose, on the other hand, $(u,v) \in E$ is an edge such $u$ and $v$ are in different SCCs.
    Then $w_{\psi}(u,v) \ge B \cdot (r_u - r_v - 1) \ge 0$ since $w(u,v) \ge -B$ and $r_u > r_v$.

    Finally, it is clear that Algorithm~\ref{alg:fix-dag-indep} does not make any call to $\Oracle$.
\end{proof}

\subsection{Implementation in Various Models}
We now show that there are efficient implementations of Algorithm~\ref{alg:fix-dag-indep} in both parallel and distributed models.

\paragraph*{Parallel implementation. }
A direct implementation of Algorithm~\ref{alg:fix-dag-indep} in the parallel model gives the following corollary.
\begin{corollary}
\label{cor:fix-dag-parallel}
    Let $G=(V,E,w)$ be a directed graph with polynomially-bounded integer weights, where for all $(u,v) \in E$ where $u$ and $v$ are in the same strongly connected component, $w(u,v) \ge 0$.
    Let $(r_v)_{v\in V}$ be a polynomially-bounded labelling which respects a topological ordering 
    of SCCs of $G$.
    Given $G$ and $(r_v)_{v\in V}$, there is a parallel algorithm that outputs a price function $\psi : V \rightarrow \mathbb{Z}$ such that $w_{\psi}(u,v) \ge 0$ for all $(u,v) \in E$ with $O(m)$ work and $O(1)$ span.
\end{corollary}

\paragraph*{Distributed implementation. }
Algorithm~\ref{alg:fix-dag-indep} takes one round in the \DM: contract the whole graph and use a consensus step to compute the minimum weight edge.
Then each vertex updates their price.
This leads to the following corollary.
\begin{corollary}
\label{cor:fix-dag-congest}
    Let $G=(V,E,w)$ be a directed graph with polynomially-bounded integer weights, where for all $(u,v) \in E$ where $u$ and $v$ are in the same strongly connected component, $w(u,v) \ge 0$.
    Let $(r_v)_{v\in V}$ be a polynomially-bounded labelling which respects a topological ordering
    of SCCs of $G$.
    Given $G$ and $(r_v)_{v\in V}$, there is an algorithm in the \DM that outputs a price function $\psi : V \rightarrow \mathbb{Z}$ such that $w_{\psi}(u,v) \ge 0$ for all $(u,v) \in E$ in one round.
\end{corollary}

\paragraph*{Quantum edge query implementation.}

Since~\Cref{lem:LDDquantum} will return $V_1,...,V_\ell$ (which does not nessesarily be SCCs of a graph), we need to change the input to Algorithm \ref{alg:fix-dag-indep} a bit, as follows.

\textbf{Inputs:} A weighted graph graph $G=(V,E,w)$ and a partition $V_1,...,V_\ell$ of $V$, where edges contained in each $V_i$ have non-negative weights, and there are no edges from $V_j$ to $V_i$ with $j>i$. 

The output is still a price function such that $G_\psi$ has no negative weight edges. For this, we can just set $\psi(v)\leftarrow B\cdot i$ for every $v\in V_i$. 
\section{\EstDist}
\label{section:estdist}

In this section, we provide the \EstDist algorithm (Algorithm \ref{alg:estdist}) and its proof of correctness.

\subsection{Algorithm Overview}

\EstDist is encapsulated in Algorithm~\ref{alg:estdist}.
At a high level, the algorithm takes $O(h)$ epochs, where $h$ is given as input.
Epoch $i$ computes a distance estimate $\tilde{d}_i$ such that $d(v) \le \tilde{d}_i(v)$, with equality if there is a shortest path from $s$ to $v$ with at most $i$ negative edges.
Each epoch consists of a Bellman-Ford step which relaxes negative edges once, and then an Oracle step for recomputing shortest paths in light of newly relaxed negative edges.

\begin{algorithm}
	\caption{Algorithm for $EstDist(G=(V,E,w),s,h)$}\label{alg:estdist}
         Let $H$ be the graph we obtain from $G$ by adding $B := \max(0,-(h+1) \cdot \min_{e \in E} w(e))$ to the weight of each outgoing edge from $s$.
         \\
        Let $H_0$ be the graph we obtain from $H$ by making all negative weight edges to be zero weight. \\
        $\tilde{d}_0 \leftarrow \Oracle(H_0,s)$ \\
        \For{$i = 1,2,\ldots,h$}{
            // \textcolor{blue}{Bellman-Ford step} \\
            $\tilde{d}^{(1)}_i(v) \leftarrow  \min(\tilde{d}_{i-1}(v),\min_{u \in V \setminus \{s\} \colon (u,v) \in E} \tilde{d}_{i-1}(u) + w(u,v))$ for every $v \in V$   \\
            Let $H_i$ be the graph we obtain from $H$ by making all negative-weight edges zero weight. Then, for every vertex $v \in V \setminus \{s\}$, add one edge from $s$ to $v$ with weight 
            $\tilde{d}^{(1)}_i(v)$.\\
            // \textcolor{blue}{Oracle step} \\
            $\tilde{d}_i \leftarrow \Oracle(H_i,s)$ 
            }
            $\tilde{d}(v) \leftarrow \tilde{d}_h(v) - B$ for every $v \in V \setminus \{s\}$ \\
            $\tilde{d}(s) \leftarrow 0$ \\
            \Return $\tilde{d}$
\end{algorithm}

\subsection{Analysis}
The main result of this section is the following lemma.
\begin{lemma}[\EstDist]\label{lem:estdist}
Let $G = (V,E,w)$ be a directed graph with polynomially-bounded integer weights, $s \in V$ and $h \in \mathbb{N}$. Assume that $dist_G(s,v) < \infty$ for all $v \in V$. Given $G$, $s$ and $h$ as input, Algorithm~\ref{alg:estdist} outputs a distance estimate $\tilde{d} \colon V \mapsto \mathbb{Z}$ such that for every $v \in V$, $\tilde{d}(v) \geq dist_G(s,v)$ and $\tilde{d}(v) = dist_G(s,v)$ if there exists a shortest path connecting $s$ and $v$ that contains at most $h$ negative edges. 
Moreover, Algorithm~\ref{alg:estdist} performs $h+1$ oracle calls to the non-negative weight distance oracle $\Oracle$.
\end{lemma}

\begin{proof}
First, \Cref{claim:estdist_nonnegative_Hi} verifies that we only give graphs with non-negative weight edges to the distance oracle $\Oracle$.

By using induction, we first show that for every $i \in \{0,1,\ldots,h\}$ and $v \in V \setminus \{s\}$, it holds that $\tilde{d}_i(v) \geq dist_H(s,v)$. The base case $i = 0$ trivially follows. Now, consider some $i \in \{1,2,\ldots,h\}$. Assume that $\tilde{d}_{i-1}(v) \geq dist_H(s,v)$ for every $v \in V \setminus \{s\}$. By using triangle inequality, we therefore get

\begin{align*}
\tilde{d}^{(1)}_i(v) &= \min(\tilde{d}_{i-1}(v),\min_{u \in V \setminus \{s\} \colon (u,v) \in E} \tilde{d}_{i-1}(u) + w(u,v)) \\
&\geq \min(dist_H(s,v),\min_{u \in V \colon (u,v) \in E} dist_H(s,u) + w(u,v)) \geq \dist_H(s,v)
\end{align*}
for every $v \in V \setminus \{s\}$.
Thus, it directly follows from the way $H_i$ is constructed that 
\[\tilde{d}_i(v) = dist_{H_i}(s,v)\geq dist_H(s,v),\]

as needed.

In particular, we get for every $v \in V \setminus \{s\}$ that

\[\tilde{d}(v) = \tilde{d}_h(v) - B \geq dist_H(s,v) - B \geq dist_G(s,v),\]

as needed. Also, it trivially follows that $\tilde{d}(s) = 0 \geq dist_G(s,s)$.

Next, let $v \in V \setminus \{s\}$ be a node such that there exists a shortest path $P$ connecting $s$ and $v$ with at most $h$ negative edges in $G$. By \Cref{claim:estdist_relate_H_to_G}, $P$ is also a shortest path connecting $s$ and $v$ in $H$.
We show that $\tilde{d}_h(v) = dist_H(s,v)$.
To do so, let $u$ be a node contained in $P$ such that there are at most $i$ negative edges between $s$ and $u$ on the path $P$ in $H$. We show that this implies $\tilde{d}_i(u) = dist_H(s,u)$ by induction on $i$.
The base case $i = 0$ trivially follows from the way we obtain $H_0$ from $H$.

Next, consider some fixed $i \in \{1,2,\ldots,h\}$ and assume that it holds for $i-1$.
Let $u$ be a node contained in $P$ such that there are at most $i$ negative edges between $s$ and $u$ on the path $P$ in $H$. 
If the number of negative edges is strictly less than $i$, then by induction we get $\tilde{d}_{i-1}(u) = dist_H(s,u)$ and therefore also $\tilde{d}_i(u) \leq \tilde{d}_i^{(1)}(u) \leq \tilde{d}_{i-1}(u) \leq dist_H(s,u)$. 
Now, assume that the number of negative edges is exactly $i$. Let $e = (x,y)$ be the $i$-th negative edge on the path $P$ in $H$, i.e., the last negative edge before $u$.
In particular, the number of negative edges on the path $P$ until $x$ is strictly less than $i$, and therefore we can use the induction hypothesis to conclude that $\tilde{d}_{i-1}(x) = dist_H(s,x)$. As $P$ is a shortest path in $H$, it holds that

\[\tilde{d}^{(1)}_i(y) \leq \tilde{d}_{i-1}(x) + w(x,y) = dist_H(s,y).\]

As $(x,y)$ is the last negative edge on the path $P$ before $u$, the whole path segment from $y$ to $u$ only consists of nonnegative edges and is therefore present in the graph $H_i$. As $P$ is a shortest path in $H$, the segment has a length of $dist_H(s,u) - dist_H(s,y)$. Therefore, we get

\[d_{H_i}(u) \leq \tilde{d}^{(1)}_i(y) + (dist_H(s,u) - dist_H(s,y)) = dist_H(s,u).\]

As we have shown above that $\tilde{d}_i(u) \geq dist_H(s,u)$, we therefore get $\tilde{d}_i(u) = dist_H(s,u)$, which finishes the induction.

In particular, for every $v \in V \setminus \{s\}$ such that there exists a shortest path $P$ connecting $s$ and $v$ with at most $h$ negative edges in $G$, we get that

\[\tilde{d}(v) = \tilde{d}_h(v) - B = dist_H(s,v) - B = dist_G(s,v),\]

where the last equality follows from \Cref{claim:estdist_relate_H_to_G}.

Finally, Algorithm \ref{alg:estdist} indeed calls the oracle $\Oracle$ $h+1$ times, which finishes the proof.
\end{proof}

\begin{claim}
    \label{claim:estdist_relate_H_to_G}
    For each vertex $v \in V \setminus \{s\}$ with $dist_G(s,v) > -\infty$, it holds that $dist_G(s,v) = dist_H(s,v) - B$.
\end{claim}
\begin{proof}
Consider some vertex $v \in V \setminus \{s\}$ with $dist_G(s,v) > -\infty$. As we additionally assume that $dist_G(s,v) < \infty$, there is a shortest path $P$ from $s$ to $v$ in $G$ of length $dist_G(s,v)$. This path $P$ has exactly one outgoing edge from $s$ and therefore is of length $dist_G(s,v) + B$ in $H$. Therefore, $dist_H(s,v) \leq dist_G(s,v) + B$. On the other hand, any $sv$-path in $H$ has at least one outgoing edge from $s$ and therefore the weight of this path in $G$ is at least smaller by an additive $B$. Hence, $dist_G(s,v) = dist_H(s,v) - B$, as needed.
\end{proof}

\begin{claim}
    \label{claim:estdist_nonnegative_Hi}
    For every $i \in \{0,1,\ldots,h\}$, $H_i$ only contains non-negative weight edges.
\end{claim}
\begin{proof}
We first prove by induction that for every $i \in \{0,1,\ldots,h\}$,

\[\min_{v \in V \setminus \{s\}}\tilde{d}_i(v) \geq \max(0,-(h-i)\min_{e \in E} w(e)) \geq 0.\]

We start with the base case $i = 0$.
From the way $H_0$ is defined, it follows that

\begin{equation*}
\label{eq:estdist_1}
\min_{v \in V \setminus \{s\}} \tilde{d}_0(v) \geq \min_{e \in E} w(e) + B = \min_{e \in E} w(e) +  \max(0,- (h+1) \cdot \min_{e \in E} w(e)) \geq \max(0,- (h-0) \cdot \min_{e \in E} w(e)).
\end{equation*}

Now, consider some fixed $i \in \{1,2,\ldots,h\}$ and assume that it holds for $i-1$. From the way $\tilde{d}^{(1)}_i$ and $H_i$ are defined and the induction hypothesis, we get

\begin{align*}
\min_{v \in V \setminus \{s\}} \tilde{d}_i(v) &\geq \min_{v \in V \setminus \{s\}} \tilde{d}^{(1)}_i(v) \geq \min_{v \in V \setminus \{s\}} \tilde{d}_{i-1}(v) + \min(0,\min_{e \in E} w(e)) \\
&\geq \max(0,- (h-(i-1)) \cdot \min_{e \in E} w(e)) + \min(0,\min_{e \in E} w(e)) \\
&= \max(0,- (h-i) \cdot \min_{e \in E} w(e)),
\end{align*}
which finishes the induction.
In particular, for every $i \in \{1,\ldots,h\}$,

\[\min_{v \in V \setminus \{s\}} \tilde{d}^{(1)}_i(v) \geq \max(0,- (h-i) \cdot \min_{e \in E} w(e)) \geq 0\]

and therefore $H_i$ only contains non-negative weight edges, as needed.
\end{proof}

\subsection{Implementation in Various Models}

\paragraph*{Parallel Implementation. } Algorithm~\ref{alg:estdist} is naturally parallelized; we have the following corollary in the parallel model.
\begin{corollary}
\label{cor:estdist-parallel}
Let $G = (V,E,w)$ be a directed graph with polynomially-bounded integer weights, $s \in V$ and $h \in \mathbb{N}$. Assume that $dist_G(s,v) < \infty$ for all $v \in V$ and there is a parallel algorithm answering SSSP in $W(m, n)$ work and $S(m, n)$ span.
Given $G$, $s$ and $h$ as input, then there is a parallel algorithm that outputs a distance estimate $\tilde{d} \colon V \mapsto \mathbb{Z}$ such that for every $v \in V$, $\tilde{d}(v) \geq dist_G(s,v)$ and $\tilde{d}(v) = dist_G(s,v)$ if there exists a shortest path connecting $s$ to $v$ that contains at most $h$ negative edges, with $O(W(m, n)h)$ work and $O(S(m,n)h)$ span.
\end{corollary}

\paragraph*{Distributed Implementation. }
Algorithm~\ref{alg:estdist} changes the weight of each edge and calls SSSP on the new graph. Each node can change its weight by itself in the \congest model. Each Bellman-Ford step can be accomplished in $1$ round and we have to call SSSP oracle $O(h)$ times. This leads to the following corollary.
\begin{corollary}
\label{cor:estdist-congest}
Let $G = (V,E,w)$ be a directed graph with polynomially-bounded integer weights, $s \in V$ and $h \in \mathbb{N}$. Assume that $dist_G(s,v) < \infty$ for all $v \in V$ and there is a \congest algorithm answering SSSP in $T(n, D)$ rounds.
Given $G$, $s$, and $h$ as input, then there is a \congest algorithm that outputs a distance estimate $\tilde{d} \colon V \mapsto \mathbb{Z}$ such that for every $v \in V$, $\tilde{d}(v) \geq dist_G(s,v)$ and $\tilde{d}(v) = dist_G(s,v)$ if there exists a shortest path connecting $s$ to $v$ that contains at most $h$ negative edges, with $T(n, D) h$ rounds.
\end{corollary}

\paragraph*{Quantum Query Implementation.} $\min_{e\in E}w(e)$ can be computed in $O(\sqrt{m})$ or $O(n)$ queries by using the following well-known minimum finding lemma. 

\begin{lemma}[\cite{ahuja1999quantum,durr1996quantum}]\label{lem:quantummin}
    Given quantum query access to a list of unordered items of length $x$, there is a quantum query algorithm that finds the minimum/maximum amongst the items with high probability by making $O(\sqrt{x})$ queries.
\end{lemma}

Every quantum matrix or list query in the updated $H$ or $H_0$ can be done by constant queries to the original $G$ (notice that $H$ adds some certerin weight to edges, and $H_0$ takes the maximum of $0$ and each edge as the final weight). When doing the Bellman-Ford step, for each node $v$ we use~\Cref{lem:quantummin} to update the distance in $\sqrt{deg_G(v)}$ or $\sqrt{n}$ queries. In total it is $\sum_{v\in V}\sqrt{deg_G(v)}\le \sqrt{mn}$ For each oracle step. We get the following corollary.

\begin{corollary}
\label{cor:quantumestdist}
Let $G = (V,E,w)$ be a directed graph with polynomially-bounded integer weights, $s \in V$ and $h \in \mathbb{N}$. Assume that $dist_G(s,v) < \infty$ for all $v \in V$ and there is a quantum edge query algorithm answering SSSP in $Q(m,n)$ rounds.
Given $G$, $s$, and $h$ as input, then there is a quantum query algorithm that outputs a distance estimate $\tilde{d} \colon V \mapsto \mathbb{Z}$ such that for every $v \in V$, $\tilde{d}(v) \geq dist_G(s,v)$ and $\tilde{d}(v) = dist_G(s,v)$ if there exists a shortest path connecting $s$ to $v$ that contains at most $h$ negative edges, with $\tOh{Q(m,n)}$ queries. 
\end{corollary}

\section{\scaledown}
\label{sec:scaledown_analysis}
In this section, we will give an analysis of the \scaledown function (see Algorithm~\ref{alg:scaledown}). Then combining the subroutines described in previous sections, we give the parallel and distributed implementation of \scaledown.

\subsection{Analysis}
The formal statement for the correctness and oracle-complexity of \scaledown is as follows.

\begin{theorem}[\scaledown]
    \label{thm:scaledown}
    Let $G = (V,E,w)$ be a weighted directed graph, $\Delta \leq n$ and $B \in \mathbb{N}$. The input has to satisfy that $w(e) \geq -2B$ for all $e \in E$. If the graph $G$ does not contain a negative cycle, then the input must also satisfy $\eta(G^B) \leq \Delta$; that is, for every $v \in V$ there is a shortest $sv$-path in $G^B_s$ with at most $\Delta$ negative edges (\Cref{def:GBs,def:eta}). \\
    Then, $ScaleDown(G,\Delta,B)$ returns a polynomially bounded potential $\phi$ such that if $G$ does not contain a negative cycle, then $w_\phi(e) \geq -B$ for all $e \in E$, with high probability. $ScaleDown(G,\Delta,B)$ calls the non-negative SSSP oracle $\Oracle{}$  $2^{O(\sqrt{\log n} \log \log n)}$ times.
\end{theorem}
\begin{proof}
It directly follows from \Cref{lem:scaledown_base_case} and \Cref{lem:scaledown_induction_step} that $\phi$ satisfies the conditions stated in \Cref{thm:scaledown}.
It remains to show that the negative-weight shortest path oracle $\Oracle$ is called $2^{O(\sqrt{\log n} \log \log n)}$ times in total.

We first upper bound the total number of recursive invocations of \scaledown.
As $\Delta \leq n$, the recursion depth is upper bounded by $O(\sqrt{\log n})$. As \scaledown recursively calls itself $O(\log n)$ times, the total number of calls is upper bounded by $\log(n)^{O(\sqrt{\log n})} = 2^{O(\sqrt{\log n} \log \log n)}$.

Next, we show that in a single call the total number of invocations to $\Oracle$ is upper bounded by $2^{O(\sqrt{\log n})}$.
\LDD calls $\Oracle$ for $\tOh{1}$ times (\Cref{lem:LDDcorrectness}).
The same holds for \SccTop (\Cref{prop:schudy-fin}), and \FixDag makes $0$ calls to $\Oracle$ (\Cref{lem:fix-dag}).
Finally, in the base case \EstDist makes $O(2^{\sqrt{\log n}})$ calls to $\Oracle$ and in Phase 3 it makes $O(\log^2(n)2^{\sqrt{\log n}}) = 2^{O(\sqrt{\log n})}$ calls (plugging appropriate values of $h$ into \Cref{lem:estdist}).
Hence, the total number of calls to $\Oracle$ is indeed upper bounded by $2^{O(\sqrt{\log n} \log \log n)}$.
\end{proof}

\begin{lemma}
\label{lem:scaledown_base_case}
If $\Delta \leq 2^{\sqrt{\log n}}$ and $G$ does not contain a negative weight cycle, then $w_{G_\phi}(e) \geq -B$ for every $e \in E$ and $\phi$ is polynomially bounded.
\end{lemma}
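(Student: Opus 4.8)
The plan is to simply trace through the base case of Algorithm~\ref{alg:scaledown}, i.e. the branch $\Delta \le 2^{\sqrt{\log n}}$, and invoke the guarantee of \EstDist (\cref{lem:estdist}) together with the standard price-function fact (\cref{lem:prelim_nonnegative}). First I would recall that in this branch the algorithm sets $\tilde{d} \leftarrow \EstDist(G^B_s, s, 2^{\sqrt{\log n}})$, then $\phi \leftarrow rem(\tilde{d},s)$, and returns $\phi$; so $\phi(v) = \tilde d(v)$ for all $v \in V$.

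The key step is to show $\phi(v) = dist_{G^B_s}(s,v)$ for every $v \in V$. By hypothesis of \cref{thm:scaledown}, since $G$ has no negative-weight cycle, the input satisfies $\eta(G^B) \le \Delta \le 2^{\sqrt{\log n}}$; unwinding \cref{def:eta}, this means that for every $v \in V$ there is a shortest $sv$-path in $G^B_s$ using at most $2^{\sqrt{\log n}}$ negative edges. Note also that $dist_{G^B_s}(s,v) < \infty$ for all $v$ (because of the direct weight-$0$ edges $(s,v)$ in $G_s$, and raising negative edges only increases weights, so no vertex can have distance $-\infty$ without $G$ having a negative cycle — actually $dist_{G^B_s}(s,v)\le 0$ always, and it is finite since $G^B$ has no negative cycle as $w^B(e)\ge w(e)$), so the hypothesis of \cref{lem:estdist} is met. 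Applying \cref{lem:estdist} with $h = 2^{\sqrt{\log n}}$ gives $\tilde d(v) \ge dist_{G^B_s}(s,v)$ for all $v$, with equality whenever a shortest $sv$-path with at most $h$ negative edges exists — which, as just argued, holds for every $v$. Hence $\phi(v) = dist_{G^B_s}(s,v)$ for all $v \in V$.

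Now $G^B$ has no negative-weight cycle (any cycle $C$ satisfies $w^B(C) \ge w(C) \ge 0$ since $w^B \ge w$ edgewise and $G$ has none), so \cref{lem:prelim_nonnegative} applied to $G^B$ with the dummy source $s$ and price function $\phi(v) = dist_{(G^B)_s}(s,v) = dist_{G^B_s}(s,v)$ yields that all edge weights in $(G^B)_\phi$ are non-negative, i.e. $w^B_\phi(e) \ge 0$ for all $e \in E$. For $e \in E \setminus \Eneg(G)$ this reads $w(e) + \phi(u) - \phi(v) \ge 0 \ge -B$; for $e = (u,v) \in \Eneg(G)$ we have $w^B(e) = w(e) + B$, so $w(e) + \phi(u) - \phi(v) = w^B_\phi(e) - B \ge -B$. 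In both cases $w_\phi(e) \ge -B$, which is $w_{G_\phi}(e) \ge -B$ as claimed. Finally, polynomial boundedness of $\phi$: the weights of $G^B_s$ are polynomially bounded (original weights are, and $B$ is polynomially bounded), so all finite distances $dist_{G^B_s}(s,v)$ lie in a polynomially bounded range, and \EstDist returns polynomially bounded values (it only ever takes minima/sums of oracle outputs on graphs with polynomially bounded weights); hence $\phi$ is polynomially bounded.

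I do not expect a real obstacle here — this is the easy base case and everything follows by assembling \cref{lem:estdist} and \cref{lem:prelim_nonnegative}. The one point requiring a line of care is verifying the precondition $dist_{G^B_s}(s,v) < \infty$ of \cref{lem:estdist}, and the bookkeeping that "$\eta(G^B) \le \Delta \le 2^{\sqrt{\log n}}$" is exactly the statement that a shortest path with at most $h$ negative edges exists; both are immediate from the definitions.
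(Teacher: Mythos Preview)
Your proposal is correct and follows essentially the same approach as the paper's own proof: invoke the output guarantee of \EstDist (\cref{lem:estdist}) to conclude $\phi(v) = dist_{G^B_s}(s,v)$, then apply \cref{lem:prelim_nonnegative} to get $w^B_\phi(e) \ge 0$ and hence $w_{G_\phi}(e) \ge -B$. Your write-up is more careful than the paper's (you explicitly verify the preconditions of \cref{lem:estdist} and address polynomial boundedness, both of which the paper leaves implicit), but the argument is the same.
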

\begin{proof}
It follows from the output guarantees of \EstDist (see \Cref{lem:estdist}) that $\phi(v) = \tilde{d}(v) = dist_{G^B_s}(s,v)$ for every $v \in V$. Therefore, \Cref{lem:prelim_nonnegative} implies that for every $e \in E$, $w_{G^B_\phi}(e) \geq 0$ and therefore $w_{G_\phi}(e) \geq -B$, as desired. 
\end{proof}

The following lemmas come from Bernstein \etal~\cite{bernstein2022negative}.
\begin{lemma}[Lemma 4.3 of Bernstein \etal~\cite{bernstein2022negative}]
\label{lem:scaledown_distance_in_SCC}
For every $j$ and every $u,v \in V_j$, $dist_G(u,v) \leq \lfloor \Delta/2^{\sqrt{\log n}}\rfloor B$.
\end{lemma}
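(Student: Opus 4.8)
The plan is to derive the bound directly from the low-diameter decomposition guarantee (\Cref{lem:LDDcorrectness}) together with a one-line edgewise comparison of the weight functions $w$ and $w^B_{\geq 0}$. Write $d := \lfloor \Delta/2^{\sqrt{\log n}}\rfloor B$ for the diameter parameter used in Phase~0. Recall that Phase~0 sets $\Erem \leftarrow \LDD(G^B_{\geq 0},d)$ and then takes $V_1,V_2,\ldots$ to be the strongly connected components of the directed graph $(V,E\setminus\Erem)$. Strong connectivity is a purely combinatorial property of the edge set and does not depend on weights, so these $V_j$ are precisely the SCCs of ``the subgraph $E\setminus\Erem$'' to which \Cref{lem:LDDcorrectness} refers; a vertex isolated in $(V,E\setminus\Erem)$ forms a singleton SCC, for which the asserted bound reduces to the trivial $dist_G(u,u)=0\le d$.

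First I would invoke \Cref{lem:LDDcorrectness}. Since $\Erem=\LDD(G^B_{\geq 0},d)$ and $G^B_{\geq 0}$ is a legitimate input to that lemma (its weights $w^B_{\geq 0}(e)=\max\{0,w^B(e)\}$ are non-negative integers and polynomially bounded), every SCC of $(V,E\setminus\Erem)$ has weak diameter at most $d$ in $G^B_{\geq 0}$. In particular, for any $u,v\in V_j$ we obtain $dist_{G^B_{\geq 0}}(u,v)\le d$ and $dist_{G^B_{\geq 0}}(v,u)\le d$.

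Next I would compare the two weight functions edgewise: for every $e\in E$ we have $w^B_{\geq 0}(e)\ge w(e)$. Indeed, if $e\notin\Eneg(G)$ then $w^B(e)=w(e)$, so $w^B_{\geq 0}(e)=\max\{0,w(e)\}\ge w(e)$; and if $e\in\Eneg(G)$ then $w^B(e)=w(e)+B\ge w(e)$ since $B\ge 0$, so $w^B_{\geq 0}(e)=\max\{0,w^B(e)\}\ge w^B(e)\ge w(e)$. Hence $w(P)\le w^B_{\geq 0}(P)$ for every path $P$, and minimizing over all $u$-to-$v$ paths gives $dist_G(u,v)\le dist_{G^B_{\geq 0}}(u,v)\le d$; the symmetric argument over $v$-to-$u$ paths gives $dist_G(v,u)\le d$. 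Together these are exactly the statement of the lemma.

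I do not expect a genuine obstacle here; the only points needing care are (i) keeping the direction of the weight inequality straight — the goal is an \emph{upper} bound on $dist_G$, so one wants $w^B_{\geq 0}\ge w$ edgewise, and this holds precisely because rounding negative edges up to $0$ and adding $B$ only ever increase weights — and (ii) checking that the SCC partition returned by SCC+Topsort on $(V,E\setminus\Erem)$ coincides with the partition \Cref{lem:LDDcorrectness} is phrased for, which is immediate since SCCs ignore edge weights. (This is the analogue of Lemma~4.3 of \cite{bernstein2022negative}, with their decomposition parameter replaced by ours.)
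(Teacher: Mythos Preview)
Your proposal is correct and follows the same argument as in \cite{bernstein2022negative}: the LDD guarantee gives weak diameter at most $d$ in $G^B_{\geq 0}$, and since $w^B_{\geq 0}(e)\ge w(e)$ for every edge $e$, distances in $G$ are only smaller. The paper itself does not reprove the lemma but simply cites it, and your write-up is exactly the standard proof.
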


\begin{lemma}[Lemma 4.4 of Bernstein \etal~\cite{bernstein2022negative}]
\label{lem:expectation_is_good}
If $\eta(G^B) \leq \Delta$, then for every $v \in V$, $E[|P_{G^B}(v) \cap \Erem|] = O(\log^2 (n) 2^{\sqrt{\log n}})$.
\end{lemma}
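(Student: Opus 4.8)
\textbf{Proof proposal for \cref{lem:expectation_is_good}.}
The plan is a direct linearity-of-expectation argument over the edges of the fixed path $P := P_{G^B}(v)$. The only randomness lives in the low-diameter decomposition: recall that Algorithm~\ref{alg:scaledown} sets $\Erem \leftarrow \LDD(G^B_{\geq 0}, d)$ with $d = \lfloor \Delta/2^{\sqrt{\log n}}\rfloor B$, and that $\Erem$ is only formed in the recursive case, so we may assume $\Delta > 2^{\sqrt{\log n}}$ and hence $d \ge \tfrac12(\Delta/2^{\sqrt{\log n}})B \ge 1$. The hypothesis $\eta(G^B)\le\Delta<\infty$ forbids a negative cycle in $G^B$ (a vertex on one would have $dist_{(G^B)_s}(s,\cdot)=-\infty$), so $P$ is well defined and may be taken simple, i.e.\ with at most $n$ edges. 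By \cref{lem:LDDcorrectness} applied to $G^B_{\geq 0}$, every edge $e$ satisfies $\Pr[e\in\Erem] = O\!\big(w^B_{\geq 0}(e)\log^2 n / d + n^{-8}\big)$, so summing over $e\in P$ (the $n^{-8}$ terms contribute only $O(n^{-7})$) it suffices to establish
\[
  \sum_{e \in P} w^B_{\geq 0}(e) \;\le\; \Delta B .
\]
Given this, $E[|P\cap\Erem|] = O\!\big(\tfrac{\log^2 n}{d}\cdot \Delta B\big) + O(n^{-7}) = O(\log^2 n \cdot 2^{\sqrt{\log n}})$, since $\Delta B / d = \Delta/\lfloor \Delta/2^{\sqrt{\log n}}\rfloor \le 2\cdot 2^{\sqrt{\log n}}$.

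The crux is the displayed bound. I would split $\sum_{e\in P} w^B_{\geq 0}(e) = \sum_{e\in P}\max\{0,w^B(e)\} = w^B(P) - \sum_{e\in P,\, w^B(e)<0} w^B(e)$ and bound the two pieces. For $w^B(P)$: since $P=P_{G^B}(v)$ is a shortest $sv$-path in $(G^B)_s$, $w^B(P) = dist_{(G^B)_s}(s,v) \le 0$, because $(G^B)_s$ contains the dummy edge $(s,v)$ of weight $0$. For the negative part: any edge $e$ with $w^B(e)<0$ must have been negative already in $G$, so $w^B(e) = w(e)+B \ge -2B+B = -B$ by the input assumption $w(e)\ge -2B$ of \cref{thm:scaledown}; and by \cref{def:eta} the number of such edges on $P_{G^B}(v)$ is exactly $\mu_{G^B}(v) \le \eta(G^B) \le \Delta$. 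Hence $-\sum_{e\in P,\, w^B(e)<0} w^B(e) \le \Delta B$, and combining gives $\sum_{e\in P} w^B_{\geq 0}(e) \le 0 + \Delta B = \Delta B$.

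Assembling these pieces proves the lemma; the remaining details (the floor estimate $\lfloor x\rfloor\ge x/2$ for $x\ge 1$, discarding the $n^{-8}$ terms, and the reduction to a simple $P$) are routine. I expect the only genuinely delicate point to be spotting the two-part decomposition above: although $P$ could in principle be a long path whose individual edges have large weight in $G^B_{\geq 0}$, its total $w^B_{\geq 0}$-weight is forced to be $O(\Delta B)$ precisely because (i) its true $w^B$-length is at most $0$ thanks to the dummy source, and (ii) rounding the at most $\Delta$ negative edges up to $0$ inflates the length by at most $B$ apiece. Everything downstream is then just plugging in $d = \lfloor \Delta/2^{\sqrt{\log n}}\rfloor B$.
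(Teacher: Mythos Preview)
The paper does not give its own proof of this lemma; it is stated as a quotation of Lemma~4.4 of Bernstein \etal~\cite{bernstein2022negative} (with the decomposition radius changed to $\lfloor \Delta/2^{\sqrt{\log n}}\rfloor B$). Your argument is correct and is exactly the standard one: linearity of expectation over the edges of $P_{G^B}(v)$, the LDD per-edge bound from \cref{lem:LDDcorrectness}, and the observation that $\sum_{e\in P} w^B_{\ge 0}(e)\le \Delta B$ because $w^B(P)=dist_{G^B_s}(s,v)\le 0$ and rounding up the at most $\Delta$ negative edges costs at most $B$ each.
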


\begin{lemma}[Lemma 4.5 of Bernstein \etal~\cite{bernstein2022negative}]
\label{lem:scaledown_bounding_eta}
If $G$ has no negative cycle, then $\eta(H^B) \leq \lfloor \Delta /2^{\sqrt{\log n}} \rfloor$.
\end{lemma}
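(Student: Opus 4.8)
The plan is to exploit that $H=\bigcup_j G[V_j]$ is a disjoint union of induced subgraphs on the SCCs $V_j$ of $(V,E\setminus\Erem)$, each of which has small weak diameter in $G$. Write $D':=\lfloor\Delta/2^{\sqrt{\log n}}\rfloor$; the target is $\eta(H^B)\le D'$. I would first record two preliminary facts. Since $w^B(e)\ge w(e)$ for every edge, $G^B$ and its subgraph $H^B$ have no negative-weight cycle, so every distance appearing below is finite. And \cref{lem:scaledown_distance_in_SCC} gives $dist_G(u,v)\le D'B$ for $u,v$ in a common $V_j$; since such $V_j$ lies inside a single SCC of $G$, the same bound holds for the reversed pair $(v,u)$, so combining with $dist_G(u,v)+dist_G(v,u)\ge 0$ (no negative cycle) yields the two-sided estimate $-D'B\le dist_G(u,v)\le D'B$ within a component.

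Next I would fix $v\in V$ with component $V_j$ and take an arbitrary shortest $sv$-path $P$ in $(H^B)_s$, which I may assume to be simple. Since $s$ has only outgoing edges and $H^B$ has no edge joining distinct components, $P$ has the form $(s,u)\cdot Q$ with $u\in V_j$ and $Q$ a simple $u$-$v$ path inside $G[V_j]$; the case $u=v$ is trivial, so assume $u\ne v$ and set $N:=|\Eneg(G)\cap Q|$. The heart of the argument is to sandwich $w^B(Q)$. From above, the weight-$0$ edge $(s,v)$ shows $dist_{(H^B)_s}(s,v)\le 0$, and since the weight of $P$ equals $w^B(Q)$ and $P$ is shortest, $w^B(Q)\le 0$. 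From below, $w^B(Q)=w(Q)+NB\ge dist_{G[V_j]}(u,v)+NB\ge dist_G(u,v)+NB\ge (N-D')B$. Hence $(N-D')B\le 0$, so $N\le D'$ (assuming $B\ge1$, as holds in every invocation; the degenerate case $B=0$ makes the statement trivial). Since the $H^B$-negative edges of $P$ form a subset of $\Eneg(G)\cap Q$, the path $P$ carries at most $D'$ negative edges of $H^B$; as $v$ and $P$ were arbitrary, $\eta(H^B)\le D'=\lfloor\Delta/2^{\sqrt{\log n}}\rfloor$.

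I expect the only genuinely load-bearing step to be the two-sided bound $|dist_G(u,v)|\le D'B$ within a component — in particular the lower bound, obtained from the reverse-direction weak-diameter guarantee together with acyclicity of $G$ — since this is exactly what caps how many $+B$ shifts (one per negative edge) a shortest $s$-to-$v$ path can afford against a budget of $D'B$. Everything else is routine bookkeeping: verifying that $Q$ is confined to $G[V_j]$ because $H$ has no inter-component edges, that $\Eneg(H^B)\cap Q\subseteq\Eneg(G)\cap Q$, and that all distances in sight are finite, which is precisely where monotonicity $w^B\ge w$ and the acyclicity of $G$ are used. I do not anticipate any substantive obstacle beyond stating these checks carefully.
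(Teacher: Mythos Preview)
Your proof is correct and follows essentially the same route as the original argument in \cite{bernstein2022negative} (which this paper cites without reproducing): derive the two-sided bound $|dist_G(u,v)|\le D'B$ for $u,v$ in a common $V_j$ from the weak-diameter guarantee plus acyclicity, then sandwich $w^B(Q)$ between $(N-D')B$ and $0$ to force $N\le D'$. One small redundancy: your remark that ``$V_j$ lies inside a single SCC of $G$'' is not needed, since \cref{lem:scaledown_distance_in_SCC} already asserts $dist_G(u,v)\le D'B$ for \emph{every} ordered pair $u,v\in V_j$, so the reverse direction is immediate.
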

During Phase 1, we perform the recursive call $ScaleDown(H,\lfloor \Delta/2^{\sqrt{\log n}}\rfloor, B)$. Lemma \ref{lem:scaledown_bounding_eta} implies the input satisfies the requirements of \scaledown (see \Cref{thm:scaledown}). 
Hence, we can assume by induction (because of the base case proven in \Cref{lem:scaledown_base_case}) that \scaledown outputs a price function $\phi_1$ satisfying the following: 
\begin{corollary}
\label{cor:scaledown_phase_1}
If $G$ has no negative-weight cycle, then all edges in $G^B_{\phi_1}[V_j]$ are non-negative for every $j$, with high probability.
\end{corollary}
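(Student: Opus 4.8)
The plan is to read the corollary directly off the inductive hypothesis, namely \Cref{thm:scaledown} instantiated on the Phase~1 recursive call $\phi_1 \leftarrow \scaledown\left(H,\lfloor \Delta/2^{\sqrt{\log n}}\rfloor,B\right)$, where $H=\bigcup_j G[V_j]$. The first task is to certify that this call is well-posed, i.e.\ that $H$, $\Delta_{rec}:=\lfloor \Delta/2^{\sqrt{\log n}}\rfloor$, and $B$ meet the input requirements of \Cref{thm:scaledown}. Three of these are immediate: $\Delta_{rec}\le \Delta\le n$; every edge of $H$ is an edge of $G$ with the same weight, so $w(e)\ge -2B$ persists on $H$; and since $H$ is a vertex-disjoint union of the induced subgraphs $G[V_j]$, every cycle of $H$ lies within a single $V_j$ and is thus a cycle of $G$, so $H$ has no negative-weight cycle whenever $G$ has none. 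The one substantive requirement, $\eta(H^B)\le \Delta_{rec}$ in the no-negative-cycle case, is exactly \Cref{lem:scaledown_bounding_eta}. Finally, in this branch $\Delta>2^{\sqrt{\log n}}$, so $\Delta_{rec}<\Delta$ and the induction is well-founded, with \Cref{lem:scaledown_base_case} as its base case.

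With the recursive call licensed, \Cref{thm:scaledown} yields that $\phi_1$ is polynomially bounded and, with high probability, $(H^B)_{\phi_1}$ has no negative edge, i.e.\ $w^B_{\phi_1}(e)\ge 0$ for every $e\in E(H)$. I want to flag that I need this $(H^B)_{\phi_1}$-nonnegativity form of the conclusion, which is slightly stronger than the literally stated $w_{\phi_1}(e)\ge -B$: for an edge that is already non-negative in $G$ the latter bound does not suffice to conclude non-negativity in $G^B_{\phi_1}$. The stronger form is available because \scaledown in fact computes $\phi_1(v)=dist_{(H^B)_s}(s,v)$ with high probability (as in the overview and in the proof of \Cref{lem:scaledown_base_case}), and $H^B$ inherits the absence of negative cycles from $H$, so \Cref{lem:prelim_nonnegative} makes every edge of $(H^B)_{\phi_1}$ non-negative.

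The remaining step is bookkeeping. Fix $j$ and an edge $e=(u,v)$ with $u,v\in V_j$: then $e\in E(V_j)\subseteq E(H)$, and the weight of $e$ in $G^B$ coincides with its weight in $H^B$ (raising a negative edge by $B$ is the same operation in both graphs, and passing to the subgraph induced by $V_j$ removes no edge internal to $V_j$). Hence $w^B_{\phi_1}(e)\ge 0$ by the previous paragraph, so $e$ is non-negative in $G^B_{\phi_1}[V_j]$; ranging over all $j$ and all such $e$ gives the corollary. The only real obstacle is hygiene: correctly invoking \Cref{lem:scaledown_bounding_eta} to justify the recursion, and being careful to use the $(H^B)_{\phi_1}$-nonnegativity conclusion rather than the weaker $w_{\phi_1}(e)\ge -B$; everything else is unwinding the definitions of $H$, $G^B$, and $V_j$-induced subgraphs.
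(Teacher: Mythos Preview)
Your proposal is correct and follows the same approach as the paper: verify the recursive call's preconditions via \Cref{lem:scaledown_bounding_eta} and then read off the corollary from the induction hypothesis (\Cref{thm:scaledown}), with \Cref{lem:scaledown_base_case} as the base. Your explicit observation that the literally stated conclusion $w_{\phi_1}(e)\ge -B$ does not by itself yield $w^B_{\phi_1}(e)\ge 0$ for originally non-negative edges, and that one must instead use the stronger fact $\phi_1(v)=dist_{(H^B)_s}(s,v)$ together with \Cref{lem:prelim_nonnegative}, is a point the paper glosses over but which its proofs of \Cref{lem:scaledown_base_case} and \Cref{lem:scaledown_induction_step} do support.
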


\paragraph*{Phase 2: Make all edges in $G^B \setminus \Erem$ non-negative, with high probability. }

\begin{lemma}
\label{lem:scaledown_phase_2}
Assume that $G$ has no negative-weight cycle. Also, assume that all edge weights in $G^B_{\phi_1}[V_j]$ are non-negative and polynomially bounded for every $j$, which happens with high probability. Then, all edge weights in $G^B_{\phi_2} \setminus \Erem$ are non-negative and polynomially bounded with high probability.
\end{lemma}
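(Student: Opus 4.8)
The plan is to obtain the lemma as essentially a direct invocation of the \FixDag guarantee (\cref{lem:fix-dag}) on the graph $G' := G^B_{\phi_1}\setminus\Erem$ together with the labelling $(r_v)_{v\in V}$ computed in Line~\ref{line:scaledown-scc} of Algorithm~\ref{alg:scaledown}, followed by unwinding the definition $\phi_2 = \phi_1+\psi_2$. So the first task is to verify that $G'$ and $(r_v)_{v\in V}$ satisfy the two hypotheses of \cref{lem:fix-dag}.

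For the topological-ordering hypothesis: by \cref{lem:scc-topsort} applied to $(V,E\setminus\Erem)$, with high probability $r_u=r_v$ holds exactly when $u$ and $v$ lie in the same SCC of $(V,E\setminus\Erem)$ — equivalently when they lie in a common $V_j$ — and $r_u>r_v$ whenever the SCC of $u$ has an edge to the SCC of $v$. Since $G'$ has the same edge set $E\setminus\Erem$ as $(V,E\setminus\Erem)$, it has the same SCCs, so $(r_v)_{v\in V}$ respects a topological order of the SCCs of $G'$. For the intra-SCC non-negativity hypothesis: any $(u,v)\in E(G')=E\setminus\Erem$ whose endpoints lie in the same SCC has $u,v\in V_j$ for some $j$, hence $(u,v)\in E(G^B_{\phi_1}[V_j])$ (an induced subgraph keeps every edge with both endpoints in the set), so the assumption of the lemma — equivalently \cref{cor:scaledown_phase_1} — gives $w^B_{\phi_1}(u,v)\ge 0$. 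Finally, \FixDag requires polynomially-bounded input weights: every edge weight of $G'$ equals $w^B(u,v)+\phi_1(u)-\phi_1(v)$, which is polynomially bounded because the original weights and $B$ are, and $\phi_1$ is polynomially bounded (output guarantee of the recursive \scaledown call in Line~\ref{line:scaledown-recurse}).

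With both hypotheses established, \cref{lem:fix-dag} returns a polynomially-bounded $\psi_2 = FixDAGEdges(G^B_{\phi_1}\setminus\Erem,r)$ with $w^B_{\phi_1+\psi_2}(u,v)\ge 0$ for every $(u,v)\in E\setminus\Erem$. Since $\phi_2=\phi_1+\psi_2$ and $(G^B_{\phi_1})_{\psi_2}=G^B_{\phi_1+\psi_2}=G^B_{\phi_2}$ by \cref{def:pricefunction}, this says exactly that every edge weight of $G^B_{\phi_2}\setminus\Erem$ is non-negative; and $\phi_2$, being a sum of two polynomially-bounded functions, is polynomially bounded, so those weights are polynomially bounded too. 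A union bound over the high-probability events of \cref{lem:scc-topsort} and \cref{cor:scaledown_phase_1} (the only sources of randomness involved) yields the ``with high probability'' qualifier.

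I do not expect a genuine obstacle here: the content is bookkeeping, and the only points needing a line of care are matching the SCCs of $(V,E\setminus\Erem)$ with the sets $V_j$ used in Algorithm~\ref{alg:scaledown}, confirming that an intra-SCC edge of $G'$ belongs to $G^B_{\phi_1}[V_j]$, and propagating polynomial boundedness through the sum $\phi_1+\psi_2$.
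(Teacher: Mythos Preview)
Your proposal is correct and follows essentially the same route as the paper: verify the two hypotheses of \cref{lem:fix-dag} for $G^B_{\phi_1}\setminus\Erem$ with the labelling from \cref{lem:scc-topsort}, apply \FixDag, and unwind $\phi_2=\phi_1+\psi_2$. If anything, your argument is slightly more careful than the paper's (you correctly restrict the intra-SCC non-negativity check to edges whose endpoints share an SCC, and you explicitly track polynomial boundedness of the input to \FixDag).
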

\begin{proof}
\scaledown calls $\Psi_2 \leftarrow \FixDag(G^B_{\phi_1} \setminus \Erem,r)$.
As we assume that all edge weights in $G^B_{\phi_1}[V_j]$ are polynomially bounded for every $j$, it follows that for every $(u,v) \in E \setminus \Erem$ that $w_{G^B_{\phi_1}}(u,v) \geq 0$, which is the first input condition of \FixDag according to \Cref{lem:fix-dag}. The second condition is that $(r_{v})_{v \in V}$ is a polynomially-bounded labelling which respects a topological ordering of SCCs of $G^B_{\phi_1} \setminus \Erem$. It follows from setting $(r_v)_{v \in V} \leftarrow SCC+Topsort((V,E \setminus \Erem))$ and the output guarantees of \SccTop that this condition is satisfied with high probability.
If this condition is indeed satisfied, then the output guarantee of \FixDag in \Cref{lem:fix-dag} gives that all edge weights in $(G^B_{\phi_1} \setminus \Erem)_{\Psi_2} = G^B_{\phi_2} \setminus \Erem$ are non-negative and polynomially bounded, as desired. 
\end{proof}

\paragraph*{Phase 3: Compute $dist_{G^B_s}(s,v)$ for every $v$ with probability at least one half. }

\begin{lemma}
\label{lem:scaledown_phase3}
For every $v \in V$, it holds that $\tilde{d}^{(i)}(v) \geq dist_{G^B_s}(s,v)$. Moreover, if $G$ does not contain a negative cycle, then $\tilde{d}^{(i)}(v) = dist_{G^B_s}(s,v)$ with probability at least $1/2$.
\end{lemma}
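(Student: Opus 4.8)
The plan is to split the statement into its two halves: first the deterministic lower bound $\tilde{d}^{(i)}(v) \ge dist_{G^B_s}(s,v)$, and then the probabilistic claim that equality holds with probability at least $1/2$ when $G$ has no negative cycle. For the first half I would trace through the definitions in Phase~3. We have $\tilde{d}^{(i)} = rem(\tilde{d}_3, s) + \phi_2$, where $\tilde{d}_3 \leftarrow \EstDist((G^B_s)_{\phi'_2}, s, h)$. By the output guarantee of \EstDist (\cref{lem:estdist}), $\tilde{d}_3(v) \ge dist_{(G^B_s)_{\phi'_2}}(s,v)$ for every $v$. Now $(G^B_s)_{\phi'_2}$ is obtained from $G^B_s$ by the price function $\phi'_2$ (which agrees with $\phi_2$ on $V$ and sets $\phi'_2(s) = 0$), so by \cref{lem:prelim_eqivalence_price_function}, $dist_{(G^B_s)_{\phi'_2}}(s,v) = dist_{G^B_s}(s,v) + \phi'_2(s) - \phi'_2(v) = dist_{G^B_s}(s,v) - \phi_2(v)$. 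Adding $\phi_2(v)$ back gives $\tilde{d}^{(i)}(v) \ge dist_{G^B_s}(s,v)$, which is the first claim. (I should double-check that $\EstDist$'s preprocessing requiring nonnegative $d(v)$ does not break this inequality — but \cref{lem:estdist} is stated as a clean black box with exactly the overestimation guarantee, so this is immediate.)

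For the second half, assume $G$ has no negative cycle. The key is to show that with probability at least $1/2$, the particular shortest path $P := P_{G^B}(v)$ (the $sv$-path in $G^B_s$ minimizing the number of negative edges, from \cref{def:eta}) has at most $h$ edges that are negative \emph{in the reweighted graph} $(G^B_s)_{\phi'_2}$. Once that holds, I argue as in the overview: since $G^B_s$ and $(G^B_s)_{\phi'_2}$ are equivalent (\cref{lem:prelim_eqivalence_price_function}), $P$ is still a shortest $sv$-path in $(G^B_s)_{\phi'_2}$, so \cref{lem:estdist} with parameter $h$ gives $\tilde{d}_3(v) = dist_{(G^B_s)_{\phi'_2}}(s,v) = dist_{G^B_s}(s,v) - \phi_2(v)$, hence $\tilde{d}^{(i)}(v) = dist_{G^B_s}(s,v)$ exactly. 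So everything reduces to bounding the number of $(G^B_s)_{\phi'_2}$-negative edges on $P$.

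To bound that count I would invoke the structural facts established in Phases~1 and~2, each holding with high probability: by \cref{cor:scaledown_phase_1}, all edges inside each SCC $V_j$ are nonnegative in $G^B_{\phi_1}$; by \cref{lem:scaledown_phase_2}, all edges of $G^B \setminus \Erem$ are nonnegative in $G^B_{\phi_2}$ (and the edges out of $s$ are nonnegative in $(G^B_s)_{\phi'_2}$ by construction of $\phi'_2$ and a sufficiently large additive shift, exactly as in the \EstDist preprocessing discussion). Consequently every $(G^B_s)_{\phi'_2}$-negative edge on $P$ must lie in $\Erem$. Thus the number of such edges is at most $|P_{G^B}(v) \cap \Erem|$, whose expectation is $O(\log^2(n) 2^{\sqrt{\log n}})$ by \cref{lem:expectation_is_good} (using the hypothesis $\eta(G^B) \le \Delta$ from \cref{thm:scaledown}). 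Choosing the constant in $h = O(\log^2(n) 2^{\sqrt{\log n}})$ large enough, Markov's inequality gives $\Pr[|P_{G^B}(v) \cap \Erem| > h] \le 1/4$, say; combining with the (at most, say, $1/4$) failure probability of the high-probability events from \cref{cor:scaledown_phase_1} and \cref{lem:scaledown_phase_2} via a union bound leaves probability at least $1/2$ that everything succeeds. The main obstacle I anticipate is the bookkeeping around conditioning: \cref{lem:expectation_is_good} bounds an expectation over the randomness of the LDD in Phase~0, while \cref{cor:scaledown_phase_1} and \cref{lem:scaledown_phase_2} are high-probability statements over the randomness of the recursive call and SCC+Topsort; I need these events to be compatible (the Markov bound on $|P \cap \Erem|$ depends only on Phase~0 randomness, which is fine) and I must be careful that "$P$ is still a shortest path in the reweighted graph" does not secretly require the recursion to have succeeded on $P$ specifically — it only requires equivalence of the graphs, which holds unconditionally by \cref{lem:prelim_eqivalence_price_function} since $\phi'_2$ is a genuine price function regardless of whether its nonnegativity guarantees hold. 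Finally I would remark that the roughly-$1/2$ constant can be tuned by adjusting the hidden constants, which is all \cref{thm:scaledown} needs since the outer loop runs $10\log n$ times.
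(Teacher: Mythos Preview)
Your approach matches the paper's proof almost exactly, but there is one genuine error in the second half: the claim that ``the edges out of $s$ are nonnegative in $(G^B_s)_{\phi'_2}$ by construction of $\phi'_2$ and a sufficiently large additive shift'' is false. In $(G^B_s)_{\phi'_2}$ the edge $(s,v)$ has weight $w^B_s(s,v) + \phi'_2(s) - \phi'_2(v) = 0 + 0 - \phi_2(v) = -\phi_2(v)$, which need not be nonnegative; the additive shift you have in mind happens \emph{inside} \EstDist as part of its preprocessing, not in the graph $(G^B_s)_{\phi'_2}$ that is handed to it.

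The paper's fix is immediate: $P_{G^B}(v)$ contains exactly one edge out of $s$, so at most one negative edge of this type can lie on the path. The paper therefore applies Markov with threshold $h-1$ (choosing $h$ so that $h-1 \ge 3\,\E[|P_{G^B}(v)\cap\Erem|]$, giving $\Pr[|P_{G^B}(v)\cap\Erem|\ge h-1]\le 1/3$), so that the event $|P_{G^B}(v)\cap\Erem|\le h-1$ together with $\Eneg(G^B_{\phi_2})\subseteq\Erem$ yields at most $h$ negative edges on $P_{G^B}(v)$ in $(G^B_s)_{\phi'_2}$. With this off-by-one correction your argument is identical to the paper's, including the union bound over the failure of \cref{cor:scaledown_phase_1}/\cref{lem:scaledown_phase_2} and the Markov event.
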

\begin{proof}

For every $v \in V$, we have

\begin{align*}
\tilde{d}^{(i)}(v) &= \tilde{d}_3(v) + \phi_2(v) \\
&\geq dist_{(G^B_s)_{\phi'_2}}(s,v) + \phi_2(v) & \text{(\Cref{lem:estdist})}\\
&= dist_{G^B_s}(s,v) + \phi'_2(s) - \phi'_2(v) + \phi_2(v) \\
&= dist_{G^B_s}(s,v),
\end{align*}

which shows the first part of \Cref{lem:scaledown_phase3}. Moreover, the calculations above also imply that if $\tilde{d}_3(v) = dist_{(G^B_s)_{\phi'_2}}(s,v)$, then $\tilde{d}^{(i)}(v) = dist_{G^B_s(s,v)}$. 
We next show that if $G$ does not contain a negative weight cycle, then $\tilde{d}_3(v) = dist_{(G^B_s)_{\phi'_2}}(s,v)$ with probability at least $1/2$, which then shows the second part of \Cref{lem:scaledown_phase3}. 

According to \Cref{lem:estdist}, $\tilde{d}_3(v) = dist_{(G^B_s)_{\phi'_2}}(s,v)$ if there exists a shortest path connecting $s$ and $v$ in $(G^B_s)_{\phi'_2}$ with at most $h$ negative edges.

Recall that $P_{G^B}(v)$ is a shortest $sv$-path in $G^B_s$. As $G^B_s$ and $(G^B_s)_{\phi'_2}$ are equivalent according to \Cref{lem:prelim_eqivalence_price_function}, this implies that $P_{G^B}(v)$ is also a shortest $sv$-path in $(G^B_s)_{\phi'_2}$. It therefore suffices to show that $P_{G^B}(v)$ has at most $h$ negative edges in $(G^B_s)_{\phi'_2}$ with probability at least $1/2$. 
Combining \Cref{cor:scaledown_phase_1} and \Cref{lem:scaledown_phase_2}, we get that with high probability $\Eneg(G^B_{\phi_2}) \subseteq \Erem$, i.e. each negative edge in $G^B_{\phi_2}$ is contained in $\Erem$, with high probability. Therefore, each negative edge in $(G^B_s)_{\phi'_2}$ is either in $\Erem$ or an outgoing edge from $s$, with high probability. The path $P_{G^B}(v)$ contains exactly one outgoing edge from $s$. Therefore, if $\Eneg(G^B_{\phi_2}) \subseteq \Erem$ and $|P_{G^B}(v) \cap \Erem| \leq h-1$, then $P_{G^B}(v)$ contains at most $h$ negative edges. 
In \Cref{lem:expectation_is_good}, we have shown that $\E[|P_{G^B}(v) \cap \Erem|] = O(\log^2(n)2^{\sqrt{\log n}})$. Therefore, for $h = O(\log^2(n)2^{\sqrt{\log n}})$ being sufficiently large, it holds that $h-1 \geq 3\E[|P_{G^B}(v) \cap \Erem|]$ and therefore a simple Markov bound implies $Pr[|P_{G^B}(v) \cap \Erem| \geq h-1] \leq 1/3$.

Thus, we get

\[Pr[\tilde{d}_3(v) \neq dist_{(G^B_s)_{\phi'_2}}(s,v)] \leq Pr[\Eneg(G^B_{\phi_2}) \not \subseteq \Erem] + Pr[|P_{G^B}(v) \cap \Erem| \geq h-1] \leq 0.5,\]

as desired.
\end{proof}

\begin{lemma}
    \label{lem:scaledown_induction_step}
 If $\Delta > 2^{\sqrt{\log n}}$ and $G$ does not contain a negative weight cycle, then $w_{G_\phi}(e) \geq -B$ for every $e \in E$ and $\phi$ is polynomially bounded.
\end{lemma}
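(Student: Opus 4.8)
The plan is to do the induction step for \cref{thm:scaledown} in exactly the way the outer loop of Algorithm~\ref{alg:scaledown} is designed for: combine the per-iteration guarantee with a boosting argument, then translate "$\phi$ equals the true distances" into "$w_\phi \ge -B$" via a standard price-function fact. Since $\Delta > 2^{\sqrt{\log n}}$ we are in the non-base-case branch, so $\phi = \min_{i=1,\dots,10\log n}\tilde{d}^{(i)}$. The key per-iteration input is \cref{lem:scaledown_phase3}, which already packages everything that happens inside a fixed iteration $i$: it uses \cref{lem:scaledown_bounding_eta} to justify that the recursive call $ScaleDown(H,\lfloor\Delta/2^{\sqrt{\log n}}\rfloor,B)$ meets the hypotheses of \cref{thm:scaledown}, then invokes the inductive hypothesis (valid because of the base case \cref{lem:scaledown_base_case}) through \cref{cor:scaledown_phase_1}, chains it through \cref{lem:scaledown_phase_2}, and concludes $\tilde{d}^{(i)}(v) \ge dist_{G^B_s}(s,v)$ for all $v$, with equality for each fixed $v$ with probability at least $1/2$.

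Next I would boost over iterations. The $10\log n$ iterations of the loop on \cref{line:scaledown-loop} use independent fresh randomness (each runs its own $\LDD$ and its own recursive $ScaleDown$ call), so for a fixed vertex $v$ the events $\{\tilde{d}^{(i)}(v) = dist_{G^B_s}(s,v)\}_i$ are independent and the probability that all of them fail is at most $(1/2)^{10\log n} = n^{-10}$. A union bound over the $n$ vertices gives that, with probability at least $1-n^{-9}$, every $v$ is "hit" by some iteration; combined with the deterministic lower bound $\tilde{d}^{(i)}(v)\ge dist_{G^B_s}(s,v)$ this yields $\phi(v) = \min_i \tilde{d}^{(i)}(v) = dist_{G^B_s}(s,v)$ simultaneously for all $v$, with high probability.

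To finish, I would condition on that high-probability event and apply \cref{lem:prelim_nonnegative}. Since $G$ has no negative-weight cycle and $w^B(e)\ge w(e)$ edge-wise (as $B\ge 0$), the graph $G^B$, and hence $G^B_s$ (no edges enter $s$), has no negative-weight cycle; therefore \cref{lem:prelim_nonnegative} applies to the price function $\phi(v) = dist_{G^B_s}(s,v)$ and gives $w^B_\phi(e)\ge 0$ for every $e\in E$. Using $w^B(e)\le w(e)+B$, this rewrites as $w_\phi(e) = w(e)+\phi(u)-\phi(v) \ge w^B(e)-B+\phi(u)-\phi(v) = w^B_\phi(e)-B \ge -B$, which is the first claim. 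For polynomial boundedness: on this event $\phi(v)=dist_{G^B_s}(s,v)$, and $G^B_s$ has polynomially bounded weights and no negative cycle, so all its shortest-path distances lie in a polynomially bounded range; I would additionally note that each $\tilde{d}^{(i)}$ returned by \EstDist is polynomially bounded by construction, so $\phi$ is polynomially bounded even off the good event.

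The part I expect to require the most care is the bookkeeping of the various "with high probability" sub-events so the union bound actually closes. Within a single iteration the per-vertex guarantee of \cref{lem:scaledown_phase3} already hides several conditionings (success of the recursive call, the event $\Eneg(G^B_{\phi_2})\subseteq\Erem$ from \cref{cor:scaledown_phase_1} and \cref{lem:scaledown_phase_2}), and these structural events must fail with probability $n^{-\omega(1)}$ so that they survive the union bound over iterations and vertices, while the only genuinely constant-probability event — the Markov bound on $|P_{G^B}(v)\cap\Erem|$ behind \cref{lem:scaledown_phase3} — is the one amplified by the $10\log n$ repetitions. Lining up the constants ($h \ge 3\,\E[|P_{G^B}(v)\cap\Erem|]$, the number of repetitions, the final $1-n^{-\Omega(1)}$) is the delicate bit, though it mirrors the corresponding calculation in \cite{bernstein2022negative}.
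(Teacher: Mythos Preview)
Your proposal is correct and follows essentially the same approach as the paper: invoke \cref{lem:scaledown_phase3} for the per-iteration guarantee, boost over the $10\log n$ independent iterations and union-bound over vertices to get $\phi(v)=dist_{G^B_s}(s,v)$ for all $v$ with high probability, then apply \cref{lem:prelim_nonnegative} to conclude $w^B_\phi\ge 0$ and hence $w_\phi\ge -B$. Your write-up is in fact more thorough than the paper's (which compresses the boosting into one sentence and does not spell out the polynomial-boundedness claim).
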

\begin{proof}
\Cref{lem:scaledown_phase3} together with setting $\phi = \min_{i=1,\ldots,10\log (n)} \tilde{d}^{(i)}$ implies that $\phi(v) = dist_{G^B_s}(s,v)$ for every $v \in V$ with high probability. If that's indeed the case, then \Cref{lem:prelim_nonnegative} implies that for every $e \in E$, $w_{G^B_\phi}(e) \geq 0$ and therefore $w_{G_\phi}(e) \geq -B$, as desired. 
\end{proof}

\subsection{Implementation in Various Models}
\label{sec:impl-scaledown}

\paragraph*{Parallel Implementation. }
\scaledown is naturally parallelized, given that its subroutines are parallelized.
\begin{corollary}
\label{cor:scaledown-parallel}
Let $G = (V,E,w), \Delta, B$ satisfy the input requirements for \scaledown. Assume that $dist_G(s,v) < \infty$ for all $v \in V$ and there is a parallel algorithm answering SSSP in $W(m, n)$ work and $S(m, n)$ span.
There is a parallel implementation of \scaledown that succeeds with $O(W(m, n)(\log n)^{O(\sqrt{\log n})})$ work and $\tilde{O}(S(m, n)2^{\sqrt{\log n}})$ span with high probability.
\end{corollary}
\begin{proof}
    Using \Cref{lem:LDDcorrectness} for \LDD, \Cref{prop:schudy-fin} for finding a topological ordering of SCCs, \Cref{cor:fix-dag-parallel} for \FixDag, and finally \Cref{cor:estdist-parallel} for \EstDist, parallel \scaledown calls $\Oracle$ $O(2^{\sqrt{\log n}})$ times.
    Note that when we recurse \scaledown on the new graph, the graph contains at most $O(m)$ edges and $O(n)$ vertex, so each subproblem calls $\Oracle$ at most $O(2^{\sqrt{\log n}})$ times. Each \scaledown calls $O(\log n)$ subproblems and the recursion depth is at most $O(\sqrt{\log n})$. 
    In total, \scaledown calls $\Oracle$ $2^{\sqrt{\log n}}\times (\log n)^{\sqrt{\log n}}$ times, and so it takes $W(m, n)(\log n)^{\sqrt{\log n}}$ work.
    
    For the span, although we need to run Phase 0 - Phase 3 $\log n$ times, we can run them simultaneously, and it only takes $S(m, n) 2^{\sqrt{\log n}}$ span for each level of recursion. The recursion depth is $O(\sqrt{\log n})$. Therefore, the span of \scaledown is $O(S(m, n) 2^{\sqrt{\log n}})$.
\end{proof}

\paragraph*{Distributed Implementation. }
The distributed implementation for \scaledown is slightly more complicated, and involves some subtlety for the following reason: \scaledown makes recursive calls to subgraphs (e.g. $E \setminus \Erem$), which do not necessarily have the same undirected hop diameter as the input graph; however, we want to bound the round complexity in terms of $D$.
\begin{corollary}
\label{cor:scaledown-dist}
Let $G = (V,E,w), \Delta, B$ satisfy the input requirements for \scaledown. Assume that $dist_G(s,v) < \infty$ for all $v \in V$ and there is a \congest algorithm answering SSSP in $T(n,D)$ rounds.
There is a \congest implementation of \scaledown that succeeds with $O((T(n,D) + \sqrt{n} + D)(\log n)^{O(\sqrt{\log n})})$ rounds with high probability.
\end{corollary}
\begin{proof}
    We implement \scaledown in the \DM.
    First, observe that subroutine calls on subgraphs (denoted here with $G'$) of $G$ in lines~\ref{line:scaledown-scc} and \ref{line:scaledown-recurse} can use $G$ as the communication network and hence we can measure the complexity of every line as if run on an $n$ vertex $D$ hop-diameter graph.
    \DM rounds on $G'$ can be straightforwardly simulated by $G$, and $\Oraclem$ calls on $G'$ can be run using $G$ by setting the weights of edges in $E(G) \setminus E(G')$ to be a sufficiently high polynomial in $n$ (which precludes them from being part of any shortest path).

    \textbf{Oracle calls in ScaleDown. }
    The number of calls to $\Oraclem$ follows directly from \Cref{thm:scaledown}.
    It remains to bound the number of \DM rounds.

    \textbf{\DM rounds in ScaleDown. }
    The base case, when $\Delta \le 2^{\sqrt{\log n}}$, only uses calls to $\Oraclem$ and makes up zero rounds.
    Let us hence focus on implementing just one iteration of the loop (line~\ref{line:scaledown-loop}).
    If we can show that this takes $\tOh{1}$ rounds, we are done since across all recursive instances there are $(\log n)^{O(\sqrt{\log n})}$ iterations.

    Computation of \LDD (i.e. $\Erem$) takes $\tOh{1}$ rounds, by \Cref{cor:lddcongestminor}.
    Similarly, computation of \SccTop (i.e. $(r_v)_{v \in V}$) takes $\tOh{1}$ rounds, by \Cref{cor:topsort-impl}.
    Computation of \FixDag (i.e. $\psi_2$) takes exactly $1$ round in the \DM by \Cref{cor:fix-dag-congest}.
    The remaining lines of the algorithm are all internal computations within vertices, or calls to $\Oraclem$, and have no bearing on the number of \DM rounds.
    In all, one iteration consequently takes $\tOh{1}$ rounds.

    To tie things up, one iteration of \scaledown takes $\tOh{1}$ rounds, there are $(\log n)^{O(\sqrt{\log n})}$ iterations, and $O(\log n)$ calls to \scaledown from which the number of \DM rounds is $(\log n)^{O(\sqrt{\log n})}$.
    Using \Cref{thm:minoraggregation} to get a \congest algorithm finishes up the proof.
\end{proof}

\paragraph*{Implementation in quantum query model.} 

We have the following corollary.

\begin{corollary}
\label{cor:quantumscaledown}
Let $G = (V,E,w), \Delta, B$ satisfy the input requirements for \scaledown. Assume that $dist_G(s,v) < \infty$ for all $v \in V$ and there is a quantum edge query algorithm answering SSSP in $Q(m,n)$ queries.
There is a quantum query implementation of \scaledown that succeeds with $Q(m,n)(\log n)^{O(\sqrt{\log n})}$ queries with high probability.
\end{corollary}

Now, we detail the implementation. If \(\Delta \leq 2^{\sqrt{\log n}}\), we invoke \EstDist, which, according to~\Cref{cor:quantumestdist}, has a fast implementation. It is important to note that every query to \(G^{B}_{\geq 0}\) can be realized with constant queries to \(G\). We now elaborate on each phase of the implementation.

\textbf{Phase 0:} This phase decomposes \(V\) into \(V_1, V_2, \ldots\), which are not necessarily the strongly connected components (SCCs) of a graph, as per~\Cref{lem:LDDquantum}. Instead of returning \(E^{rem}\), we directly return \(V_1, V_2, \ldots\), obviating the need for an SCC+Topsort algorithm.

\textbf{Phase 1:} A recursive call is made for the union of induced subgraphs of \(V_i\) (denoted as \(H\)). Each query to \(H\) can be implemented in constant queries to \(G\) as follows: for each edge \((u, v)\), we simply verify whether \(u, v\) are both in the same \(V_i\); if yes, then that edge is returned; otherwise, no edge is returned.

\textbf{Phase 2:} This phase involves calling \FixDag, which is feasible since we already possess \(V_1, V_2, \ldots\).

\textbf{Phase 3:} \EstDist is called, which, as indicated by~\Cref{cor:quantumestdist}, can be performed efficiently.

\paragraph*{}
\section{$\spmain$ (the Outer Shell)}
\label{sec:SPMain}

In this section, we finally give an analysis of the \spmain function (see Algorithm~\ref{alg:SPMain}), and complete the proofs of \Cref{thm:mainparallelreduction,thm:maindistributedreduction}.

\subsection{Analysis}

\begin{theorem}[\spmain]
\label{thm:spmain}
Let $G_{in} = (V,E,w_{in})$ be a directed graph with polynomially bounded integer edge weights and $s_{in} \in V$. Algorithm
\ref{alg:SPMain} takes as input $G_{in}$ and $s_{in}$ and has the following guarantee:

\begin{enumerate}
\item If $G_{in}$ has a negative-weight cycle, then Algorithm \ref{alg:SPMain} outputs ERROR, \\
\item If $G_{in}$ has no negative-weight cycle, then Algorithm \ref{alg:SPMain} computes $dist_G(s_{in},v)$ for every node $v \in V$ with high probability and otherwise it outputs ERROR.
\end{enumerate}
Algorithm \ref{alg:SPMain} invokes the non-negative weight SSSP oracle $\Oracle$ $n^{o(1)}$ times.
\end{theorem}
\begin{proof}
First, consider the case that $G_{in}$ has a negative-weight cycle.
As  we obtain $\bar{G}$ from $G_{in}$ by multiplying each edge weight by $2n$, this implies that there exists a cycle with weight at most $-2n$ in $\bar{G}$. Together with \Cref{lem:prelim_eqivalence_price_function}, this implies that $\bar{G}_{\phi_t}$ also contains a cycle with weight at most $-2n$. Thus, there exists an edge in $\bar{G}_{\phi_t}$ with weight at most $-2$ and this edge has a negative weight in $G^*$. Therefore, Algorithm \ref{alg:SPMain} indeed outputs ERROR. Next, assume that $G_{in}$ does not contain a negative-weight cycle. Then, according to \Cref{claim:SPMain_Gstar_nonnegative_weights}, the graph $G^*$ does not contain a negative-weight edge with high probability. If that's indeed the case, then the algorithm computes $dist_{G_{in}}(s,v)$ for every node $v \in V$ according to \Cref{claim:SPMain_correct_output}, as desired.
It remains to discuss the number of oracle call invocations. As we assume that the edge weights are polynomially bounded, Algorithm \ref{alg:SPMain} invokes ScaleDown $O(\log n)$ times. Each invocation calls the non-negative weight distance oracle $\Oracle$ $n^{o(1)}$ times according to \Cref{thm:scaledown}. Therefore, the total number of oracle invocations is indeed $n^{o(1)}$, which finishes the proof.
\end{proof}

\begin{claim}
    \label{claim:SPMain_correct_output}
    Assume that $G^*$ has non-negative weights. Then, $d_{in}(v) = dist_{G_{in}}(s_{in},v)$ for every vertex $v \in V$. 
\end{claim}
\begin{proof}
We have

\begin{align*}
d_{in}(v) &= \lfloor(d^*(v) - \phi_t(s_{in}) + \phi_t(v)) / (2n) \rfloor \\
&= \lfloor(dist_{G^*}(s_{in},v) - \phi_t(s_{in}) + \phi_t(v)) / (2n) \rfloor & \text{$d^*(v) = dist_{G^*}(s_{in},v)$}\\
&\geq \lfloor(dist_{\bar{G}_{\phi_t}}(s_{in},v) - \phi_t(s_{in}) + \phi_t(v)) / (2n) \rfloor & \text{$w^*(e) \geq \bar{w}_{\phi_t}(e)$ for all $e \in E$} \\
&= \lfloor dist_{\bar{G}}(s_{in},v) / (2n) \rfloor & \text{\Cref{lem:prelim_eqivalence_price_function}}\\
&= \lfloor dist_{G_{in}}(s_{in},v) \rfloor = dist_{G_{in}}(s_{in},v)
\end{align*}

and similarly,

\begin{align*}
d_{in}(v) &= \lfloor(dist_{G^*}(s_{in},v) - \phi_t(s_{in}) + \phi_t(v)) / (2n) \rfloor \\
&\leq \lfloor(dist_{\bar{G}_{\phi_t}}(s_{in},v) + n - \phi_t(s_{in}) + \phi_t(v)) / (2n) \rfloor \\
&= \lfloor (dist_{\bar{G}}(s_{in},v) + n) / (2n) \rfloor \\
&= \lfloor dist_{G_{in}}(s_{in},v) + 1/2 \rfloor = dist_{G_{in}}(s_{in},v).
\end{align*}
\end{proof}

\begin{claim}
    \label{claim:SPMain_Gstar_nonnegative_weights}
    Assume that $G_{in}$ does not contain a negative-weight cycle. Then, the following holds with high probability:
    For all $e \in E$ and $i \in [0,t:= \log_2(B)]$ we have that $\bar{w}_i$ is integral and that $\bar{w}_i(e) \geq -B/2^i$ for all $e \in E$. In particular, $\bar{w}_t(e) \geq -1$ for all $e \in E$ and therefore the graph $G^*$ has non-negative weights.
\end{claim}
\begin{proof}
We prove the claim by induction on $i$. The base case $i = 0$ directly follows from the way $B$ is defined.
Now, assume by induction that the claim holds for $\bar{G}_{\phi_{i-1}}$. The call to $ScaleDown(\bar{G}_{\phi_{i-1}},\Delta := n, B/2^i)$ satisfies the necessary input properties (see \Cref{thm:scaledown}) and in particular $\bar{G}_{\phi_{i-1}}$ does not contain a negative-weight cycle.

Thus, by the output guarantee of \scaledown we have that $(\bar{w}_{\phi_{i-1}})_{\psi_i}(e) \geq (B/2^{i-1})/2 = B/2^i$. The claim follows because as noted in \Cref{def:pricefunction}, $(\bar{w}_{\phi_{i-1}})_{\psi_i} = \bar{w}_{\phi_{i-1} + \psi_i} = \bar{w}_{\phi_i}$.
\end{proof}

\subsection{Implementation in Various Models (Completing Main Theorems)}

Finally, we are ready to wrap up our main results.
\mp*
\begin{proof}
    This follows from the parallel implementation of \scaledown (\Cref{cor:scaledown-parallel}), which is the non-trivial part of \spmain to implement.
\end{proof}
\md*
\begin{proof}
    This follows from the distributed implementation of \scaledown (\Cref{cor:scaledown-dist}), which is the non-trivial part of \spmain to implement.
\end{proof}

\mq*
\begin{proof}
    Notice that in Algorithm~\ref{alg:SPMain}, $B$ can be found in $\sqrt{mn}$ or $\sqrt{n^{1.5}}$ queries using~\Cref{lem:quantummin}. Also notice that every query to $\bar{G}$ can be implemented by constant queries to $G$. The same holds for $G^*$ as well. Other non-trivial parts of \spmain follow from the quantum implementation of \scaledown (\Cref{cor:quantumscaledown}), and one call to $\Oracle$.
\end{proof}

\bibliographystyle{plainurl}
\bibliography{local}

% \appendix
% \begin{center}

% \nolinenumbers

% \includepdfpages{Parallel_Negative_Shortest_Path.pdf}
% \includepdf[pages=-]{Parallel_Negative_Shortest_Path.pdf}
% \end{center}

\end{document}